\title{Statistical inference for L\'{e}vy-driven graph supOU processes: From  
short- to long-memory in high-dimensional time series}
\author{Shreya Mehta \\
Department of Mathematics, Imperial College London,\\ 180 Queen's Gate, London, SW7 2AZ, UK\\
 s.mehta20@imperial.ac.uk
\and Almut E.~D.~Veraart\\
Department of Mathematics, Imperial College London,\\ 180 Queen's Gate, London, SW7 2AZ, UK\\
a.veraart@imperial.ac.uk
}
\date{\today}
\begin{document}

\newtheorem{theorem}{Theorem}[section]
\newtheorem{definition}[theorem]{Definition}
\newtheorem{corollary}[theorem]{Corollary}
\newtheorem{lemma}[theorem]{Lemma}
\newtheorem{proposition}[theorem]{Proposition}
\newtheorem{example}[theorem]{Example}
\newtheorem{remark}[theorem]{Remark}
\newtheorem{assumption}{Assumption}
\newtheorem*{theorem*}{Theorem}

\maketitle
\begin{abstract}
This article introduces L\'{e}vy-driven graph supOU processes,  a parsimonious parametrisation for high-dimensional time series in which dependence between components is governed by a graph structure.  Specifically, the model bridges short- and long-range dependence within a single parametric family while accommodating a wide range of marginal distributions. 
We further develop a generalised method of moments estimator, establish its consistency and asymptotic normality, and assess its finite-sample performance through a simulation study. 
   Finally, we illustrate the practical relevance of our model and estimation method in an empirical study of wind capacity factors in a European electricity network context.
\end{abstract}
\emph{Keywords:} Graph supOU process, L\'{e}vy basis, long memory, generalised method of moments, infinite divisibility.\\
\emph{MSC classes:}
62M10, 
62F99, 
60G10, 
62P99, 
60F05, 
60G57,  
60E07. 

\section{Introduction}
There has been growing interest in modelling high-dimensional time series via network time series models or, more generally, network stochastic processes. Such models describe (part of) the dependence structure of the multivariate time series via a graph or network, often imposing sparsity conditions that yield more parsimonious and tractable models than traditional multivariate approaches. Within this framework, \cite{Zhu2017} introduced the network vector autoregression (NAR), while \cite{KLNN2016} and \cite{KLNN2020} proposed the generalised network autoregressive (GNAR) process. Both model the dependence of each node's time series on its own past values and those of its neighbours (and possibly higher-order neighbours), and are shown to outperform vector autoregressive (VAR) models in applications, where the number of parameters grows rapidly with dimension. Complementary to these node-based approaches, edge-based models, including the GNAR-edge model, have been developed by \cite{JJQ} and \cite{MCMR}, while \cite{ZHU2019} proposed a network quantile autoregression framework. Together, these works reflect the breadth of modelling strategies that exploit network structure to capture complex dependence in high-dimensional time series.

Network stochastic processes were recently proposed in a continuous-time setting to handle possibly irregularly spaced and high-frequency data. An important class of stochastic processes in this context are the 
L\'{e}vy-driven Ornstein-Uhlenbeck (OU) processes, see \cite{BNS2001,Bar01}, which have numerous applications in economics, finance, volatility modelling, neuroscience, biology, ecology, epidemiology, physics, turbulence, meteorology, and social science. 
\cite{CV22, CV22b} proposed the L\'{e}vy-driven graph Ornstein-Uhlenbeck (graph OU) process, combining the flexibility of continuous-time models with the sparsity of graphical models. These processes exhibit an exponentially decaying autocorrelation structure.
 Higher-order graph continuous-time autoregressive (graph CAR(p)) models  were proposed in \cite{LPV2024} together with suitable statistical inference tools.

We note that the question of parameter estimation for  network processes is also broadly  related to recent work on drift estimation for high-dimensional OU processes and general diffusions, see 
\cite{CMP2020, CMP2024}.

A powerful framework for more flexible serial dependence structures, including the potential for long-range dependence, was first introduced by  \cite{BNS2001, Bar01}, in the form of superpositions of OU (supOU) processes. Intuitively, supOU processes can be understood as (possibly infinite) sums of independent OU processes, or equivalently as processes obtained by randomising the memory parameter of the OU process. Multivariate extensions of supOU processes were later studied by \cite{Bar11}, whose work forms the foundation for the class of models introduced in this article.

The main contributions of this article are as follows. We extend the graph OU model of \cite{CV22} to propose graph supOU processes, thereby accommodating short- and long-memory behaviour as well as more flexible serial dependence structures.
We then provide concrete parametric specifications for the class of graph supOU models and discuss their probabilistic properties. 
 We develop a GMM estimation methodology for graph supOU processes, establishing consistency in full generality and asymptotic normality in the short-memory setting, building on the weak-dependence theory of \cite{CS}.
The finite-sample performance of the estimator is confirmed through a simulation study.  Finally, we apply the model to wind capacity factors in a European electricity network,  where the data exhibit memory behaviour beyond the reach of classical OU processes.

To support reproducibility, all \texttt{R} code for the simulation study and empirical application is publicly available on GitHub\footnote{\url{https://github.com/almutveraart/grapsupOU-simulation-estimation-application}}  and archived on Zenodo\footnote{\url{https://doi.org/10.5281/zenodo.14857667}}.

This article is organised as follows. 
Section \ref{sec:GrsupOU} reviews multivariate supOU processes and defines graph supOU processes as a special case, establishing conditions for their existence and proposing parametric specifications suited to applications. 
Section \ref{sec:GMMestimator}   presents the estimation procedure for graph supOU processes. For the concrete parametric examples introduced earlier, we develop a two-step estimation procedure based on suitable eigenvalues of the scaled empirical autocovariance matrix. We note that this estimation procedure does not involve any optimisation in high-dimensional settings and is, hence, fast to perform even if the network  is large. For the general graph supOU case, we also describe a traditional GMM approach. 
We assess the finite sample performance of the new methodology in 
Section \ref{sec:sim}, and apply it to modelling wind capacity factors in an electricity network in Portugal in 
Section \ref{sec:emp}. 
Section \ref{sec:conclusion} concludes the article and outlines some potential future work.  Appendix \ref{app:GMM} presents the technical details of the GMM framework.

\section{Graph supOU processes}\label{sec:GrsupOU}
\subsection{Notation} Throughout the article, we consider a filtered probability space $(\Omega, \mathcal{F}, \mathbb{P})$ endowed with a filtration $(\mathcal{F}_t, t \in \mathbb{R})$. 
We define various sets and operations related to matrices as follows. Let $M_{d,k}(\mathbb{R})$ represent the set of real $d\times k$ matrices. When $k=d$, we denote this set as $M_{d}(\mathbb{R})$. The  group of invertible $d\times d$-matrices is denoted by $GL_d$.  The linear subspace of $d\times d$ symmetric matrices is denoted by $\mathbb{S}_d$, the closed positive cone of symmetric matrices with non-negative real parts of their eigenvalues is denoted as $\mathbb{S}_d^+$, and the open positive definite cone of symmetric matrices with strictly positive real parts of eigenvalues is denoted as $\mathbb{S}_d^{++}$. The identity matrix of size $d\times d$ is represented by $\textbf{I}_{d\times d}$. The elements of a matrix $\mathbf{A}\in M_{d,k}(\mathbb{R})$ are denoted by $A_{ij}$ and the transposed matrix is given by $\mathbf{A}^\top$. Moreover, the spectrum of a matrix consisting of all eigenvalues is denoted by $\sigma(\cdot)$.  Also, we define
$$M_d^-:=\{\mathbf{X}\in M_d(\mathbb{R}):\sigma(\mathbf{X})\subset (-\infty,0)+i\mathbb{R}\},$$
as the set of square $d\times d$ matrices with negative real parts of their eigenvalues and $\mathcal{B}_b(M_d^-\times\mathbb{R})$ to be the collection of bounded Borel sets of $M_d^-\times\mathbb{R}$. The Kronecker (tensor) product of two matrices $\mathbf{A}\in M_{d,n}(\mathbb{R})$ and $\mathbf{B}$ is denoted as $\mathbf{A}\otimes \mathbf{B}$. The vectorisation transformation, which stacks the columns of a $d\times d$ matrix into a vector in $\mathbb{R}^{d^2}$, is represented as $\text{vec}$. Additionally, the half vectorisation, which transforms the upper or lower triangular elements of a matrix into a vector, is denoted as 
$\operatorname{vech}$.
The norm of vectors or matrices is denoted by $\|\cdot\|$. The chosen norm does not affect the results, as all norms are equivalent. However, it can be interpreted as either the Euclidean norm or the induced operator norm. The operator $\mathbf{1}_T$ stands for the indicator function of a set $T$. 
The Borel $\sigma$-algebra is denoted as $\mathcal{B}$ and $\lambda$ denotes the Lebesgue measure.

\subsection{Lévy bases and multivariate supOU processes}
In the following, we will construct graph supOU processes, where the driving noise is given by a Lévy basis, i.e.~an infinitely divisible independently scattered random variable defined as follows.
\begin{definition} An $\mathbb{R}^d$-valued Lévy basis on $M_d^-\times\mathbb{R}$ is defined as a collection $\Lambda=\{\Lambda(B):B\in\mathcal{B}_b(M_d^-\times\mathbb{R})\}$ of random variables taking values in $\mathbb{R}^d$ satisfying:
1)  the probability distribution of $\Lambda(B)$ is infinitely divisible for every $B\in\mathcal{B}_b(M_d^-\times\mathbb{R})$,
    2) for any natural number $l$ and pairwise disjoint sets $B_1,\dots,B_l\in\mathcal{B}_b(M_d^-\times\mathbb{R})$, the random variables $\Lambda(B_1),\dots,\Lambda(B_l)$ are independent, and
    3) for any pairwise disjoint sets $B_i\in\mathcal{B}_b(M_d^-\times\mathbb{R})$ for $i\in\mathbb{N}$, where $\cup_{l\in\mathbb{N}} B_l\in\mathcal{B}_b(M_d^-\times\mathbb{R})$, the series $\sum_{l=1}^\infty \Lambda(B_l)$ converges almost surely, and $\Lambda(\cup_{l\in\mathbb{N}} B_l)=\sum_{l=1}^\infty \Lambda(B_l)$ almost surely.
\end{definition}
In the realm of supOU processes, we focus on Lévy bases satisfying 
the Lévy Khintchine representation:
$
\text{E}(\exp(iu^\top\Lambda(B)))=\exp(\phi(u)\Pi(B))$,
for all $u\in\mathbb{R}^d$ and $B\in\mathcal{B}_b(M_d^-(\mathbb{R})\times\mathbb{R})$, where $\Pi=\pi\times\lambda$ is the product measure of a (probability) measure $\pi$ on $M_d^-(\mathbb{R})$ and the Lebesgue measure $\lambda$ on $\mathbb{R}$.
The function 
$\phi(u)=iu^\top\gamma-\frac{1}{2}u^\top\mathbf{\Sigma} u+\int_{\mathbb{R}^d}\left(e^{iu^\top x}-1-iu^\top x \mathbf{1}_{[0,1]}(\|x\|)\right)\nu(dx)$
represents the cumulant transform of an infinitely divisible distribution on $\mathbb{R}^d$ with Lévy Khintchine triplet $(\gamma,\mathbf{\Sigma},\nu)$, where $\gamma\in\mathbb{R}^d$, $\mathbf{\Sigma}\in\mathbb{S}_d^+$, and $\nu$ is a Lévy measure. As described in \cite{Bar11}, the distribution of the Lévy bases is fully determined by the generating quadruple $(\gamma,\mathbf{\Sigma},\nu,\pi)$.
We recall that a $d$-dimensional supOU
 is given by $X=(X_t)_{t\in\mathbb{R}}$, where
\begin{equation}\label{eq:supOU}
X_t=\int_{M_d^-}\int_{-\infty}^t e^{\mathbf{Q}(t-s)}\Lambda(d\mathbf{Q},ds), \quad t \in \mathbb{R}.
\end{equation}
We review the conditions that ensure that such a process is well-defined next.
\begin{theorem}{(Theorem 3.1, \cite{Bar11})}\label{th:supOU}
Let $\Lambda$ be an $\mathbb{R}^d$-valued Lévy basis on $M_d^-\times\mathbb{R}$ with generating quadruple $(\gamma,\Sigma,\nu,\pi)$ satisfying 
$\int_{\|x\|>1} \ln(\|x\|)\nu(dx)<\infty$
and assume there exist measurable functions $\rho:M_d^-\rightarrow \mathbb{R}^+\setminus\{0\}$ and $\kappa:M_d^-\rightarrow [1,\infty)$ such that
$\|e^{\mathbf{Q}s}\|\leq \kappa(\mathbf{Q})e^{-\rho(\mathbf{Q})s}$ for all $s\in\mathbb{R}^+$, $\pi$-almost surely,
and
    $\int_{M_d^-}\frac{\kappa(\mathbf{Q})^2}{\rho(\mathbf{Q})}\pi(d\mathbf{Q})<\infty$.
Then the supOU process $(X_t)_{t\in\mathbb{R}}$ defined in \eqref{eq:supOU} 
and is well-defined for all $t\in\mathbb{R}$.
\end{theorem}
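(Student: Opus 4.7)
The plan is to invoke the general Rajput--Rosinski integration theory for deterministic matrix-valued functions against infinitely divisible random measures. Define the integrand $f_t(\mathbf{Q},s):=e^{\mathbf{Q}(t-s)}\mathbf{1}_{(-\infty,t]}(s)$ on $M_d^-\times\mathbb{R}$. Existence of the stochastic integral in \eqref{eq:supOU} is equivalent to three integrability conditions on $f_t$ with respect to the generating quadruple $(\gamma,\mathbf{\Sigma},\nu,\pi\times\lambda)$: one involving the drift $\gamma$, one involving the Gaussian covariance $\mathbf{\Sigma}$, and one involving the L\'evy measure $\nu$. The entire argument then reduces to verifying these three conditions, after which Rajput--Rosinski produces an almost surely well-defined random variable $X_t$ for every fixed $t$.

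The Gaussian and drift conditions are handled directly by the exponential bound. A Fubini--Tonelli argument and the hypothesis $\|e^{\mathbf{Q}s}\|\leq \kappa(\mathbf{Q})e^{-\rho(\mathbf{Q})s}$ yield
\begin{equation*}
\int_{M_d^-}\int_0^\infty \|e^{\mathbf{Q}s}\|^2\,ds\,\pi(d\mathbf{Q})\;\leq\;\frac{1}{2}\int_{M_d^-}\frac{\kappa(\mathbf{Q})^2}{\rho(\mathbf{Q})}\,\pi(d\mathbf{Q})\;<\;\infty,
\end{equation*}
which controls the $\mathbf{\Sigma}$-term. The same bound, combined with the corresponding first-moment estimate $\int_0^\infty \|e^{\mathbf{Q}s}\|\,ds\leq \kappa(\mathbf{Q})/\rho(\mathbf{Q})$ and Cauchy--Schwarz (or a truncation argument involving $\mathbf{1}_{[0,1]}$ from the L\'evy--Khintchine formula), controls the $\gamma$-term under the same finite integral assumption on $\kappa^2/\rho$.

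The main obstacle is the L\'evy-measure condition $\int\!\int\!\int \min(1,\|f_t(\mathbf{Q},s)x\|^2)\,\nu(dx)\,ds\,\pi(d\mathbf{Q})<\infty$. I would split it into small jumps $\{\|x\|\leq 1\}$, bounded via $\|f_t x\|^2\leq \kappa(\mathbf{Q})^2e^{-2\rho(\mathbf{Q})(t-s)}\|x\|^2$ and again reduced to $\int \kappa^2/\rho\,d\pi$ using the usual small-jump integrability $\int_{\|x\|\leq 1}\|x\|^2\nu(dx)<\infty$, and large jumps $\{\|x\|>1\}$, where the bound $\min(1,\|f_t x\|^2)\leq \mathbf{1}_{\{\kappa(\mathbf{Q})e^{-\rho(\mathbf{Q})(t-s)}\|x\|>1\}}+\kappa(\mathbf{Q})^2 e^{-2\rho(\mathbf{Q})(t-s)}\|x\|^2\mathbf{1}_{\{\kappa(\mathbf{Q})e^{-\rho(\mathbf{Q})(t-s)}\|x\|\leq 1\}}$ is useful. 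Integrating the indicator in $s$ produces a term proportional to $\rho(\mathbf{Q})^{-1}\bigl(\ln\|x\|+\ln\kappa(\mathbf{Q})\bigr)_+$, so the $\nu$-integral contributes $\bigl(\int \kappa^2/\rho\,d\pi\bigr)\int_{\|x\|>1}\ln\|x\|\,\nu(dx)$ plus lower-order pieces, all finite by the two hypotheses. The delicate step will be bookkeeping the constants $\kappa,\rho$ through this calculation; once the three conditions are verified, Rajput--Rosinski immediately gives that $X_t$ in \eqref{eq:supOU} is well-defined almost surely for every $t\in\mathbb{R}$.
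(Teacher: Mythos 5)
This statement is quoted verbatim from \cite{Bar11} (their Theorem 3.1) and the paper offers no proof of it at all, so there is no in-paper argument to compare against. Your proposal reconstructs, essentially correctly, the route taken in the cited source: existence of the mixed moving average integral is reduced to the three Rajput--Rosinski integrability conditions for the kernel $e^{\mathbf{Q}(t-s)}\mathbf{1}_{(-\infty,t]}(s)$ with respect to the quadruple $(\gamma,\mathbf{\Sigma},\nu,\pi\times\lambda)$, the Gaussian and small-jump terms are controlled by $\int_0^\infty\|e^{\mathbf{Q}s}\|^2\,ds\leq\kappa(\mathbf{Q})^2/(2\rho(\mathbf{Q}))$, and the large-jump term produces the $\rho(\mathbf{Q})^{-1}(\ln\|x\|+\ln\kappa(\mathbf{Q}))_+$ contribution that is exactly where the hypotheses $\int_{\|x\|>1}\ln(\|x\|)\,\nu(dx)<\infty$ and $\int\kappa^2/\rho\,d\pi<\infty$ (note $\ln\kappa\leq\kappa^2$ and $1\leq\kappa^2$ absorb the lower-order pieces) are consumed. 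The one place your sketch is genuinely thin is the first Rajput--Rosinski condition: it is not just the $\gamma$-term but also the compensator mismatch $\int_{\mathbb{R}^d}e^{\mathbf{Q}s}x\bigl(\mathbf{1}_{[0,1]}(\|e^{\mathbf{Q}s}x\|)-\mathbf{1}_{[0,1]}(\|x\|)\bigr)\nu(dx)$, whose control again requires splitting at $\|x\|=1$ and is where a second appearance of the log-moment condition is needed; your parenthetical ``truncation argument'' gestures at this but does not carry it out. As a blueprint the proposal is sound and matches the standard proof; as written it is an outline rather than a complete verification.
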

As a special case of a mixed moving average process, the  supOU process inherits the  strict stationarity and infinite divisibility, cf.~\cite[Definition 2.2]{CS}.

\subsection{Definition and properties of graph supOU processes}
The graph supOU process is defined as a special case of the multivariate supOU process and extends the ideas of \cite{CV22, CV22b,LPV2024} developed for graph OU  and graph CAR(p) processes, respectively.
As in these earlier articles, we interpret the components of the process $X$  as representing nodes/vertices of a graph, connected to each other by a set of edges. These edges are represented by the adjacency matrix $\textbf{A}=(a_{ij})$: We set $a_{ij}=1$ if there is an (undirected) edge from $i$ and $j$ and $0$ otherwise. Also, we set $a_{ii}=0$, for all $i\in\{1, \ldots,d\}$. 
For all $i\in\{1, \ldots,d\}$, we denote by $n_i:=\max\{1,\sum_{j=1}^da_{ji}\}$ the in-degree of node $i$, i.e.~the number of edges leading to node $i$.
Let $\textbf{D}:=\mathrm{diag}(n_1^{-1},\ldots, n_d^{-1})$.
In the following, we will work with the 
column-normalised adjacency matrix defined as  $\bar{\textbf{A}}:=\textbf{A}\textbf{D}=(\frac{a_{ij}}{n_j}) \in M_d(\mathbb{R})$. 
 Other normalisations, including symmetric ones,  are possible. However,  column normalisation is the natural choice when the focus is on the incoming relationships from neighbouring nodes, as it scales the adjacency matrix according to the number of incoming edges. By contrast, a row normalisation would scale according to the number of outgoing edges.
In the graph supOU process, we assume that  the
  drift matrix $\mathbf{Q}$ is parametrised  as 
\begin{equation}\label{eq:Q}
   \mathbf{Q}=\mathbf{Q}( \theta)=-\left(\theta_2\mathbf{I}_{d\times d}+\theta_1\bar{\textbf{A}}^{\top}\right),
\end{equation}
where $\theta
=(\theta_1,\theta_2)^T\in\mathbb{R}^2$ is the parameter vector of interest and $\bar{\textbf{A}}$ represents the 
normalised adjacency matrix with 
\begin{align*}
\bar{\textbf{A}}^{\top}= 
\begin{pmatrix}
 \frac{a_{11}}{n_1} &   \frac{a_{21}}{n_1} & \cdots &   \frac{a_{d1}}{n_1} \\
  \frac{a_{12}}{n_2} &  \frac{a_{22} }{n_2} & \cdots &  \frac{a_{d2} }{n_2} \\
\vdots & \vdots & \ddots & \vdots \\
 \frac{a_{1d}}{n_d} &  \frac{a_{2d} }{n_d} & \cdots &  \frac{a_{dd} }{n_d} \\
\end{pmatrix}, 
\quad \mathrm{i.e.}\, (\bar{\textbf{A}}^{\top})_{ij}=\frac{a_{ji}}{n_i}, \; i, j \in \{1, \ldots, d\}.
\end{align*}
Throughout the article, we assume that the following assumption holds. 
\begin{assumption}\label{as:ev}
Assume that 
$\theta_2>|\theta_1|$. 
\end{assumption}
\begin{proposition}\label{prop:eigenv}
    If Assumption \ref{as:ev} holds,  then $\mathbf{Q}(\mathbf{\theta})\in M_d^-$.
\end{proposition}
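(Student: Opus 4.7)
The plan is to compute $\sigma(\mathbf{Q}(\theta))$ explicitly from $\sigma(\bar{\mathbf{A}}^\top)$, and then use Assumption~\ref{as:ev} to control the real parts. Since $\mathbf{Q}(\theta) = -\theta_2 \mathbf{I}_{d\times d} - \theta_1 \bar{\mathbf{A}}^\top$ is an affine transformation of $\bar{\mathbf{A}}^\top$, its eigenvalues are exactly the numbers $\mu_\lambda := -\theta_2 - \theta_1 \lambda$ where $\lambda$ ranges over $\sigma(\bar{\mathbf{A}}^\top)$. So it suffices to show that $\mathrm{Re}(\mu_\lambda) < 0$ for every such $\lambda$.

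The key lemma I would establish is that every $\lambda \in \sigma(\bar{\mathbf{A}}^\top)$ satisfies $|\lambda| \leq 1$. This follows immediately from Gershgorin's disk theorem applied to $\bar{\mathbf{A}}^\top$: the $(i,i)$-entry of $\bar{\mathbf{A}}^\top$ is $a_{ii}/n_i = 0$ because $a_{ii}=0$, and the off-diagonal absolute row sum of row $i$ is $\sum_{j\neq i} a_{ji}/n_i$, which equals $\sum_{j=1}^d a_{ji}/n_i$. By the definition $n_i = \max\{1,\sum_{j=1}^d a_{ji}\}$, this row sum is either $1$ (when node $i$ has at least one incoming edge) or $0$ (when it has none). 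In every case the Gershgorin disk for row $i$ is contained in the closed unit disk, so $|\lambda|\leq 1$ and, in particular, $|\mathrm{Re}(\lambda)| \leq 1$.

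Combining the two steps, for any $\lambda \in \sigma(\bar{\mathbf{A}}^\top)$,
\begin{equation*}
  \mathrm{Re}(\mu_\lambda) \;=\; -\theta_2 - \theta_1 \mathrm{Re}(\lambda) \;\leq\; -\theta_2 + |\theta_1|\,|\mathrm{Re}(\lambda)| \;\leq\; -\theta_2 + |\theta_1| \;<\; 0,
\end{equation*}
where the final strict inequality is precisely Assumption~\ref{as:ev}. Hence $\sigma(\mathbf{Q}(\theta))\subset (-\infty,0)+i\mathbb{R}$, which is the definition of $\mathbf{Q}(\theta)\in M_d^-$.

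There is no real obstacle: the argument is essentially a one-line spectral calculation plus Gershgorin. The only minor point requiring care is the boundary case where a node has no incoming edges, which is handled cleanly by the $\max\{1,\cdot\}$ in the definition of $n_i$ and only makes the relevant row sum smaller, so the bound $|\lambda|\leq 1$ is preserved.
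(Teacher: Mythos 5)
Your proof is correct and takes essentially the same route as the paper: the paper applies Gershgorin's circle theorem directly to $\mathbf{Q}(\theta)$ (discs centred at $-\theta_2$ with radius at most $|\theta_1|$), whereas you apply it to $\bar{\mathbf{A}}^\top$ and then push the spectrum through the affine map $\lambda \mapsto -\theta_2-\theta_1\lambda$, which is the same computation in a slightly different order. Both arguments handle the isolated-node case via the $\max\{1,\cdot\}$ in $n_i$ and conclude from $\theta_2>|\theta_1|$.
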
 
\begin{proof}[Proof of Proposition \ref{prop:eigenv}]  We show that all eigenvalues of \( \mathbf{Q}(\theta) \) are strictly negative. By Gershgorin's circle theorem \cite{GSC}, any eigenvalue of \( \mathbf{Q}(\theta) \) lies within at least one Gershgorin disc. For the \(i\)-th row of \( \mathbf{Q}(\theta) \), the centre of the Gershgorin disc is the diagonal entry \( Q_{ii} \), which is \(- \theta_2 \), and the radius, $R_i$, is the sum of the absolute values of the off-diagonal entries in the \(i\)-th row. I.e.
$R_i=\sum_{j \neq i} |Q_{ij}(\theta)| = |\theta_1| \sum_{j \neq i} |(\bar{A}^{\top})_{ij}| 
= |\theta_1| \sum_{j \neq i} \frac{a_{ji}}{n_i}
= |\theta_1|$.
Therefore, each Gershgorin disc for \( Q(\theta) \) has a center at \( -\theta_2 \) and a radius smaller or equal to \( |\theta_1| \). Since \( \theta_2 > |\theta_1| \), the disc centered at \( -\theta_2 \) with radius \( |\theta_1| \) lies entirely in the left half-plane of the complex plane and does not intersect the imaginary axis. 
Consequently, all eigenvalues of \( \mathbf{Q}(\theta) \)  must have  strictly negative real parts, i.e.~$\mathbf{Q}(\mathbf{\theta})\in M_d^-$. 
\end{proof}
Let us denote by
\begin{align*}
    \Theta:=\{\theta=(\theta_1,\theta_2)^{\top}\in \mathbb{R}^2: \mathbf{Q}(\mathbf{\theta})\in M_d^-\},  
\end{align*}
the parameter space for which the drift matrix has eigenvalues with strictly negative real parts.
We have now introduced all prerequisites for defining graph supOU processes. 
\begin{definition}\label{def:grsupOU}
A graph supOU process $X=(X_t)_{t\in \mathbb{R}}$ is defined as a multivariate supOU process satisfying the conditions given in Theorem \ref{th:supOU} and drift matrix defined as in \eqref{eq:Q} satisfying $\mathbf{Q}(\mathbf{\theta})\in M_d^-$.
\end{definition}
 We note that, as a special 
case of a mixed moving average process, the graph supOU process inherits strict 
stationarity, infinite divisibility, ergodicity and the mixing property, see 
\cite{CS}.

We are interested in randomising the parameter vector $\theta$ to obtain more flexible models. 

\begin{proposition}\label{prop:mom}
Let $X=(X_t)_{t\in \mathbb{R}}$ denote a graph supOU process as defined in Definition \ref{def:grsupOU}. 
Suppose  that $\int_{\mathbb{R}^d}\|x\|^2\nu(dx)<\infty$, and  set 
$$\mu_L:=\gamma+\int_{\|x\|>1}x\nu(dx), \quad
\boldsymbol{\sigma}^2_{L}:=\Sigma+\int_{\mathbb{R}^d}xx^\top\nu(dx).$$ The first and second moments of the graph supOU process are given by  

\begin{multline}\label{eq:2ndmom}
\mathrm{E}(X_0)
=-\int_{\Theta}\mathbf{Q(\theta)}^{-1} \pi(d\theta) \mu_L, \quad %
\mathrm{var}(X_0) 
=-\int_{\Theta}(\mathcal{A}(\mathbf{Q}(\theta)))^{-1} (\boldsymbol{\sigma}^2_{L})\pi(d\theta) ,
\\
  \mathrm{cov}(X_h,X_0)
  =-\int_{\Theta}e^{\mathbf{Q}(\theta)h}(\mathcal{A}(\mathbf{Q}(\theta)))^{-1}(\boldsymbol{\sigma}^2_{L})\pi(d\theta),
  \end{multline}
 where $\mathcal{A}(\mathbf{Q}(\theta)):M_d(\mathbb{R})\rightarrow M_d(\mathbb{R}), \mathbf{X} \rightarrow \mathbf{Q}(\theta)\mathbf{X}+\mathbf{X}\mathbf{Q}(\theta)^\top$,  and $h\geq 0$.
\end{proposition}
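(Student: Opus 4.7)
The plan is to specialize the general second-order moment formulas for multivariate supOU processes (as in \cite{Bar11}) to the graph supOU parametrization, in which the mixing measure $\pi$ is supported on $\Theta$ and enters the process only through $\theta\mapsto \mathbf{Q}(\theta)$. The computation then reduces to (i) identifying the mean and second moment intensities of $\Lambda$, (ii) evaluating the resulting time integrals using that $\sigma(\mathbf{Q}(\theta))\subset(-\infty,0)+i\mathbb{R}$, and (iii) recognising a Lyapunov equation in the covariance calculation.

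First, I would rewrite the process, with the reparametrized Lévy basis, as
\begin{align*}
X_t=\int_\Theta\int_{-\infty}^t e^{\mathbf{Q}(\theta)(t-s)}\,\Lambda(d\theta,ds).
\end{align*}
Under $\int_{\mathbb{R}^d}\|x\|^2\nu(dx)<\infty$, the Lévy--Khintchine representation yields that $\Lambda$ has mean intensity $\mu_L\cdot(\pi\otimes\lambda)$ and second-moment intensity $\boldsymbol{\sigma}^2_L\cdot(\pi\otimes\lambda)$. Interchanging expectation and the stochastic integral by a Fubini-type argument gives
\begin{align*}
\mathrm{E}(X_0)=\int_\Theta\left(\int_0^\infty e^{\mathbf{Q}(\theta)u}\,du\right)\pi(d\theta)\,\mu_L=-\int_\Theta\mathbf{Q}(\theta)^{-1}\pi(d\theta)\,\mu_L,
\end{align*}
where $\int_0^\infty e^{\mathbf{Q}(\theta)u}du=-\mathbf{Q}(\theta)^{-1}$ because every eigenvalue of $\mathbf{Q}(\theta)$ has strictly negative real part.

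Next, I would exploit the independently scattered property of $\Lambda$ to compute the covariance. For $h\geq 0$, only the overlap region $\{s\leq 0\}$ contributes to $\mathrm{cov}(X_h,X_0)$, and after the substitution $u=-s$ this yields
\begin{align*}
\mathrm{cov}(X_h,X_0)=\int_\Theta e^{\mathbf{Q}(\theta)h}\left(\int_0^\infty e^{\mathbf{Q}(\theta)u}\boldsymbol{\sigma}^2_L\,e^{\mathbf{Q}(\theta)^\top u}\,du\right)\pi(d\theta).
\end{align*}
The inner integral is the unique solution $Y$ of the Lyapunov equation $\mathbf{Q}(\theta)Y+Y\mathbf{Q}(\theta)^\top=-\boldsymbol{\sigma}^2_L$, i.e.\ $Y=-\mathcal{A}(\mathbf{Q}(\theta))^{-1}(\boldsymbol{\sigma}^2_L)$; the operator $\mathcal{A}(\mathbf{Q}(\theta))$ is invertible because no pair of eigenvalues of $\mathbf{Q}(\theta)$ sums to zero when all real parts are strictly negative. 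Substituting this in gives the claimed covariance formula, and setting $h=0$ yields the variance.

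The main technical obstacle is rigorously justifying the interchanges between the $\pi$-integral, the Lebesgue time integral, and the stochastic Lévy-basis integral. This requires uniform-in-$\theta$ integrability, which is precisely the role of the exponential bounds $\|e^{\mathbf{Q}(\theta)s}\|\leq \kappa(\theta)e^{-\rho(\theta)s}$ combined with $\int_{M_d^-}\kappa(\mathbf{Q})^2/\rho(\mathbf{Q})\,\pi(d\mathbf{Q})<\infty$ imposed in Theorem \ref{th:supOU}; together with $\int\|x\|^2\nu(dx)<\infty$ they deliver absolute convergence in $L^2(\Omega)$ of all integrals involved. Once this is in place, the remainder is a mechanical specialisation of the classical supOU formulas, since the graph structure enters only through the explicit parametric form of $\mathbf{Q}(\theta)$ in \eqref{eq:Q}.
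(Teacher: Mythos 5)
Your proposal is correct and follows essentially the same route as the paper: the paper simply cites the general moment formulas for multivariate supOU processes from Barndorff-Nielsen and Stelzer (2011, Theorem 3.11) and then, exactly as you do, identifies the inner integral $\int_0^\infty e^{\mathbf{Q}(\theta)s}\boldsymbol{\sigma}^2_L e^{\mathbf{Q}(\theta)^\top s}\,ds$ as the solution of the Lyapunov equation $\mathbf{Q}(\theta)\mathbf{P}+\mathbf{P}\mathbf{Q}(\theta)^\top=-\boldsymbol{\sigma}^2_L$, i.e.\ $-\mathcal{A}(\mathbf{Q}(\theta))^{-1}(\boldsymbol{\sigma}^2_L)$. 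The only difference is that you sketch a first-principles derivation (mean and second-moment intensities of the L\'evy basis, independent scattering, Fubini) of the formulas that the paper imports by citation, which is a harmless elaboration rather than a different method.
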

\begin{proof}[Proof of Proposition \ref{prop:mom}]
The results are an immediate consequence of  
\cite[Theorem 3.11]{Bar11}. 
To aid the interpretation of the notation used in the proposition, we note the following. 
When computing the variance (and similarly for the covariance), we get the expression
$\mathrm{var}(X_0)=\int_{M_d^-}\int_0^{\infty}e^{\mathbf{Q}(\theta)s}\boldsymbol{\sigma}^2_{L}e^{\mathbf{Q}(\theta)^{\top}s}ds\pi(d\mathbf{Q}(\theta))$.
Let us focus on the inner integral:
\begin{align}\label{eq:I}
\mathbf{P}:=\int_0^{\infty}e^{\mathbf{Q}(\theta)s}\boldsymbol{\sigma}^2_{L}e^{\mathbf{Q}(\theta)^{\top} s}ds.
\end{align}
It is well-known that $\mathbf{P}$ is a solution to the Lyapunov equation
$\mathbf{Q}(\theta)\mathbf{P}+\mathbf{P}\mathbf{Q}(\theta)^{\top}=-\boldsymbol{\sigma}^2_{L}$.
We define the map $\mathcal{A}(\mathbf{Q}(\theta)):M_d(\mathbb{R})\rightarrow M_d(\mathbb{R}), \mathbf{P} \rightarrow \mathbf{Q}(\theta)\mathbf{P}+\mathbf{P}\mathbf{Q}(\theta)^\top$, i.e.
$
\mathcal{A}(\mathbf{Q}(\theta))(\mathbf{P})=\mathbf{Q}(\theta)\mathbf{P}+\mathbf{P}\mathbf{Q}(\theta)^\top$.
The integral $\mathbf{P}$ defined in \eqref{eq:I} can be obtained using the inverse map $\mathcal{A}(\mathbf{Q}(\theta))^{-1}:M_d(\mathbb{R})\rightarrow M_d(\mathbb{R})$. More precisely, 
we solve
$
\mathcal{A}(\mathbf{Q}(\theta))(\mathbf{P})=\mathbf{Q}(\theta)\mathbf{P}+\mathbf{P}\mathbf{Q}(\theta)^\top=\mathbf{Y},$
for $\mathbf{P}$ and then later set $\mathbf{Y}=-\boldsymbol{\sigma}^2_{L}$.
Using vectorisation, we get 
$\mathrm{vec}(\mathbf{Y})=\mathrm{vec}(\mathbf{Q}(\theta)\mathbf{P}+\mathbf{P}\mathbf{Q}(\theta)^\top)
=(\mathbf{I}_{d\times d}\otimes \mathbf{Q}(\theta)+\mathbf{Q}(\theta)^\top\otimes \mathbf{I}_{d\times d})\mathrm{vec}(\mathbf{P})$. 
We then get
$\mathrm{vec}(\mathbf{P})
=(\mathbf{I}_{d\times d}\otimes \mathbf{Q}(\theta)+\mathbf{Q}(\theta)^\top\otimes \mathbf{I}_{d\times d})^{-1}\mathrm{vec}(\mathbf{Y})$. 
Hence,
$\mathcal{A}(\mathbf{Q}(\theta))^{-1}(\mathbf{Y})=\mathrm{vec}^{-1}((\mathbf{I}_{d\times d}\otimes \mathbf{Q}(\theta)+\mathbf{Q}(\theta)^\top\otimes \mathbf{I}_{d\times d})^{-1}\mathrm{vec}(\mathbf{Y}))$.
We note that 
$\mathbf{P}=\mathcal{A}(\mathbf{Q}(\theta))^{-1}(-\boldsymbol{\sigma}^2_{L})=\mathrm{vec}^{-1}((\mathbf{I}_{d\times d}\otimes \mathbf{Q}(\theta)+\mathbf{Q}(\theta)^\top\otimes \mathbf{I}_{d\times d})^{-1}\mathrm{vec}(\mathbf{-\boldsymbol{\sigma}^2_{L}}))
=-\mathcal{A}(\mathbf{Q}(\theta))^{-1}(\boldsymbol{\sigma}^2_{L})$.
Hence, altogether we have 
$\mathrm{var}(X_0)=-\int_{M_d^-}\mathcal{A}(\mathbf{Q}(\theta))^{-1}(\boldsymbol{\sigma}^2_{L})\pi(d\mathbf{Q}(\theta))$.
\end{proof}

\subsection{Parametric models  of graph supOU type -- Bridging between long and short memory}\label{sec:bridge}
In this section, we introduce new parametric models within the class of graph supOU processes. Specifically, we focus on model specifications that are parsimonious, analytically tractable, and capable of capturing both short- and long memory.
In the case of graph supOU processes, there are two parameters of interest: $\theta_1$ representing the network effect, i.e.~the influence of the neighbouring nodes, and $\theta_2$ representing the momentum effect, i.e.~the autoregressive effect for a particular node. 
We will now work with a reparametrisation of the drift matrix $\mathbf{Q( \theta)}$ defined in \eqref{eq:Q}, by setting $c:=\frac{\theta_1}{\theta_2}$ and write
\begin{align*}
    \mathbf{Q( \theta)}&=-\left(\theta_2\mathbf{I}_{d\times d}+\theta_1\bar{\textbf{A}}^{\top}\right)
    =\theta_2\left(-\mathbf{I}_{d\times d}-\frac{\theta_1}{\theta_2}\bar{\textbf{A}}^{\top}\right)
    =\theta_2\left(-\mathbf{I}_{d\times d}-c\bar{\textbf{A}}^{\top}\right)\\
    &=:\mathbf{Q}(c,\theta_2).
\end{align*}
As long as Assumption \ref{as:ev} holds, we have
$\left(-\mathbf{I}_{d\times d}-c\bar{\textbf{A}}^{\top}\right)\in M_d^-$.
Equivalently, assuming that $\theta_2>0$ and $|c|<1$, we have $\mathbf{Q}(c,\theta_2)\in M_d^-$ since $\rho(c\bar{\textbf{A}}^{\top})=|c|\rho(\bar{\textbf{A}})\leq|c|< 1$. We can then write
\begin{equation}\label{eq:Qtheta2}
    \mathbf{Q}(c, \theta_2)
    =\theta_2\mathbf{K}(c), \quad \mathrm{where}\quad \mathbf{K}(c):=\left(-\mathbf{I}_{d\times d}-c\bar{\textbf{A}}^{\top}\right).
\end{equation}
In the following, we will assume that $|c|<1$ is fixed and we will randomise the parameter $\theta_2$, ensuring that we use only probability distributions supported on the positive half line, consistent with $\theta_2>0$. 
In this case, the second-order properties presented in \eqref{eq:2ndmom} simplify to 
\begin{align}\begin{split}\label{eq:2ndmomb}
\mathrm{E}(X_0)&=-\left(\int_{0}^{\infty}\frac{1}{\theta_2}\pi(d\theta_2) \right)\, \mathbf{K}(c)^{-1}  \mu_L,
\\
\mathrm{var}(X_0)&
=-\left(\int_{0}^{\infty}\frac{1}{\theta_2}\pi(d\theta_2)\right) (\mathcal{A}(\mathbf{K}(c)))^{-1} (\boldsymbol{\sigma}^2_{L}),\\
  \mathrm{cov}(X_h,X_0)&
  =-\left(\int_{0}^{\infty}\frac{1}{\theta_2}e^{\theta_2 \mathbf{K}(c) h}\pi(d\theta_2)\right) (\mathcal{A}(\mathbf{K}(c)))^{-1}(\boldsymbol{\sigma}^2_{L}),
  \end{split}
  \end{align}
where $\mathcal{A}(\mathbf{K}(c)):M_d(\mathbb{R})\rightarrow M_d(\mathbb{R}), \mathbf{X} \rightarrow \mathbf{K}(c)\mathbf{X}+\mathbf{X}\mathbf{K}(c)^\top$,  and $h\geq 0$.
\begin{remark}
The formula for the mean contains the matrix
$(-\mathbf{K}(c))^{-1}
    = \left(\mathbf{I}_{d\times d}-(-c)\bar{\textbf{A}}^{\top}\right)^{-1}
    =\sum_{k=0}^{\infty}(-c)^k \left(\bar{\textbf{A}}^{\top}\right)^k
    =\mathbf{I}_{d\times d}-c \bar{\textbf{A}}^{\top}+c^2(\bar{\textbf{A}}^{\top})^2+\cdots$,
where the Neumann series converges since $\rho(c\bar{\mathbf{A}}^\top)<1$. This expansion admits a natural graph-theoretic 
interpretation: the $(i,j)$-th entry of $(\bar{\mathbf{A}}^\top)^k$ aggregates 
the weights of all paths of length $k$ from node $j$ to node $i$, so the mean 
$\mathrm{E}(X_{0,i})$ receives contributions from the L\'{e}vy basis means 
$\mu_{L,j}$ at all nodes $j$, with paths of length $k$ contributing a factor 
of $(-c)^k$. Since $|c|<1$, distant nodes contribute negligibly, and $c$ 
controls the spatial reach of dependence across the network.
\end{remark}

The second order properties further reveal that, for $h\geq 0$, the scaled autocovariance matrix  
\begin{align*}
 \mathrm{cov}(X_h,X_0)   (\mathrm{var}(X_0))^{-1}=
 \left(\int_{0}^{\infty}\frac{1}{\theta_2}e^{\theta_2 \mathbf{K}(c) h}\pi(d\theta_2)\right)\left(\int_{0}^{\infty}\frac{1}{\theta_2}\pi(d\theta_2)\right)^{-1},
\end{align*}
does neither depend on $\mu_L$  nor $\boldsymbol{\sigma}^2_{L}$. This suggests that we can consider a two-step procedure for inference, where the parameters of $\pi$ and the parameter $c$ can be inferred from the empirical counterpart of the scaled autocovariance matrix, and the  remaining parameters can then be estimated from the sample mean and variance. 
This decoupling is intrinsic to the second-order structure of the model 
and forms the theoretical basis for the two-step procedure developed 
in Section~\ref{sec:GMMestimator}.

\subsubsection{Choosing a sum of exponentials for $\pi(\theta_2)$ }

As mentioned earlier, the idea of supOU processes is connected to a (countable) sum of independent OU processes. Hence, we can consider choosing
\begin{align}\label{eq:Exp}
    \pi(d\theta_2)=\sum_{i=1}^{\infty}w_i\delta_{\lambda_i}(d\theta_2),
    \, \mathrm{for}\, \lambda_i>0, w_i\geq0, \sum_{i=1}^{\infty}w_i=1,
\end{align}
where $\delta_x(\cdot)$ denotes the Dirac delta function at $x>0$.
In that case, we have
\begin{align*}
\int_{0}^{\infty}\frac{1}{\theta_2}\pi(d\theta_2)
&= \sum_{i=1}^{\infty}\frac{w_i}{\lambda_i},\\
\int_{0}^{\infty}\frac{1}{\theta_2}e^{\theta_2 \mathbf{K}(c) h}\pi(d\theta_2)
&= \sum_{i=1}^{\infty}\frac{w_i}{\lambda_i}e^{ \lambda_i \mathbf{K}(c) h}, \; h \geq 0.
\end{align*}
We note that the expression above implies that the autocovariance function behaves like a sum of weighted exponentials. 
\begin{remark}
Let us consider a special case: 
    Suppose that $\mathbf{K}(c)$ is diagonalisable and let $k_1, \ldots, k_d$ denote the eigenvalues of $\mathbf{K}(c)$ and suppose that $\mathbf{O}\mathbf{K}(c)\mathbf{O}^{-1}=\mathrm{diag}(k_1, \ldots, k_d)$. Then, 
\begin{align*}
    \sum_{i=1}^{\infty}\frac{w_i}{\lambda_i}e^{ \lambda_i \mathbf{K}(c) h}
     =\mathbf{O}
    \sum_{i=1}^{\infty}\frac{w_i}{\lambda_i}
        \mathrm{diag}(e^{ \lambda_ik_1 h}, \ldots, e^{ \lambda_ik_d h})
    \mathbf{O}^{-1}, \; h \geq 0.
\end{align*}
This result shows the structure of the autocovariance function even more clearly than the general expression above. 
\end{remark}

\begin{example}
We note that the graph OU case can be recovered from the above example by setting $w_1=1$, $w_i=0$ for all $i\geq 2$. To simplify the exposition, we also write $\lambda_1=\lambda$. Then 
\begin{align*}
\int_{0}^{\infty}\frac{1}{\theta_2}\pi(d\theta_2)
= \frac{1}{ \lambda},&&
\int_{0}^{\infty}\frac{1}{\theta_2}e^{\theta_2 \mathbf{K}(c) h}\pi(d\theta_2)
= \frac{1}{\lambda}e^{ \lambda \mathbf{K}(c) h}, \; h \geq 0.
\end{align*}
\end{example}

\subsubsection{Choosing a Gamma law for $\pi(\theta_2)$ }
Next, we propose a specification that can capture both short- and long-range dependence. 
Adapting  \cite[Example 3.1]{Bar11}, we assume that 
$\theta_2 \sim \Gamma(\alpha, 1)$, with 
 \begin{equation}{\label{eq:PD}}\pi(d\theta_2)=\frac{1}{\Gamma(\alpha)}\theta_2^{\alpha-1}e^{- \theta_2}\mathbf{1}_{(0,\infty)}(\theta_2)d\theta_2,\quad\; \mathrm{where}\; \alpha>1.\end{equation}
  We note that setting the second parameter in the Gamma distribution to a constant (here chosen as 1) is needed to ensure the model identifiability. 
The moments of the graph supOU process can then be obtained from \eqref{eq:2ndmomb} by noting that, for $h\geq 0$, 
\begin{align*}
\int_{0}^{\infty}\frac{1}{\theta_2}\pi(d\theta_2)
= \frac{1}{\alpha-1},&&
\int_{0}^{\infty}\frac{1}{\theta_2}e^{\theta_2 \mathbf{K}(c) h}\pi(d\theta_2)
= \frac{1}{\alpha-1}(\mathbf{I}_{d\times d}-\mathbf{K}(c)h)^{1-\alpha}. 
\end{align*}
Hence, 
\begin{align}\begin{split}\label{eq:2ndmombgamma}
\mathrm{E}(X_0)&=-\frac{1}{\alpha-1} \mathbf{K}(c)^{-1}  \mu_L,
\\
\mathrm{var}(X_0)&
=-\frac{1}{\alpha-1}(\mathcal{A}(\mathbf{K}(c)))^{-1} (\boldsymbol{\sigma}^2_{L}),\\
  \mathrm{cov}(X_h,X_0)&
  =-\frac{1}{\alpha-1}(\mathbf{I}_{d\times d}-\mathbf{K}(c)h)^{1-\alpha} (\mathcal{A}(\mathbf{K}(c)))^{-1}(\boldsymbol{\sigma}^2_{L}),
  \end{split}
  \end{align}
where $\mathcal{A}(\mathbf{K}(c)):M_d(\mathbb{R})\rightarrow M_d(\mathbb{R}), \mathbf{X} \rightarrow \mathbf{K}(c)\mathbf{X}+\mathbf{X}\mathbf{K}(c)^\top$,  and $h\geq 0$.

We observe that the autocovariance function exhibits polynomial decay since $(\mathbf{I}_{d\times d}-\mathbf{K}(c)h)^{1-\alpha} \sim h^{1-\alpha}$, as $h\to \infty$. Specifically, for values of $\alpha\in(1,2)$, we have long memory and for $\alpha\geq 2$ short memory. 

\begin{remark}
We note that the case of the Gamma law can easily be generalised to settings where $\pi(d \theta_2)=\theta_2^{\alpha-1}l(\theta_2)d\theta_2$, where $\alpha >1$ and $l$ is a function that slowly varies at 0.
\end{remark}

\section{Parameter estimation for the graph supOU process}\label{sec:GMMestimator}
Since supOU processes are generally not Markovian and not of semimartingale type, maximum likelihood estimation is not straightforward. Given their analytically tractable moment structure, a (generalised) method of moments approach seems promising for parameter estimation.

Let us  denote by $\vartheta$ the parameter (vector) associated with the (probability) measure $\pi$ with appropriate parameter space $\Theta$.
The parameters of interest are $\xi=(\vartheta, c, \mathrm{vec}(\mu_L), \operatorname{vech}(\boldsymbol{\sigma}_L^2))$, see \eqref{eq:2ndmombgamma}, assuming the finiteness of the mean and variance throughout. 
Suppose we observe a sample of the  graph supOU process $\{X_t, t=\Delta, 2\Delta, \ldots, N\Delta\}$, for $\Delta>0, N\in\mathbb{N}$. 

\subsection{A two-step estimation procedure}\label{sec:ident}
Let us first
present a method for stepwise parameter estimation that simultaneously tackles the question of the parameter identifiability in the context of the parameterisation proposed in Section \ref{sec:bridge}.  Define the scaled autocovariance matrix by 
\begin{align*}
\mathbf{R}(h\Delta; \pi, c)&:= \mathrm{cov}(X_{h\Delta},X_0) (\mathrm{var}(X_0))^{-1}
\\
&=
 \left(\int_{0}^{\infty}\frac{1}{\theta_2}e^{\theta_2 \mathbf{K}(c) h}\pi(d\theta_2)\right)\left(\int_{0}^{\infty}\frac{1}{\theta_2}\pi(d\theta_2)\right)^{-1}.
 \end{align*}
 Adapting the  idea suggested in  \cite[Section 5]{Bar11} in the context of general supOU processes, we consider estimating the parameters of $\pi$ and the parameter $c$ by considering suitable empirical counterparts of the eigenvalues of  $\mathbf{R}(h\Delta; \pi, c)$. To this end, we first note that the eigenvalues of $\mathbf{K}(c)$ can be expressed in terms of the eigenvalues of the normalised adjacency matrix $\overline{\mathbf{A}}$. Let $a_1, \ldots, a_d\in \sigma(\overline{\mathbf{A}})$ denote the (not necessarily distinct) eigenvalues of $\overline{\mathbf{A}}$ (which are the same as for its transpose). 
By construction, for all $i=1, \ldots, d$, we have $|a_i|\leq 1$. 
Let $a^*=\rho(\overline A)$ denote the spectral radius of $\overline A$. Then $|a_i|\leq a^*$, for all $i=1,\ldots,d$. 
By the Perron-Frobenius theorem for nonnegative matrices, $a^*$ is a real eigenvalue of $\overline A$. 
If the graph associated with $A$ is strongly connected in the sense that every node is reachable from every other node,  then $a^*=1$. We note that, since we assume that the network structure is known, $a^*$ is known, too, and does not need to be inferred.
We also note that the eigenvalues of $\mathbf{K}(c)$ take the form $-1-ca_i$, for $i=1, \ldots, d$.
We observe that the maximum eigenvalue of the scaled autocovariance function can be written as a function of $\pi$ and $c$ as follows:
\begin{align}\begin{split}\label{eq:evrelation}
 \rho(h\Delta; \pi, c)&:=\rho(\mathbf{R}(h\Delta; \pi, c)) \\
 &= \left(\int_{0}^{\infty}\frac{1}{\theta_2}e^{-\theta_2 (1+ca^*) h}\pi(d\theta_2)\right)\left(\int_{0}^{\infty}\frac{1}{\theta_2}\pi(d\theta_2)\right)^{-1}.
 \end{split}
\end{align}
Let us consider the two examples introduced earlier.
\begin{example} In the case when $\pi$ is as in \eqref{eq:Exp}, then set
$$
\phi_i:= \left( \sum_{j=1}^{\infty}\frac{w_j}{\lambda_j}\right)^{-1}\frac{w_i}{\lambda_i},
$$ and note that
     \begin{align*}
     \mathbf{R}(h\Delta; \pi,  c)=
\mathbf{R}(h\Delta; (\lambda_i)_i, (w_i)_i, c)&
=\sum_{i=1}^{\infty}\phi_ie^{ \lambda_i \mathbf{K}(c) h}, \; h \geq 0,
\end{align*}
which has eigenvalues of the form
$\sum_{i=1}^{\infty}\phi_ie^{ \lambda_i(-1-c a_j) h}$, for $j=1, \ldots, d$. We can hence set
\begin{align}\label{eq:estfct-exp}
   \rho(h\Delta; \pi, c)= \rho(h\Delta; (\lambda_i)_i, (w_i)_i, c):=\sum_{i=1}^{\infty}\phi_ie^{- \lambda_i (1+c a^*) h}, \quad h \in \{1, \ldots, N-1\}.
\end{align}
\end{example}
\begin{example} In the case when $\pi$ is as in \eqref{eq:PD}, then 
     \begin{align*}
     \mathbf{R}(h\Delta; \pi,  c)=
\mathbf{R}(h\Delta; (\alpha, c)^{\top})&
=(\mathbf{I}_{d\times d}-\mathbf{K}(c)h\Delta)^{1-\alpha}, 
\quad h \geq 0,
\end{align*}
which has eigenvalues of the form
$(1+(1+c a_i) h\Delta)^{1-\alpha}$, for $i=1, \ldots, d$. We can hence set
\begin{align}\label{eq:estfct-LM}
   \rho(h\Delta; \pi, c)= \rho(h\Delta; (\alpha, c)^{\top})=(1+(1+c a^*) h\Delta)^{1-\alpha}, \quad h \in \{1, \ldots, N-1\}.
\end{align}
\end{example}

The relation \eqref{eq:evrelation} can be directly used for inference. Specifically, it enables us to identify the parameters associated with $\pi$ and the parameter $c$
 independently of $\mu_L$ and $\boldsymbol{\sigma}_L^2$.

 Let us introduce the empirical quantities needed for inference. 
We define
\begin{align*}
\widehat{\mathrm{E}(X_0)}&=\overline X = \frac{1}{N}
\sum_{t=1}^{N} X_{t\Delta},\\
\widehat{\mathrm{var}(X_0)}&=\frac{1}{N-1}
\sum_{t=1}^{N} (X_{t\Delta}-\overline X) (X_{t\Delta}-\overline X)^{\top}, 
\\
\widehat{\mathrm{cov}(X_{h\Delta},X_0)} &=\frac{1}{N-h}\sum_{t=1}^{N-h}(X_{t\Delta}-\overline X) (X_{(t+h)\Delta}-\overline X)^{\top}, \quad h \in \{1, \ldots, N-1\}.
\end{align*}
We can then define the empirical counterpart of the scaled autocovariance matrix as
\begin{align}\label{eq:EmpR} \widehat{\mathbf{R}}(h\Delta):= \widehat{\mathrm{cov}(X_{h\Delta},X_0)} ( \widehat{\mathrm{var}(X_0)} )^{-1}, \quad h \in \{1, \ldots, N-1\}. \end{align}
\begin{remark}
    We  assume that  the empirical variance matrix is invertible. Otherwise, for instance,  when $N<<d$, the matrix is near-singular and suitable regularisation techniques would need to be applied in this step.
\end{remark}
Next, we find the corresponding empirical spectral radii: For $h \in \{1, \ldots, N-1\}$, we set 
\begin{align*} \hat l(h\Delta):=
 \rho(\widehat{\mathbf{R}}(h\Delta)).\end{align*}

We choose a constant $N^*\in \mathbb{N}$ with $N^*\leq N-1$ and define the quadratic loss function
\begin{align}\label{eq:loss}
L(\vartheta, c)&:=\frac{1}{N^*}\sum_{h=1}^{N^*}(\hat l(h\Delta) 
- \rho(h\Delta; \pi, c))^2.
\end{align}
Recall that we denote by $\vartheta$ the parameter (vector) associated with the (probability) measure $\pi$ with the appropriate parameter space $\Theta$. 
Minimising the loss function \eqref{eq:loss} leads to the estimate
\begin{align*}
(\hat \vartheta, \hat c):=\mathrm{arg} \min_{\vartheta \in \Theta, |c|<1} L(\vartheta,c).
\end{align*}
Provided the parameters $\vartheta$ and $c$ are identifiable via  $\rho$, as discussed in Section \ref{sec:identifiability}, we obtain the estimates $(\hat \vartheta, \hat c)$.
Finally, we seek to identify $\mu_L$ and $\boldsymbol{\sigma}^2_{L}$.
Note that if the parameters of the L\'{e}vy basis $\Lambda$ can be identified via the mean and variance, then that concludes the estimation procedure. Otherwise, additional  higher moment conditions will need to be considered.
We propose the  following estimators:
\begin{align*}
    \widehat{\mu_L}&=
    -\int_{0}^{\infty}\frac{1}{\theta_2}\widehat{\pi}(d\theta_2) 
    \mathbf{K}(\hat c)\overline{X},\\
 \widehat{\boldsymbol{\sigma}^2_{L}}
 &=
  -\int_{0}^{\infty}\frac{1}{\theta_2}\widehat{\pi}(d\theta_2) 
 [\mathbf{K}(\hat c)\widehat{\mathrm{var}(X_0)}+\widehat{\mathrm{var}(X_0)}\mathbf{K}(\hat c)^\top],
 \end{align*}
where $\widehat{\pi}$ denotes the (probability) measure $\pi$, when the parameter $\vartheta$ is replaced by its estimate $\hat \vartheta$.
We conclude 
this section with the following consistency result.

\begin{proposition}\label{prop:con} Suppose that the parameters $\vartheta$ of $\pi$ and $c$ are identifiable via the loss function \eqref{eq:loss} and that $\boldsymbol{\sigma}^2_{L}$ is invertible.
 Then, the 
parameters $\mu_L$ and $\boldsymbol{\sigma}^2_{L}$ are identifiable and the 
estimators  $\hat \vartheta, \hat c, \widehat{\mu_L}, \widehat{\boldsymbol{\sigma}^2_{L}}$ are consistent.
\end{proposition}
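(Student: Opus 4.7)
The plan is to prove the proposition in two stages. First, establish consistency of $(\hat\theta,\hat c)$ as an M-estimator arising from minimising the loss $L$ in \eqref{eq:loss}. Second, use the continuous mapping theorem to transfer consistency to the plug-in estimators $\widehat{\mu_L}$ and $\widehat{\boldsymbol{\sigma}^2_L}$, with identifiability of these two latter quantities following directly by inverting the moment formulas \eqref{eq:2ndmomb}.

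For the first stage, the starting point is that the graph supOU process is a mixed moving average and is therefore strictly stationary and ergodic (as noted after Theorem \ref{th:supOU}). Under the finite second-moment assumption of Proposition \ref{prop:mom}, Birkhoff's ergodic theorem gives $\overline{X}\to \mathrm{E}(X_0)$, $\widehat{\mathrm{var}(X_0)}\to \mathrm{var}(X_0)$ and $\widehat{\mathrm{cov}(X_{h\Delta},X_0)}\to \mathrm{cov}(X_{h\Delta},X_0)$ almost surely, for each fixed lag $h$. Since $\boldsymbol{\sigma}^2_L$ is assumed invertible and $\mathcal{A}(\mathbf{K}(c))$ is invertible under Assumption \ref{as:ev}, the limit $\mathrm{var}(X_0)$ is invertible and matrix inversion is continuous at this point. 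Hence $\widehat{\mathbf{R}}(h\Delta)\to \mathbf{R}(h\Delta;\pi,c)$ almost surely. Since eigenvalues depend continuously on matrix entries, $\hat l(h\Delta)\to \rho(h\Delta;\pi,c)$ almost surely for every $h\in\{1,\dots,N^*\}$. Consequently, for each fixed argument $(\theta',c')$,
\begin{equation*}
L(\theta',c')\longrightarrow \frac{1}{N^*}\sum_{h=1}^{N^*}\bigl(\rho(h\Delta;\pi,c)-\rho(h\Delta;\pi_{\theta'},c')\bigr)^2=:L_0(\theta',c')\quad\text{a.s.}
\end{equation*}
By the identifiability hypothesis, $L_0$ has a unique global minimum (of value $0$) at the true parameter $(\theta,c)$. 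Combining pointwise convergence with continuity of $\rho(h\Delta;\cdot,\cdot)$ in $(\theta',c')$ and a compactness (or compactification) argument on $\Theta\times\{|c'|\le 1-\varepsilon\}$, standard M-estimator consistency arguments yield $(\hat\theta,\hat c)\to(\theta,c)$ in probability.

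For the second stage, note that, given fixed $c$ and $\pi$, the first and second moment relations in \eqref{eq:2ndmomb} are linear isomorphisms in $\mu_L$ and $\boldsymbol{\sigma}^2_L$ respectively: the map $\mu_L\mapsto \mathrm{E}(X_0)$ inverts through $\mathbf{K}(c)^{-1}$, which exists by Proposition \ref{prop:eigenv}, and the map $\boldsymbol{\sigma}^2_L\mapsto\mathrm{var}(X_0)$ inverts through $\mathcal{A}(\mathbf{K}(c))^{-1}$. This proves identifiability of $(\mu_L,\boldsymbol{\sigma}^2_L)$. The proposed estimators $\widehat{\mu_L}$ and $\widehat{\boldsymbol{\sigma}^2_L}$ are obtained by plugging $(\hat\theta,\hat c,\overline X,\widehat{\mathrm{var}(X_0)})$ into these continuous inverses; consistency then follows immediately from the continuous mapping theorem combined with the ergodic convergence established above.

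The main obstacle I expect is turning the pointwise convergence of $L$ into convergence of its argmin, which demands uniform control in $(\theta',c')$ and either compactness of the parameter space or a verification that minimisers cannot escape to the boundary or to infinity. In the Gamma specification, for instance, $\rho(h\Delta;(\alpha',c')^\top)=(1+(1+c'a^*)h\Delta)^{1-\alpha'}$ degenerates to $1$ as $\alpha'\downarrow 1$, which flattens the loss surface; the practical remedy is to restrict to $\alpha'\in[1+\delta,M]$ and $|c'|\le 1-\varepsilon$ for arbitrary $\delta,\varepsilon,M>0$, on which equicontinuity plus pointwise convergence delivers the required uniform convergence.
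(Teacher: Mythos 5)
Your proposal is correct and follows essentially the same route as the paper: identifiability of $\mu_L$ and $\boldsymbol{\sigma}^2_L$ by inverting the moment relations \eqref{eq:2ndmomb} through $\mathbf{K}(c)$ and $\mathcal{A}(\mathbf{K}(c))$ (using invertibility of $\boldsymbol{\sigma}^2_L$), and consistency via stationarity and ergodicity of the mixed moving average together with the continuous mapping theorem applied to the eigenvalue of the empirical scaled autocovariance matrix. The only difference is one of detail: you spell out the M-estimator argmin-convergence step (uniform convergence on a compact parameter set) that the paper's proof leaves implicit, which is a reasonable elaboration rather than a different approach.
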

\begin{proof}[Proof of Proposition \ref{prop:con}]
This result follows directly from 
the fact that supOU processes and hence graph supOU processes are stationary and ergodic and hence their empirical moments convergence in probability to their population counterparts. Also, the eigenvalue considered in the estimation is a continuous function of the empirical scaled autocovariance matrix and hence the continuous mapping theorem applies.
Now assume that the matrix $(\mathcal{A}(\mathbf{K}(c))^{-1}(\boldsymbol{\sigma}^2_{L})$ is invertible, which is the case if $\boldsymbol{\sigma}^2_{L}$ is invertible.
Then the mean and variance can be identified as follows. We have
    $\mathrm{E}(X_0)
    =-\left(\int_{0}^{\infty}\frac{1}{\theta_2}\pi(d\theta_2) \right)\,
 \mathbf{K}(c)^{-1}  \mu_L$, which is equivalent to $\mu_L=-\left(\int_{0}^{\infty}\frac{1}{\theta_2}\pi(d\theta_2) \right)^{-1}\,
\mathbf{K}(c)\mathrm{E}(X_0)$,
and for the variance we note that 
$\mathrm{var}(X_0)
=-\left(\int_{0}^{\infty}\frac{1}{\theta_2}\pi(d\theta_2) \right)\,
(\mathcal{A}(\mathbf{K}(c))^{-1} (\boldsymbol{\sigma}^2_{L})$, which is equivalent to $-\left(\int_{0}^{\infty}\frac{1}{\theta_2}\pi(d\theta_2) \right)^{-1}\,
\mathrm{var}(X_0) =(\mathcal{A}(\mathbf{K}(c))^{-1} (\boldsymbol{\sigma}^2_{L})$.
Hence,
\begin{align*}
\boldsymbol{\sigma}^2_{L}&=\mathcal{A}(\mathbf{K}(c))\left(-\left(\int_{0}^{\infty}\frac{1}{\theta_2}\pi(d\theta_2) \right)^{-1}\mathrm{var}(X_0)\right)\\
&
=
-\left(\int_{0}^{\infty}\frac{1}{\theta_2}\pi(d\theta_2) \right)^{-1}[\mathbf{K}(c)\mathrm{var}(X_0)+\mathrm{var}(X_0)\mathbf{K}(c)^\top].
\end{align*}
The consistency of the estimates then follows directly from the stationarity and ergodicity.

\end{proof}
\subsubsection{Parameter identifiability}\label{sec:identifiability}

Identifiability needs to be checked for each concrete specification of
$\pi$. We discuss the $\Gamma(\alpha,1)$ case in detail, as it is the
primary model of interest. The mixture-of-exponentials case follows
analogously. 
In the $\Gamma(\alpha,1)$ case, the spectral radius function takes the
explicit form
\begin{align}\label{eq:rho_gamma}
    \rho(h\Delta;\alpha,c) = (1+(1+ca^*)h\Delta)^{1-\alpha},
    \quad h \in \{1,\ldots,N-1\},
\end{align}
and $(\alpha, c)$ are identifiable from \eqref{eq:rho_gamma} provided
$1+ca^*\neq 0$. Indeed, for any two lags $h_1\neq h_2$, the ratio
\begin{align*}
    \frac{\rho(h_1\Delta;\alpha,c)}{\rho(h_2\Delta;\alpha,c)}
    = \left(\frac{1+(1+ca^*)h_1\Delta}{1+(1+ca^*)h_2\Delta}\right)^{1-\alpha}
\end{align*}
is strictly monotone in $\alpha$ for fixed $c$ and vice versa, so
$(\alpha,c)$ is uniquely determined by any two distinct lag ratios.
The degenerate case $c=-1/a^*$ is excluded since $|c|<1$ and $a^*\leq 1$.

\begin{figure}[htbp]
  \centering
  \begin{subfigure}[b]{0.23\textwidth}
    \centering
    \includegraphics[width=\textwidth]{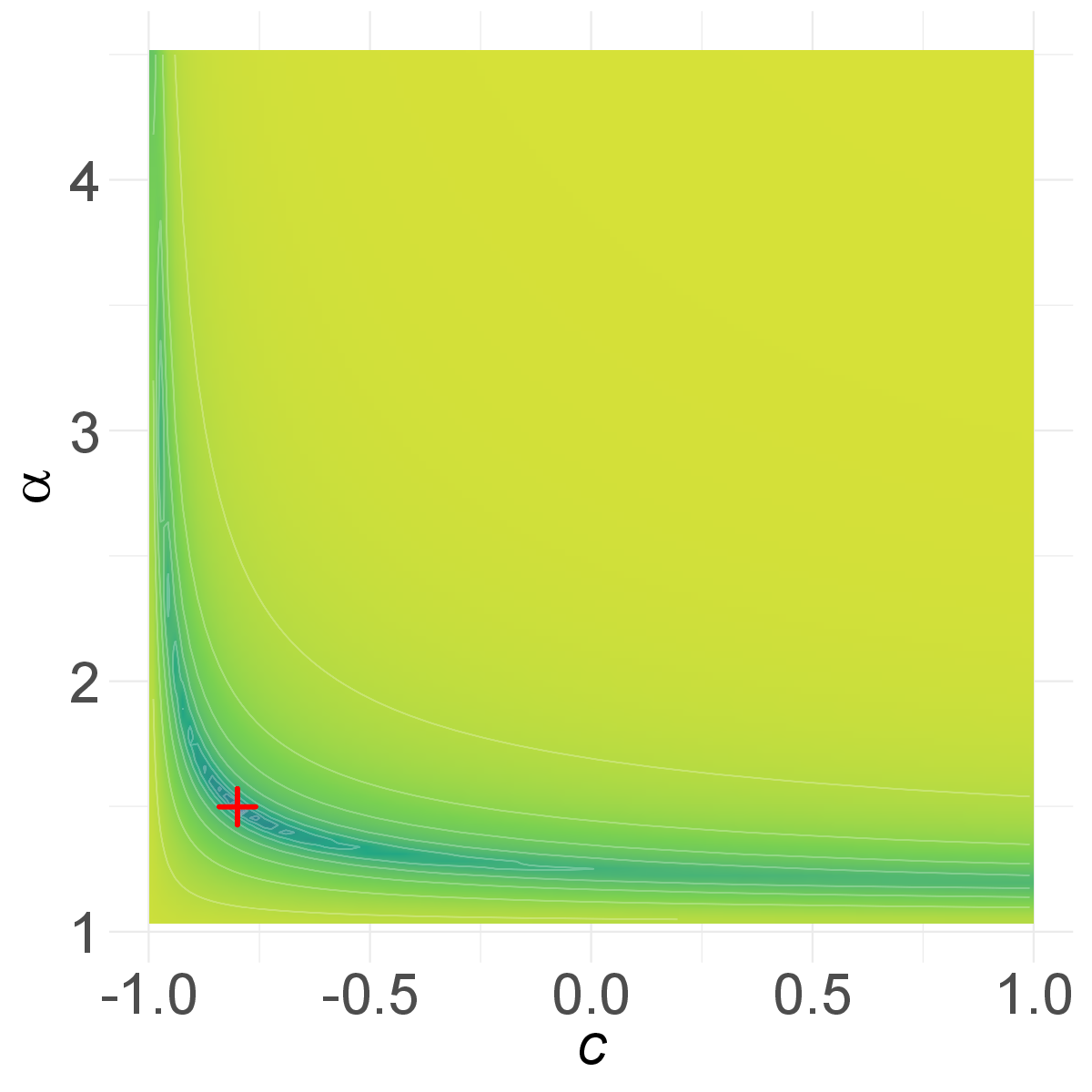}
    \caption{$\alpha_{0} = 1.5, \; c_0=-0.8$\\(long memory)}
  \end{subfigure}%
  \hfill
  \begin{subfigure}[b]{0.23\textwidth}
    \centering
    \includegraphics[width=\textwidth]{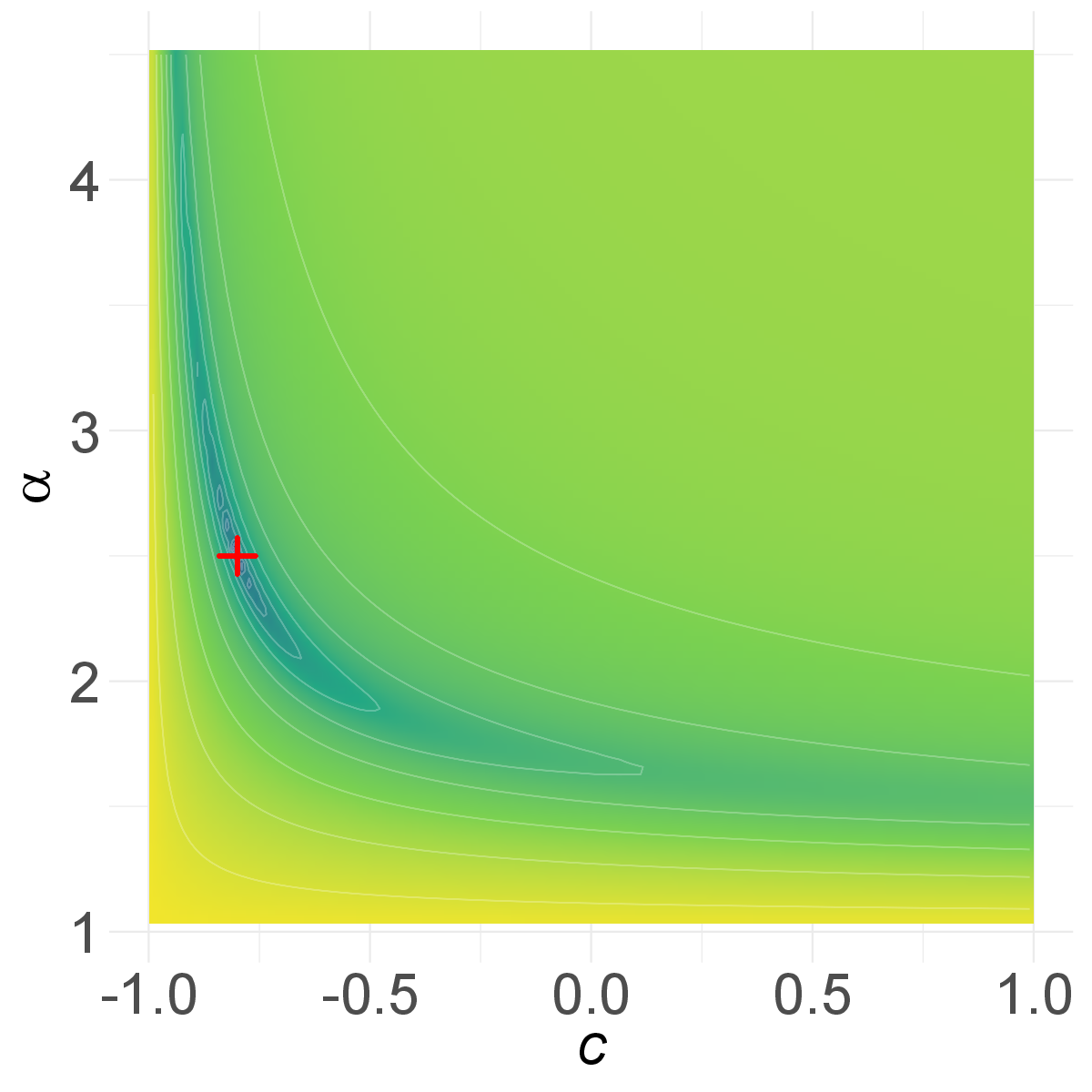}
    \caption{$\alpha_{0}=2.5,\; c_{0}=-0.8$\\(short memory)}
  \end{subfigure}%
  \hfill
  \begin{subfigure}[b]{0.23\textwidth}
    \centering
    \includegraphics[width=\textwidth]{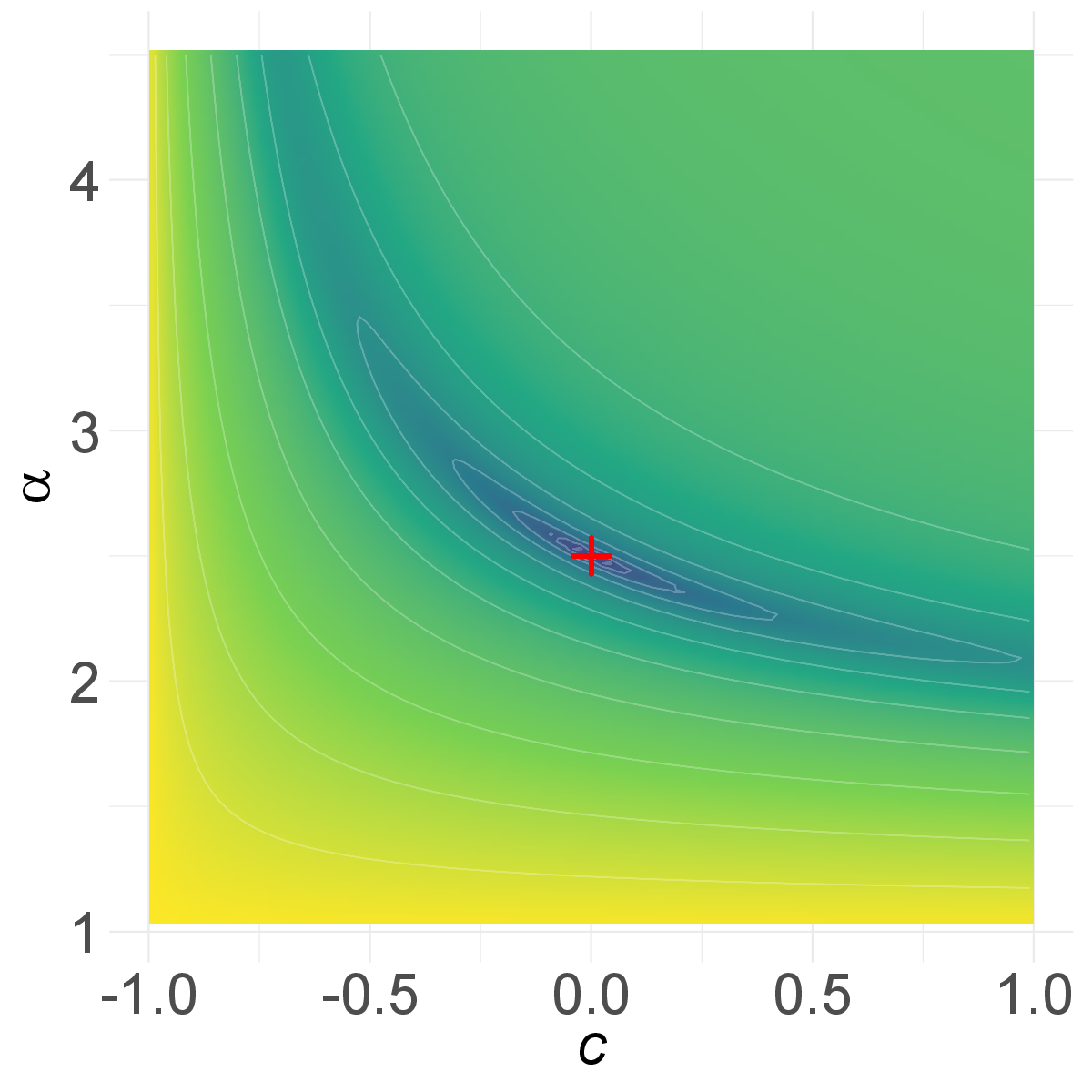}
    \caption{$\alpha_{0}=2.5,\; c_{0}=0$ \\ (no network effect)}
  \end{subfigure}%
  \hfill
  \begin{subfigure}[b]{0.23\textwidth}
    \centering
    \includegraphics[width=\textwidth]{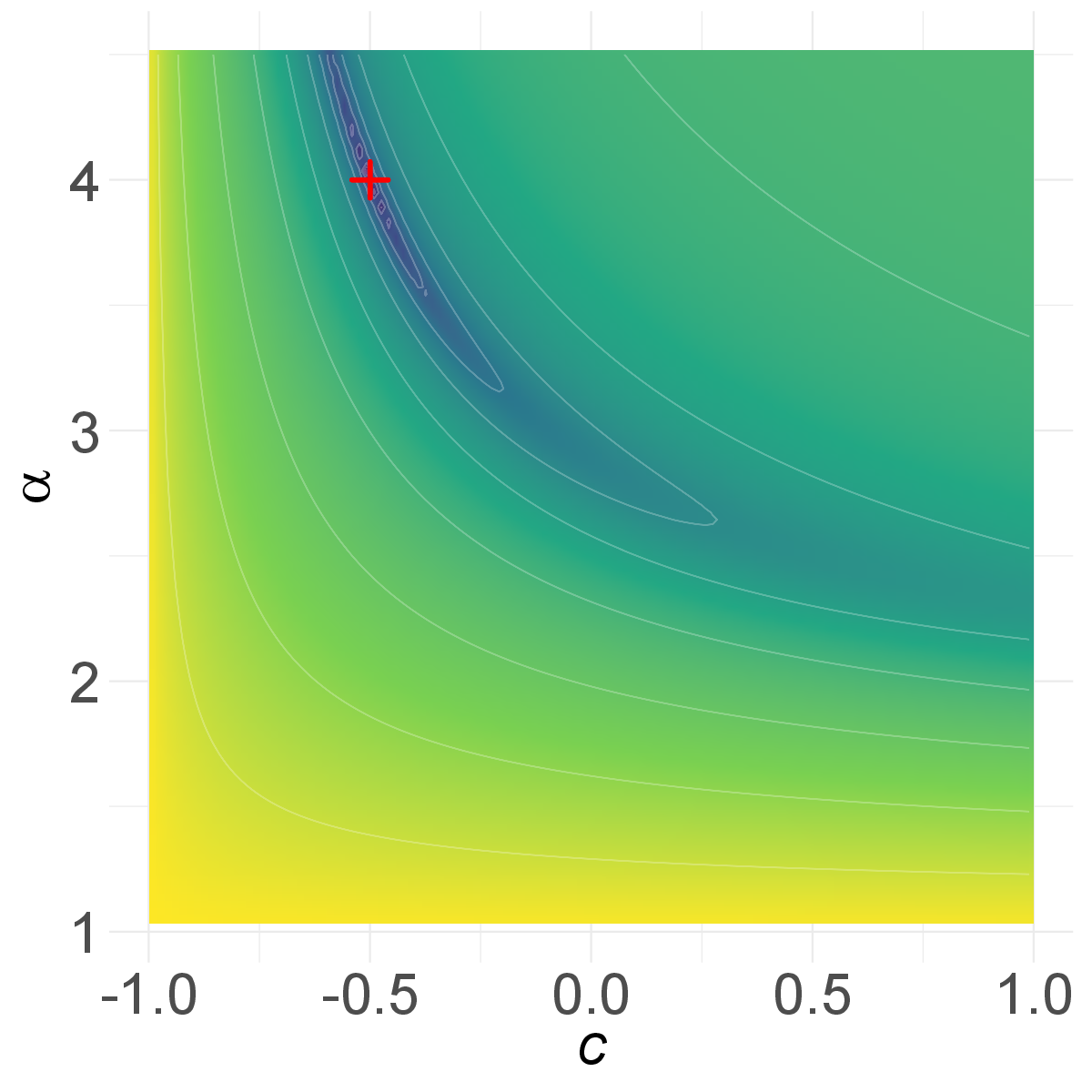}
    \caption{$\alpha_{0}=4.0,\; c_{0}=-0.5$\\(well-identified)}
  \end{subfigure}

  \vspace{0.3em}
  \includegraphics[scale=0.23]
  {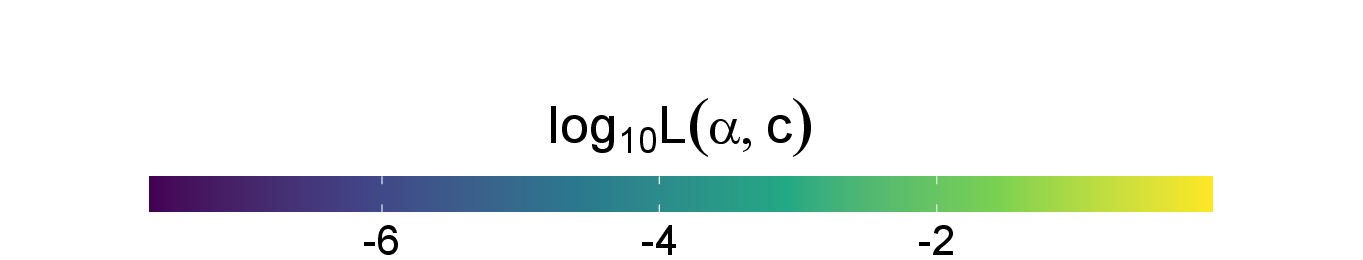}

  \caption{$\log_{10}L(\alpha,c)$ as defined in \eqref{eq:pop_loss}
    for four configurations of $(\alpha_{0},c_{0})$; red cross = true
    value. The flat regions of the loss surface indicate parameter
    combinations that are difficult to distinguish from the true
    parameter values. Here $a^{*}=1$, $\Delta=1$, and $N^{*}=30$.}
  \label{fig:loss_surface}
\end{figure}
While identifiability holds theoretically, in practice, the loss surface 
$L(\alpha,c)$ can be very flat in certain directions. This  means that 
multiple parameter pairs $(\alpha, c)$ produce nearly identical fitted 
functions $\rho(h\Delta;\cdot,\cdot)$ and are therefore difficult to 
distinguish from one another numerically.
This is most pronounced when $\alpha$ is close to 1 or $|1+ca^*|$ is
small. Figure~\ref{fig:loss_surface} illustrates this for four
representative configurations, using the population-level loss
\begin{equation}\label{eq:pop_loss}
\log_{10} L(\alpha, c) = \log_{10} \!\left( \frac{1}{N^*} \sum_{h=1}^{N^*}
\left( \rho(h\Delta;\alpha_0, c_0) - (1+(1+ca^*)h\Delta)^{1-\alpha}
\right)^{\!2} \right),
\end{equation}
with $a^{*}=1$, $\Delta=1$, and $N^{*}=30$.

The identifiability structure simplifies considerably in the 
mixture-of-exponentials case~\eqref{eq:Exp}, where the autocovariance 
function decays exponentially rather than polynomially. Specifically, a 
mixture of $n$ components has $2n-1$ free parameters $(\lambda_i, 
w_i)_{i=1}^n$ and  the parameter $c$. They are identifiable from $\rho(h\Delta;\cdot)$ at 
$2n$ distinct lags, provided the $\lambda_i$ are distinct and 
$1+ca^*\neq 0$.

In view of Proposition~\ref{prop:con}, consistency of $\widehat{\mu_L}$
and $\widehat{\boldsymbol{\sigma}^2_L}$ follows once $(\hat\theta,\hat c)$
are consistent. Estimation is most reliable in the short-memory regime
($\alpha\geq 2$) with a pronounced network effect ($|1+ca^*|$ bounded
away from zero), and we recommend choosing $N^{*}$ large relative to
the effective memory of the process
in the minimisation of~\eqref{eq:loss}.

\subsection{Generalised Method of Moments (GMM): Main results} 
We now present the key results from the generalised method of moments framework for the parameter estimation of graph supOU processes, not necessarily restricted to the setting considered in Section \ref{sec:bridge}. Full technical details and proofs are given in Appendix \ref{app:GMM}. 

Given a discrete sample of the  graph supOU process $\{X_t, t=\Delta, 2\Delta, \ldots, N\Delta\}$, for $\Delta>0, N\in\mathbb{N}$, we
set  
$X_{t:t+m}:=\left(X_{t\Delta}^{\top},\dots,X_{(t+m)\Delta}^{\top}\right)^{\top}\in \mathbb{R}^{d (m+1)}$, for $t\in\{1,\dots,N-m\}$.
We want to estimate the parameter vector that contains (potentially a subset of interest of) the
parameters of the underlying Lévy basis $\Lambda$ 
and the parameters associated with the measure $\pi$. 
We denote by $\xi$ parameter vector of interest and by $\Xi$ the corresponding parameter space. 
Let $q:=d+\frac{(m+1)d(d+1)}{2}$ denote the number of moment conditions (assumed to be bigger than the dimension of $\xi$).
We define the mean $\mu(\xi):=\mathrm{E}(X_0)$ and second moment  $\mathbf{D}_i(\xi):=\mathrm{E}(X_0X_{i\Delta}^{\top})=\mathrm{cov}(X_0,X_{i\Delta})+\mathrm{E}(X_0)(\mathrm{E}(X_0))^{\top}$, for $i=0, \ldots, m$, expressing the dependence on (a subset) of the parameter vector $\xi\in \Xi$.
Next, we  define a measurable function $f:\mathbb{R}^{d(m+1)}\times\Xi\rightarrow \mathbb{R}^{q}$ by
 \begin{align}\label{eq:f}
 f(X_{t:t+m},\xi)=\left(\begin{matrix}
            f_E(X_{t:t+m},\xi)\\
             f_0(X_{t:t+m},\xi)\\
             f_1(X_{t:t+m},\xi)\\
             \vdots\\
             f_m(X_{t:t+m},\xi)\\
         \end{matrix}\right)=\left(\begin{matrix}
      \text{vec}     (  X_{t\Delta}-\mu(\xi))\\
      \text{vec}_h      ( X_{t\Delta} X_{t\Delta}^{\top}-\mathbf{D}_0(\xi))\\
    \text{vec}_h         ( X_{t\Delta} X_{(t+1)\Delta}^{\top}-\mathbf{D}_1(\xi))\\
             \vdots\\
    \text{vec}_h          (X_{t\Delta} X_{(t+m)\Delta}^{\top}-\mathbf{D}_{m}(\xi))\\
         \end{matrix}\right),
         \end{align}
and its empirical counterpart, the  sample moment function, by 
\begin{align}\label{eq:samplem}
f_{N}(\xi):=\frac{1}{N-m}\sum_{t=1}^{N-m}f(X_{t:t+m},\xi)=\left(\begin{matrix}
           \frac{1}{N-m}\sum_{t=1}^{N-m}  f_E(X_{t:t+m},\xi)\\
           \frac{1}{N-m}\sum_{t=1}^{N-m} f_0(X_{t:t+m},\xi)\\
          \frac{1}{N-m}\sum_{t=1}^{N-m} f_1(X_{t:t+m},\xi)\\
             \vdots\\
         \frac{1}{N-m}\sum_{t=1}^{N-m}    f_m(X_{t:t+m},\xi)\\
         \end{matrix}\right).
         \end{align}
One can  estimate the true parameter $\xi_0$ by minimising  the GMM loss function, and we set
\begin{equation}{\label{eq:Estimator}}
\hat{\xi}_0^{N}:=\text{argmin}_{\xi \in \Xi}\:f_{N}(\xi)^T\mathbf{W}_Nf_{N}(\xi),\end{equation}
where $\mathbf{W}_N$ is the $q\times q$--dimensional stochastic positive definite weight matrix associated with the moment function.
 A natural choice for $\mathbf{W}_N$ is  the  inverse of a
consistent estimator of the long-run variance of $f_N(\xi_0)$. Alternatively, the identity weight matrix
$\mathbf{W}_N=\mathbf{I}$ is also a valid choice and computationally simpler, but typically less efficient.

The following two theorems establish the consistency and asymptotic normality of 
$\hat{\xi}_0^{N}$ for the concrete models of Section \ref{sec:bridge}.

\begin{theorem}[Consistency]\label{th:consistency-gamma} Suppose that the graph supOU process has finite second moments, the drift is given by \eqref{eq:Qtheta2} , for $|c|<1$. Moreover, for $\alpha>1$, we have $\theta_2\sim \Gamma(\alpha,1)$ as in \eqref{eq:PD} or $\theta_2$ is chosen as in \eqref{eq:Exp}.
Suppose that Assumptions \ref{A3} and \ref{A4} (see Appendix \ref{app:GMM}) hold. 
Then the GMM estimator $\hat{\xi}_0^{N}$,  defined in equation \eqref{eq:Estimator}, is weakly consistent, i.e.
$$
\hat{\xi}_0^{N} \stackrel{\mathrm{P}}{\longrightarrow} \xi_0, \quad \mathrm{as}\, 
N\to \infty.$$
 \end{theorem}

\begin{theorem}[Asymptotic normality]\label{th:AN-main}
Under the conditions of Theorem~\ref{th:consistency-gamma}, assume
additionally that $\int_{\|x\|>1}\|x\|^{4+\delta}\nu(dx)<\infty$ for
some $\delta>0$, and that the $\zeta$-weak dependence bound (see Appendix~\ref{app:GMM}) satisfies
$\widehat{\zeta}(r)=O(r^{-\epsilon})$ for
$\epsilon>(1+\frac{1}{\delta})\frac{3+\delta}{2+\delta}$.
Suppose that Assumptions~\ref{as:A_weightmatrix} and~\ref{as:asymcov}
hold (see Appendix~\ref{app:GMM}).
Then, as $N\to\infty$,
\begin{align*}
\sqrt{N}\bigl(\hat{\xi}_0^{N}-\xi_0\bigr)
\xrightarrow{d}
N\!\left(0,\,\mathbf{M}\mathbf{F}_{\Sigma}\mathbf{M}^{\top}\right),
\end{align*}
where
\begin{multline*}
\mathbf{F}_{\Sigma}=\sum_{l\in\mathbb{Z}}
\mathrm{cov}\!\bigl(f(X_{0:m},\xi_0),f(X_{l:l+m},\xi_0)\bigr),
\quad
\overline{\mathbf{F}}=\frac{\partial f(X_{t:t+m},\xi)}{\partial\xi^{\top}}
\bigg|_{\xi=\xi_0},
\quad\\
\mathbf{M}=\bigl(\overline{\mathbf{F}}^{\top}\mathbf{W}
\overline{\mathbf{F}}\bigr)^{-1}\overline{\mathbf{F}}^{\top}\mathbf{W}.
\end{multline*}
\end{theorem}

\begin{remark}[Scope of asymptotic normality]\label{rem:longmem-excl}
In the Gamma-law case, $\widehat{\zeta}(r)=O(r^{-\alpha/2})$, so the
condition on $\epsilon$ requires
$\alpha>2\!\left(1+\frac{1}{\delta}\right)\!\frac{3+\delta}{2+\delta}$,
which implies $\alpha>2$.
Consequently, \emph{asymptotic normality currently covers only the
short-memory regime}; the long-memory case
($\alpha\in(1,2)$) remains an open problem and is deferred to future
work.  The mixture-of-exponentials model~\eqref{eq:Exp}, which has
exponentially decaying dependence, is fully covered by
Theorem~\ref{th:AN-main}.
\end{remark}

\section{Simulation study}\label{sec:sim}
We assess the finite-sample performance of our proposed estimation methodology in a Monte Carlo study.
Adapting the ideas presented in 
\cite{FK2007, RSM}, we will first present a simulation algorithm for graph supOU processes. 
 We note that the 
 L\'{e}vy-It\^{o} decomposition implies that the graph supOU process has a modification of the form 
\begin{align*}
Y_t&= \gamma 
\int_{M_d^-}\int_{-\infty}^t e^{\mathbf{Q}(\theta)(t-s)} ds\pi(d\mathbf{Q}(\theta))
+ \int_{M_d^-}\int_{-\infty}^t e^{\mathbf{Q}(\theta)(t-s)}W(d\mathbf{Q}(\theta), ds)
\\
&
+ \int_{x\in \mathbb{R}^d: \|x\|\leq 1}\int_{M_d^-}\int_{-\infty}^t e^{\mathbf{Q}(\theta)(t-s)}x[\mu(dx, d\mathbf{Q}(\theta), ds)-ds\pi(d\mathbf{Q}(\theta))\nu(dx)]\\
&
+ \int_{x\in \mathbb{R}^d: \|x\|> 1}\int_{M_d^-}\int_{-\infty}^t e^{\mathbf{Q}(\theta)(t-s)}x\mu(dx, d\mathbf{Q}(\theta), ds)\\
&=:D_t+G_t+S_t+J_t,
\end{align*}
corresponding to the drift term $D_t$, the Gaussian term $G_t$, the term involving the  compensated small jumps $S_t$ and the big jumps $J_t$.
Since there is extensive literature on the simulation of Gaussian processes, we will focus on the jump case from now on. More precisely, we consider the scenario where the L\'{e}vy basis $\Lambda$ is given by a compound Poisson random measure. In that case, we can write, for $t\geq 0$,
\begin{align*}
    X_t &=
    \int_{x\in \mathbb{R}^d}\int_{M_d^-}\int_{-\infty}^t e^{\mathbf{Q}(\theta)(t-s)}x \mu(dx, d\mathbf{Q}(\theta), ds)  \\
    &
    =    
    \sum_{i\in \mathbb{N}}e^{\mathbf{Q}(\theta)_{-i}(t+\tau_{-i})}U_{-i}
    +\sum_{i\in \mathbb{N}: \tau_i\leq t}e^{\mathbf{Q}(\theta)_i(t-\tau_i)}U_i, 
\end{align*}
where $\tau_i$ are the arrival times of a homogeneous Poisson process of rate
$\varrho =\nu(\mathbb{R})$, i.e.
    $\tau_i:=\sum_{j=1}^i T_j$, and
    $\tau_{-i}:=\sum_{j=1}^i T_{-j}, \; \forall i \in \mathbb{N}$,
where $(T_i)_{i\in \mathbb{Z}\setminus\{0\}}$, $(U_i)_{i\in \mathbb{Z}\setminus\{0\}}$ and $(\mathbf{Q}(\theta))_{i\in \mathbb{Z}\setminus\{0\}}$ are mutually independent and i.i.d.~sequences with   $T_i\sim \mathrm{Exp}(\varrho)$, 
$\mathbf{Q}(\theta)_i\sim \pi$ and  $U_i\sim \nu/\varrho$.

In the following, we consider the concrete parametric model considered in Section \ref{sec:bridge} and Equation \eqref{eq:PD}, where  
\begin{align*}
    X_t &=
    \sum_{i\in \mathbb{N}}e^{\theta_{2,-i}\mathbf{K}(c)(t+\tau_{-i})}U_{-i} +\sum_{i\in \mathbb{N}: \tau_i\leq t}e^{\theta_{2,i}\mathbf{K}(c)(t-\tau_i)}U_i, 
\end{align*}
and we choose $\varrho =5$, $c=-0.8$, 
$\theta_{2,i}\sim \Gamma(\alpha,1)$ with  $\alpha =1.5$ (which is the long memory setting), and $U_i\sim \mathrm{N}(0, \mathbf{I}_{d\times d})$. 
I.e.~$\mu_L=0, {\boldsymbol \sigma}^2_L=5\mathbf{I}_{d\times d}$.
We consider the  network used in our subsequent empirical study, see Figure \ref{fig:MapData}, which has $d=24$ nodes. Note that the values chosen for $c, \alpha$ are also motivated by our subsequent empirical study.
Further, we set $\Delta=1$,
$N=1000$ observations per time series and we choose 
1000 Monte Carlo runs.

We present two approaches to simulating such a graph supOU process at times
$t_j = j\Delta$, for $j = 0, \ldots, N-1$ and $\Delta > 0$.

\paragraph{Exact simulation.}
We can simulate directly from the representation
\begin{align*}
    X_t = \sum_{i\in \mathbb{N}} X_t^{(-i)}
        + \sum_{i\in \mathbb{N}:\, \tau_i\leq t} X_t^{(i)},
\end{align*}
where $X_t^{(i)} = e^{\mathbf{Q}(\theta)_i(t-\tau_i)}U_i$ and
$X_t^{(-i)} = e^{\mathbf{Q}(\theta)_{-i}(t+\tau_{-i})}U_{-i}$.
Each contribution requires evaluating the matrix exponential
$e^{\mathbf{Q}(\theta)_i s}$ for a $d\times d$ matrix, which costs $O(d^3)$ per
call. With an expected $\varrho N$ jumps, each contributing to all subsequent
time steps, the total number of matrix-exponential evaluations is
$O(\varrho N^2)$, giving an overall cost of $O(\varrho N^2 d^3)$, which becomes
prohibitive for large $d$ or $N$.

\paragraph{Accelerated exact simulation.}
Since $\mathbf{Q}(\theta)_i = \theta_{2,i}\mathbf{K}(c)$ and $\mathbf{K}(c)$ is
fixed across all jumps, we compute its eigendecomposition
\begin{equation*}
    \mathbf{K}(c) = \mathbf{V}\,\mathrm{diag}(k_1,\ldots,k_d)\,\mathbf{V}^{-1}
\end{equation*}
once at the start of the simulation. Using the standard identity
$e^{\mathbf{V}\mathbf{D}\mathbf{V}^{-1}}=\mathbf{V}e^{\mathbf{D}}\mathbf{V}^{-1}$,
the exact contribution of jump $i$ to time point $t_j$ becomes
\begin{align*}
e^{\theta_{2,i}\mathbf{K}(c)\,s_{ij}}\,U_i
&= \mathbf{V}\,\mathrm{diag}\!\left(e^{\theta_{2,i}k_1 s_{ij}},\ldots,
   e^{\theta_{2,i}k_d s_{ij}}\right)\underbrace{\mathbf{V}^{-1}U_i}_{=:\,\widetilde{U}_i}
 = \mathbf{V}\left(\mathbf{e}_{ij}\odot\widetilde{U}_i\right)
 =: \mathbf{V}\,\mathbf{c}_{ij},
\end{align*}
where $s_{ij}:=t_j-\tau_i$, $\mathbf{e}_{ij}:=\left(e^{\theta_{2,i}k_1
s_{ij}},\ldots,e^{\theta_{2,i}k_d s_{ij}}\right)^\top$, and $\odot$ denotes
elementwise multiplication. The vector $\widetilde{U}_i:=\mathbf{V}^{-1}U_i$
is the jump size $U_i$ expressed in the eigenbasis of $\mathbf{K}(c)$. Since
$\mathbf{V}$ does not depend on $i$, it is computed once per jump and reused
across all future time points. 
The process is then recovered as
\begin{equation}
    X_{t_j} = \sum_{i:\, \tau_i \leq t_j}
        \mathrm{Re}\!\left(\mathbf{V}\,\mathbf{c}_{ij}\right), \qquad
        j=1,\ldots,N,
        \label{eq:fastsim}
\end{equation}
where $\mathrm{Re}(\cdot)$ denotes the real part of a complex number. Applying the real part above discards residual imaginary components from
 rounding. 
Each term costs $O(d)$ to form $\mathbf{e}_{ij}$,
$O(d)$ for the elementwise product, and $O(d^2)$ for the matrix-vector product
$\mathbf{V}\mathbf{c}_{ij}$, replacing the original $O(d^3)$ matrix-exponential
call. The total cost reduces from $O(\varrho N^2 d^3)$ to
$O(d^3 + \varrho N^2 d^2)$, which is an $O(d)$ speedup. For our parameter choices
($d=24$, $\varrho=5$, $N=1000$), this amounts to a factor of approximately $24$.
This is the scheme used in our simulation study.

In practice, the stationary initialisation corresponding to the $X_t^-$ term
is approximated by adding a burn-in period and discarding
those observations.

We summarise the findings from our simulation experiment in Figure \ref{fig:simresults}.
\begin{figure}[htb]
    \centering
        \begin{subfigure}[b]{0.3\textwidth}
        \centering
        \includegraphics[scale=0.2]{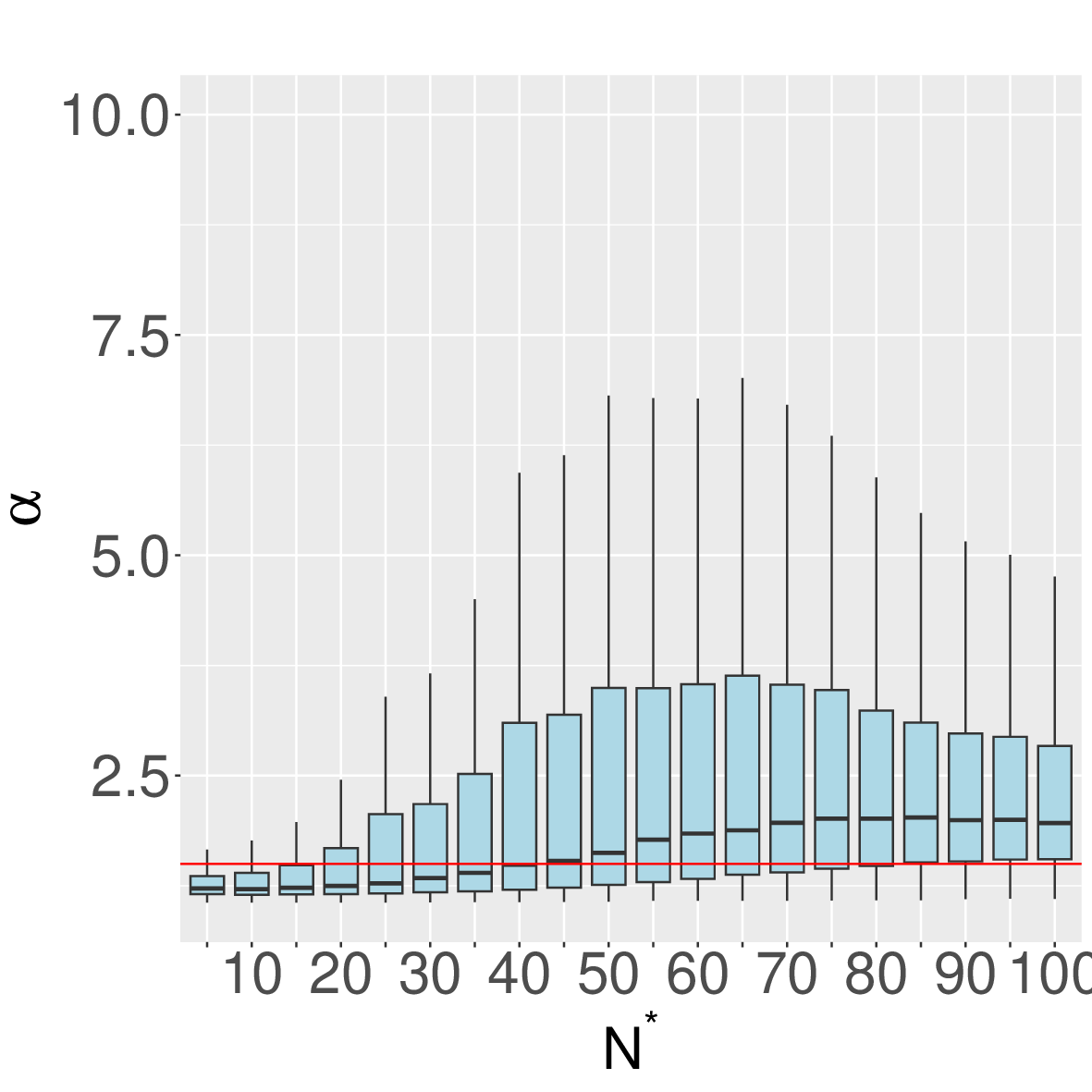}
        \caption{}
        \label{fig:sim_alpha}
    \end{subfigure}    
    \begin{subfigure}[b]{0.3\textwidth}
        \centering
        \includegraphics[scale=0.2]{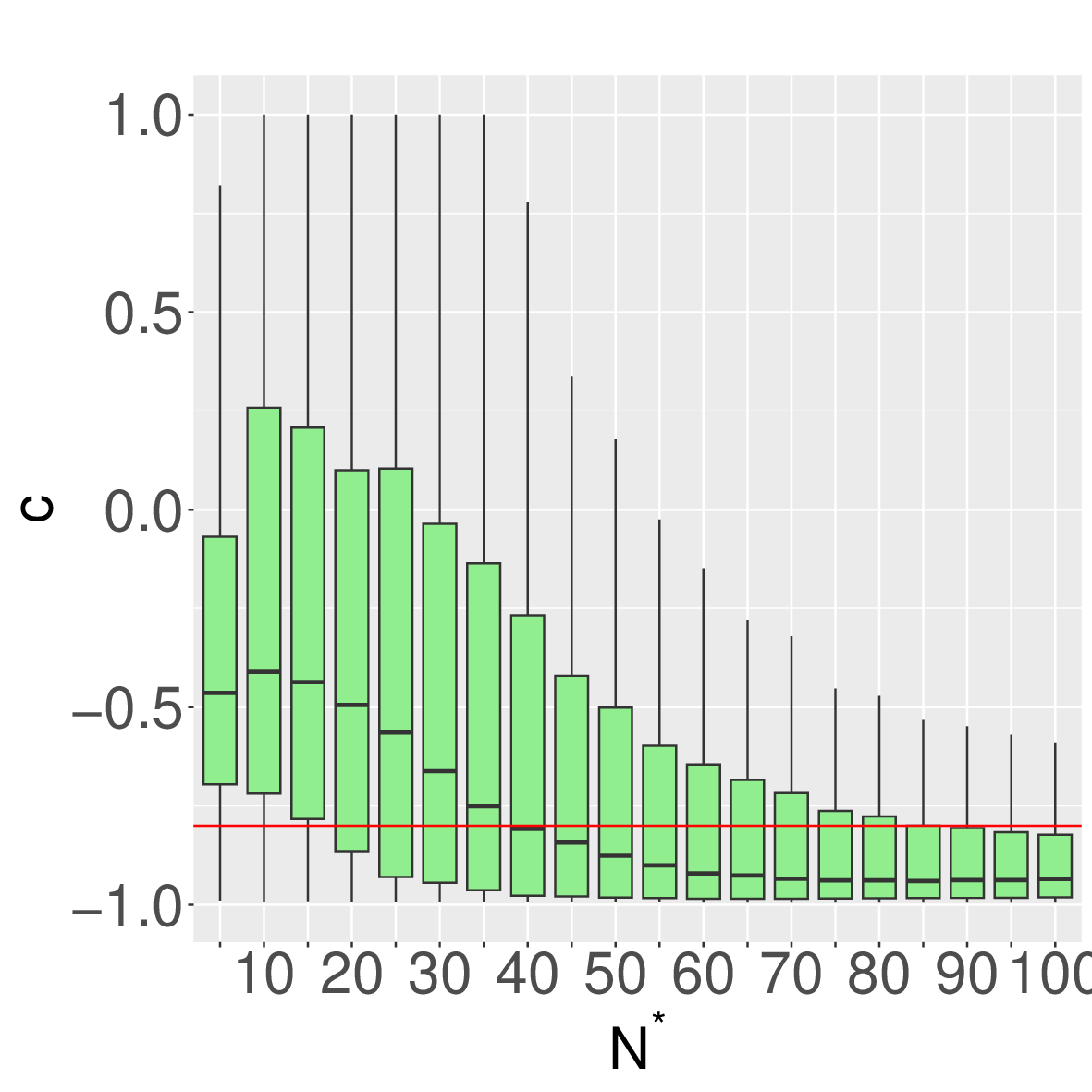}
        \caption{}
        \label{fig:sim_c}
    \end{subfigure}
    \begin{subfigure}[b]{0.3\textwidth}
        \centering
        \includegraphics[scale=0.2]{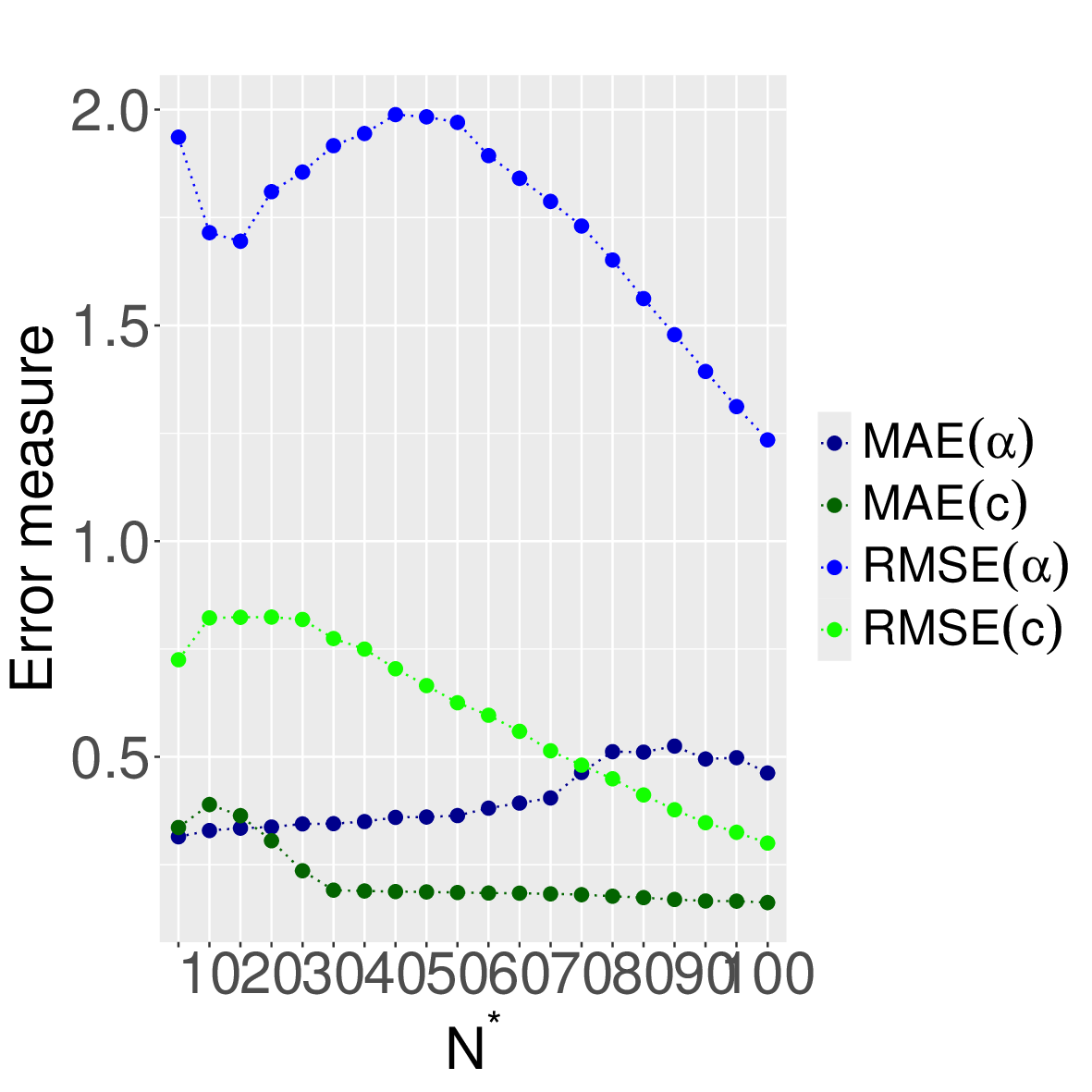}
        \caption{}
        \label{fig:sim-error}
    \end{subfigure}
    \begin{subfigure}[b]{0.3\textwidth}
        \centering
        \includegraphics[scale=0.2]{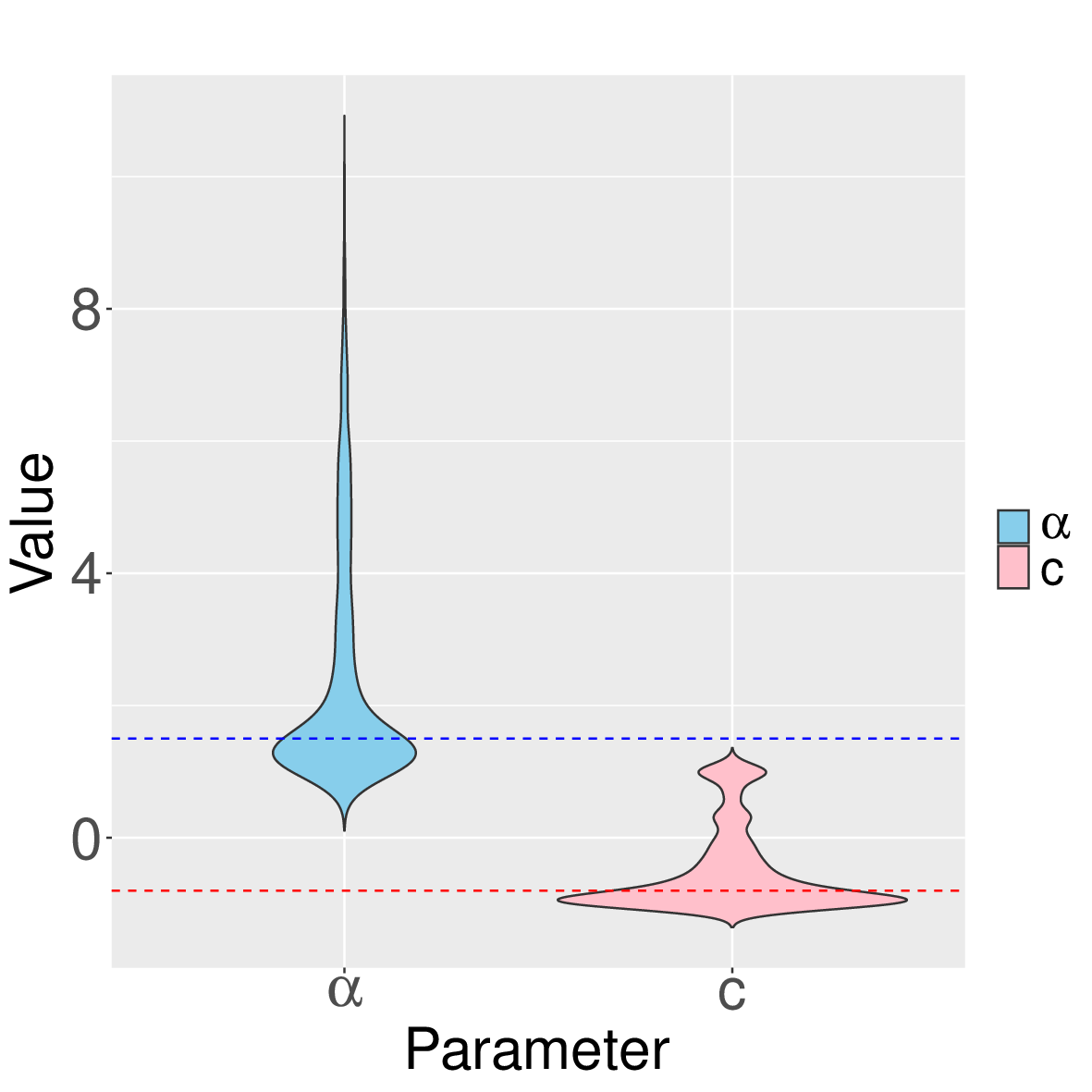}
        \caption{}
        \label{fig:p_vio}
    \end{subfigure}    
    \begin{subfigure}[b]{0.3\textwidth}
        \centering
        \includegraphics[scale=0.2]{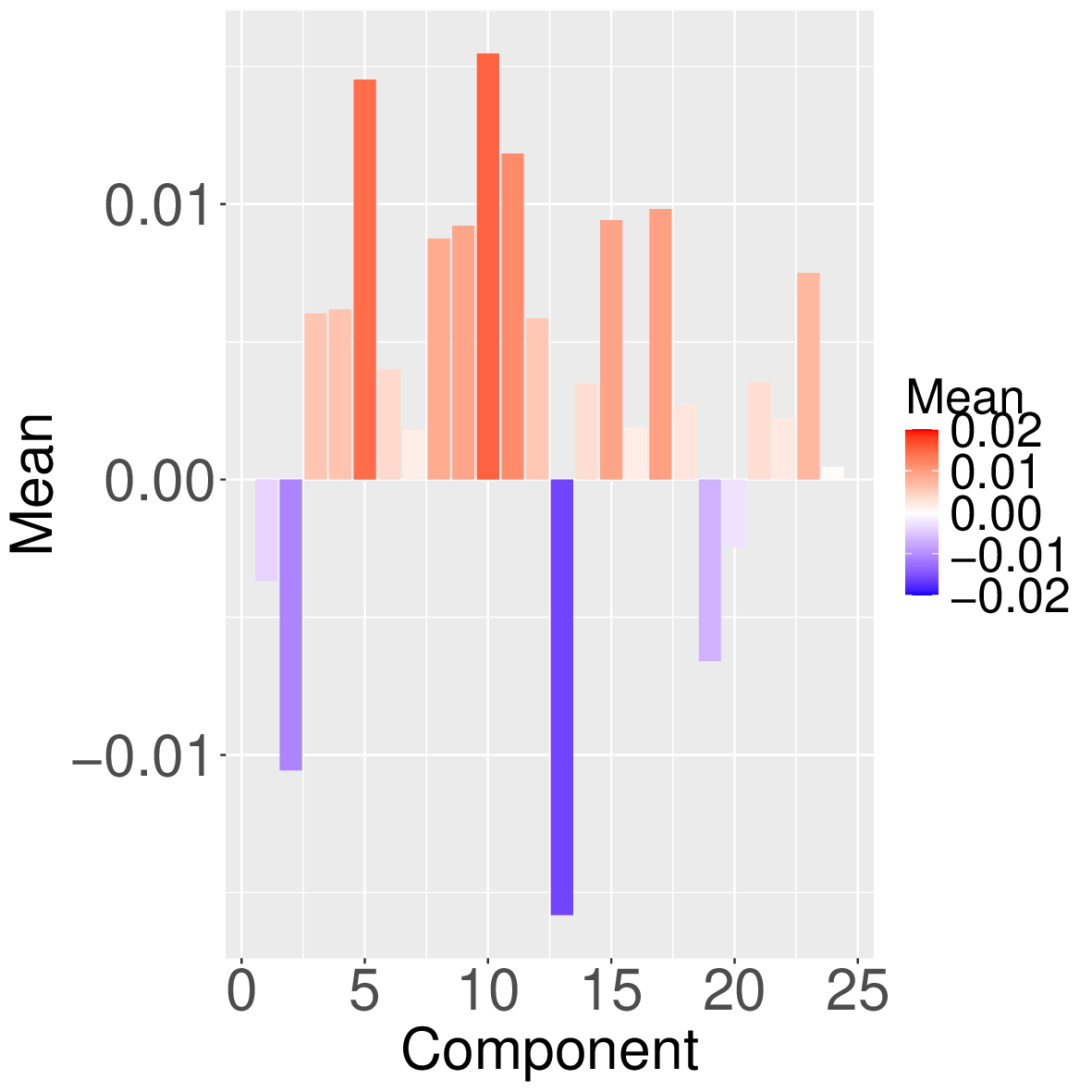}
        \caption{}
        \label{fig:estmean_median_sim}
    \end{subfigure}
    \begin{subfigure}[b]{0.3\textwidth}
        \centering
        \includegraphics[scale=0.2]{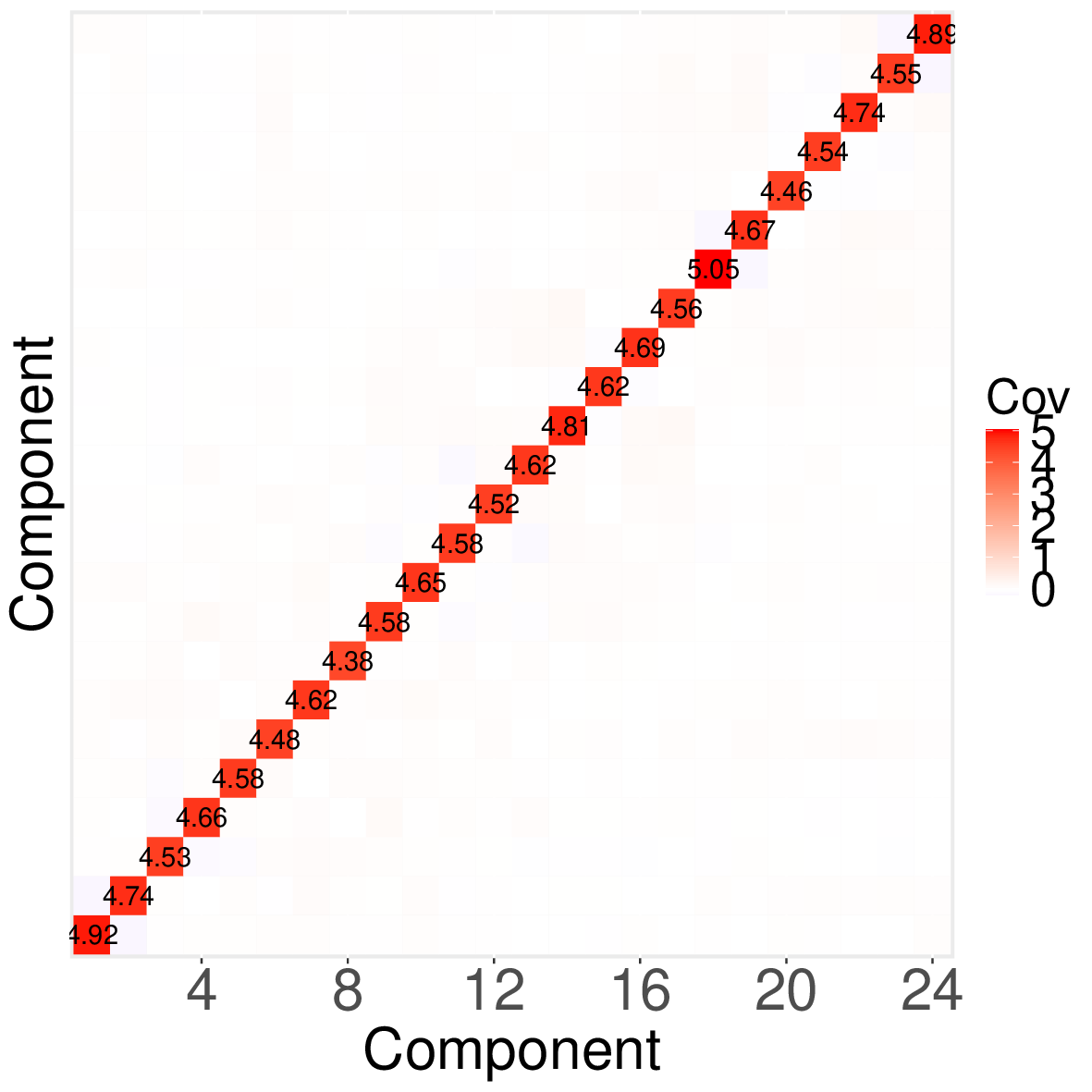}
        \caption{}
        \label{fig:estvar_median_sim}
    \end{subfigure}
             \caption{Results from the simulation study: Figures
             \ref{fig:sim_alpha} and \ref{fig:sim_c} depict the boxplots for the estimates of $\alpha$ and $c$, respectively, for different choices of $N^*$  in the loss function. The outliers are not depicted to improve readability. The solid red line presents the true parameter value in both cases. Figure
              \ref{fig:sim-error} displays the corresponding error measures. Figure             
             \ref{fig:p_vio} displays the violin plots for the estimates of $\alpha$ and $c$ for fixed $N^*=40$. Finally, Figures          \ref{fig:estmean_median_sim}
             and \ref{fig:estvar_median_sim} display the heatmap for the medians of the estimated means and variances over the 1000 Monte Carlo runs.
              }
    \label{fig:simresults}
\end{figure}
Recall that we  need to select the tuning parameter $N^*$ in the quadratic loss function, \eqref{eq:loss}, which determines how many lags of the  scaled autocovariance \eqref{eq:EmpR} enter our estimation. In our simulation study, we tested various choices of $N^*$, where we set $\Delta=1$ and $N^* \in \{5, 10,  \ldots 100\}$. For each of the $N^*$, we estimate $\alpha$ and $c$, and we provide the boxplots of the distribution of these estimates over the 1000 Monte Carlo runs in 
Figures  \ref{fig:sim_alpha} and \ref{fig:sim_c}, respectively. 
Figure  \ref{fig:sim-error} complements this part of the study by providing the root mean squared errors and the median absolute errors for the estimated parameters. All three plots together suggest that $N^*=40$ provides a reliable choice, balancing estimation accuracy against the increased variance introduced by longer lags. This is further confirmed by the scatter plots of the estimated parameters provided $(\hat{\alpha}, \hat{c})$ in Figure \ref{fig:simresults-contours1}, where the estimates concentrate around the true value once $N^*$ reaches  40.
 We then zoom into this scenario in Figure   \ref{fig:p_vio}, which contains the violin plots for the estimates of $\alpha$ and $c$ for $N^*=40$. The dotted lines show the true value. Overall, we see that the parameters can be well estimated by our method. 
As a final step, we estimate the mean and the variance, i.e.~$\mu_L=0, {\boldsymbol \sigma}^2_L=5\mathbf{I}_{d\times d}$, of the underlying L\'{e}vy basis. We carry out the estimation for each of the 1000 Monte Carlo runs and depict the median of  the estimated  mean value $\mu_L$, which is close to $0$ as it should be, and of the variance matrix, which is also close to its true counterpart given by ${\boldsymbol \sigma}^2_L=5\mathbf{I}_{d\times d}$.

\begin{figure}[htb]
    \centering
        \begin{subfigure}[b]{0.3\textwidth}
        \centering
        \includegraphics[scale=0.2]{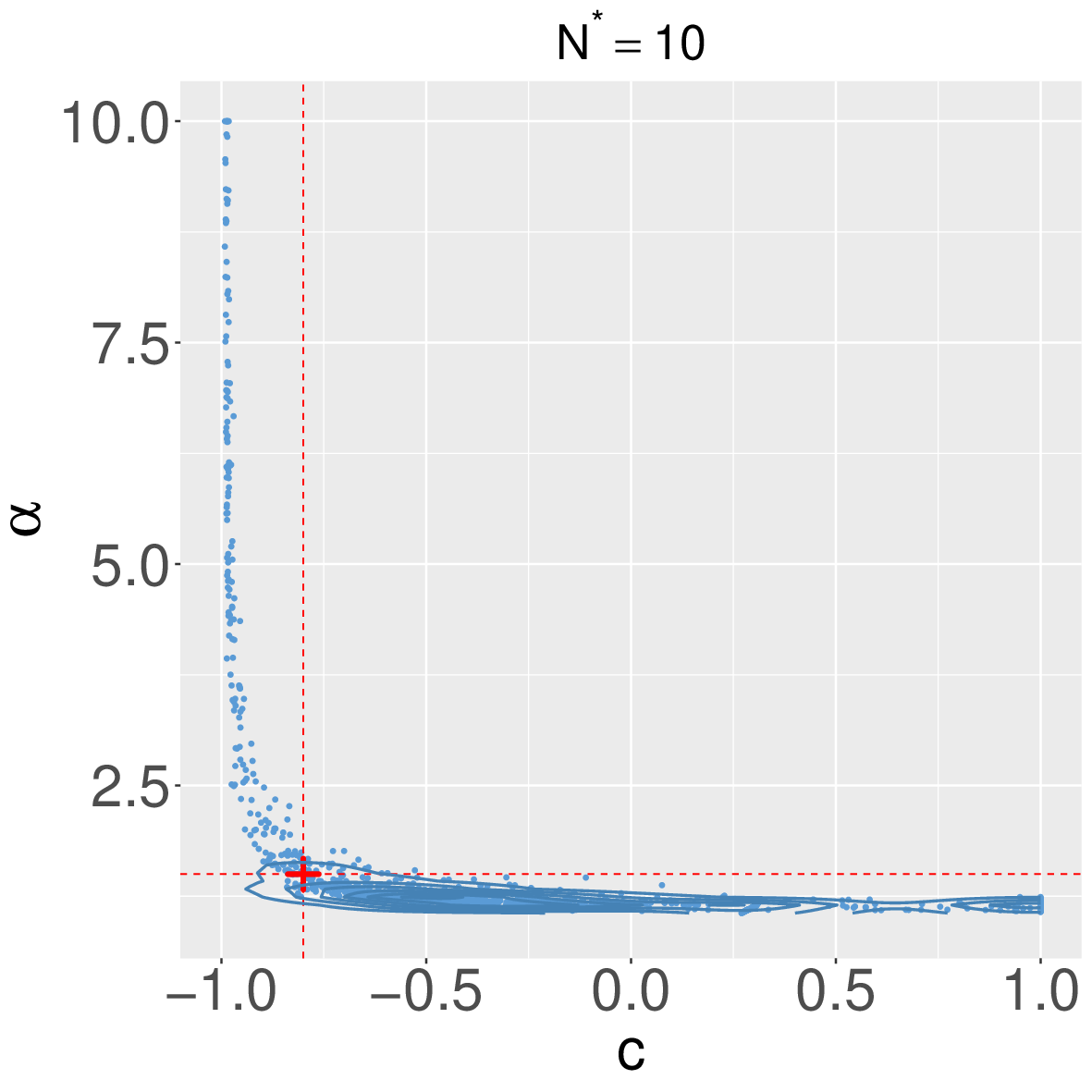}
        \caption{}
        \label{fig:sim_pairs_lag10}
    \end{subfigure}    
    \begin{subfigure}[b]{0.3\textwidth}
        \centering
        \includegraphics[scale=0.2]{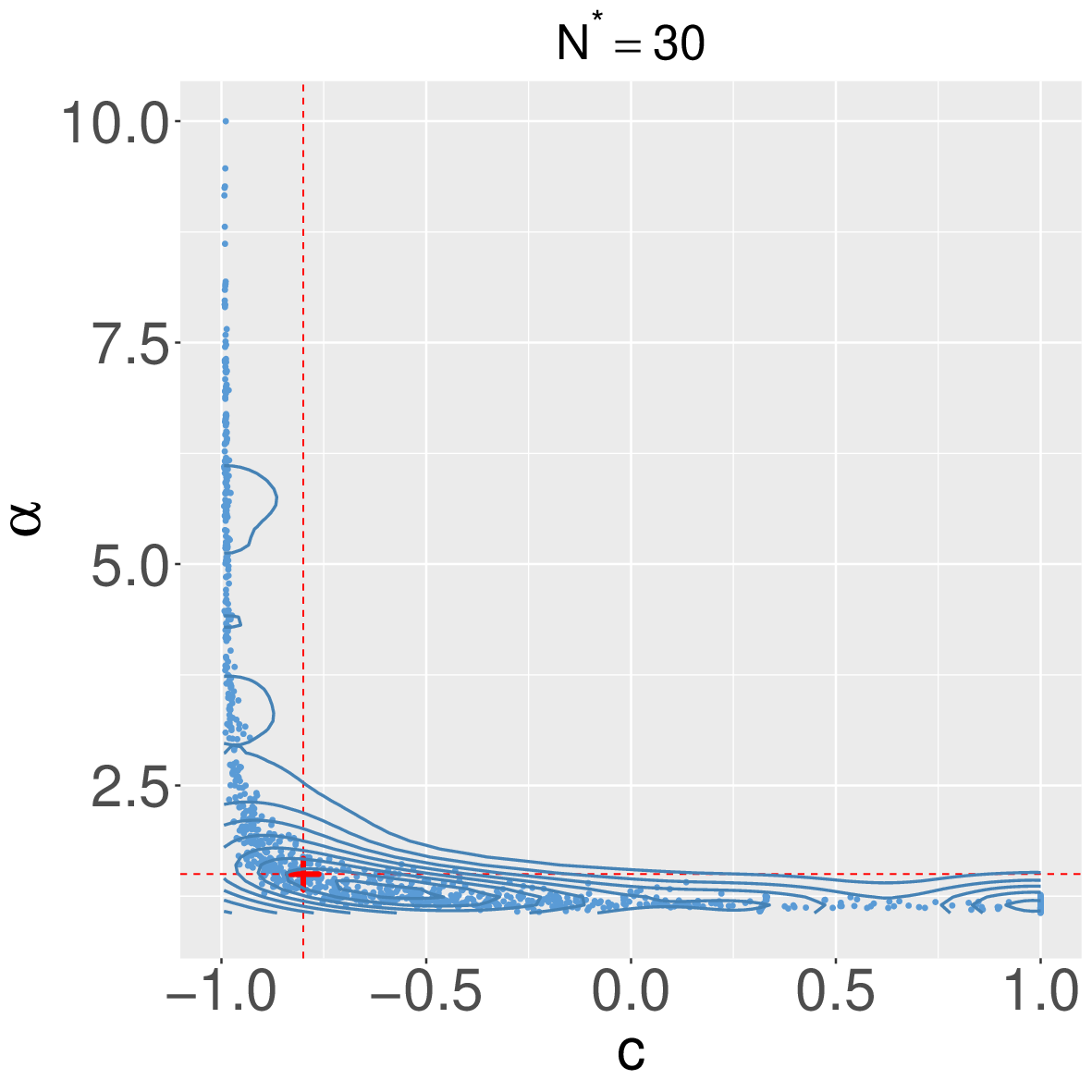}
        \caption{}
        \label{fig:sim_pairs_lag30}
    \end{subfigure}
    \begin{subfigure}[b]{0.3\textwidth}
        \centering
        \includegraphics[scale=0.2]{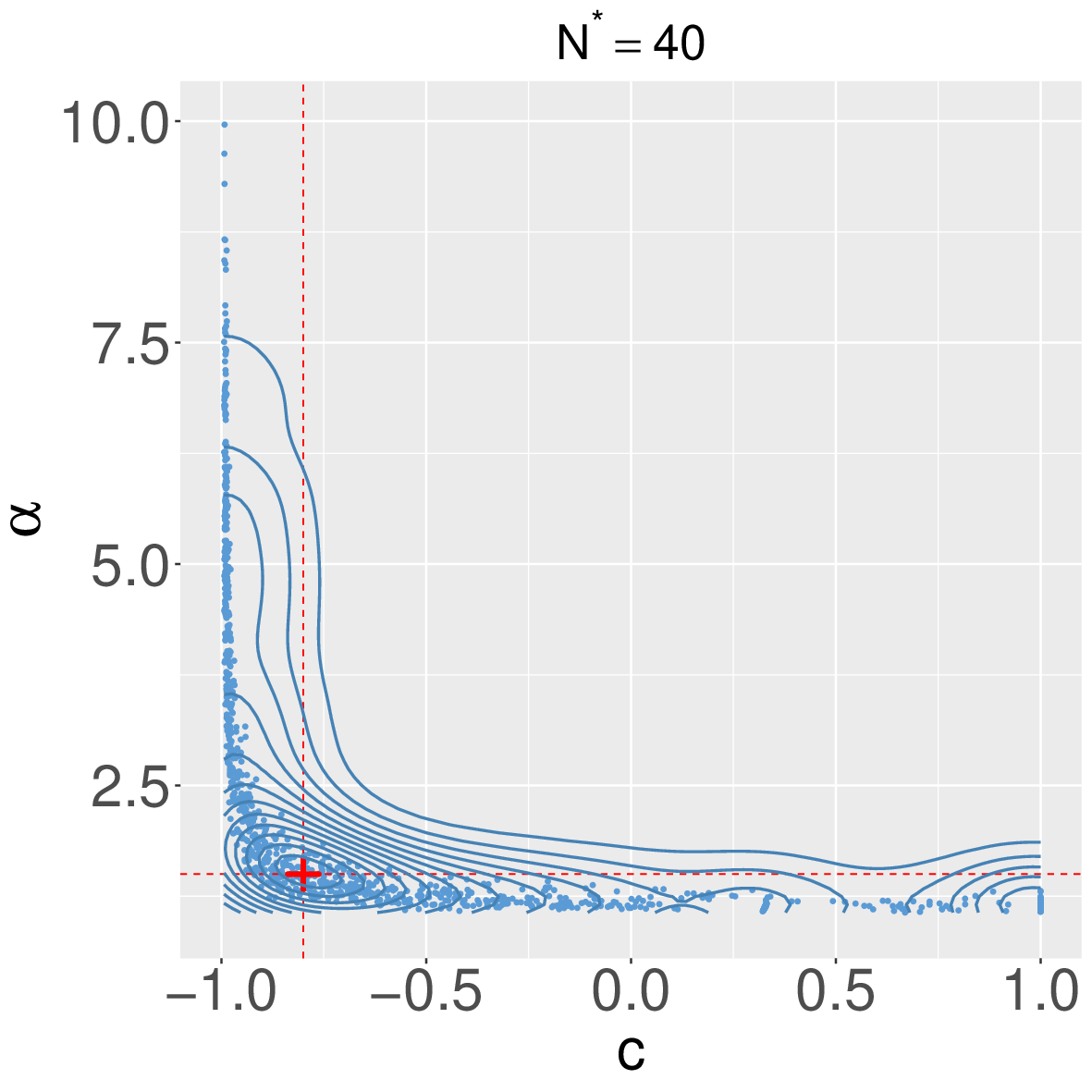}
        \caption{}
        \label{fig:sim_pairs_lag40}
    \end{subfigure}
        \begin{subfigure}[b]{0.3\textwidth}
        \centering
        \includegraphics[scale=0.2]{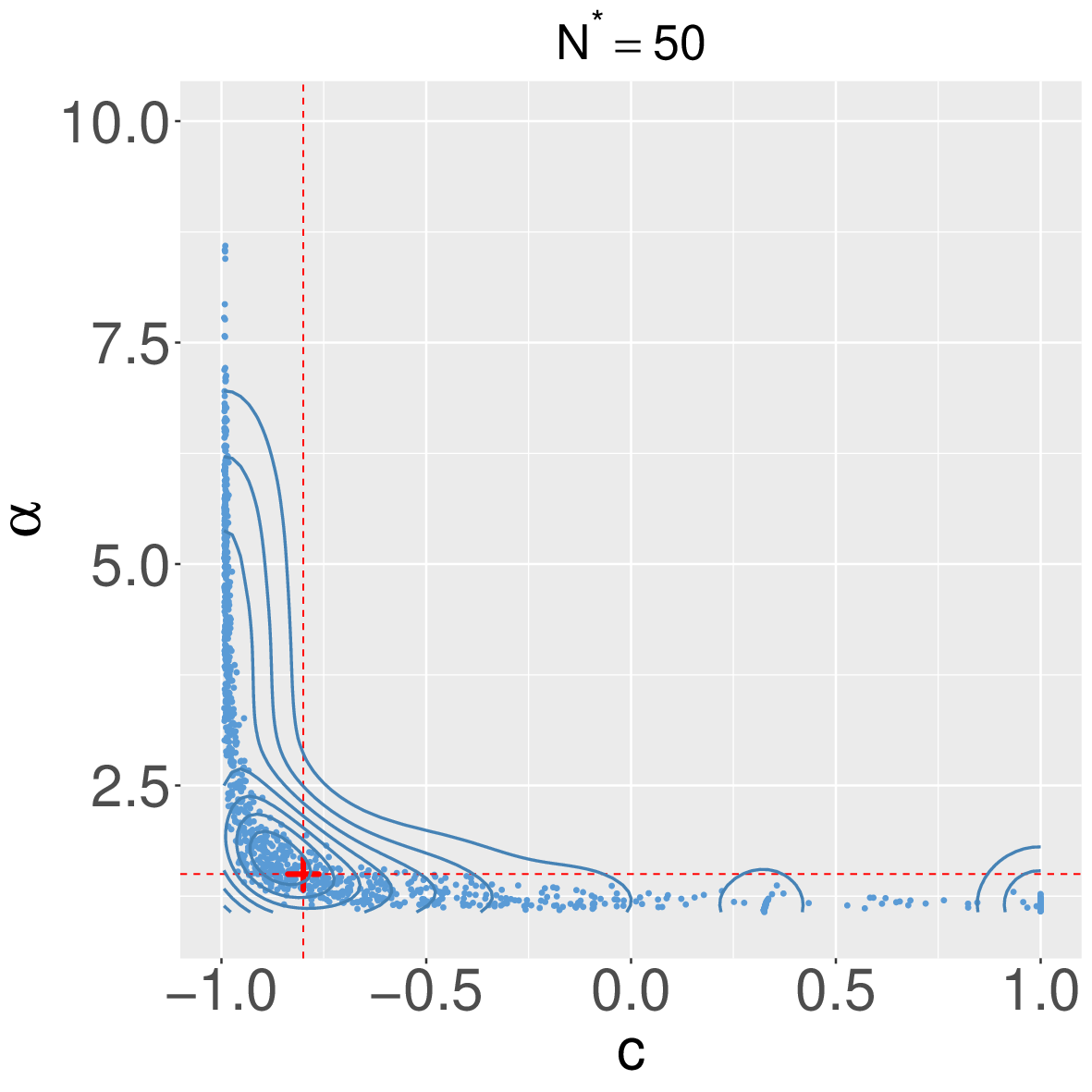}
        \caption{}
        \label{fig:sim_pairs_lag50}
    \end{subfigure}
    \begin{subfigure}[b]{0.3\textwidth}
        \centering
        \includegraphics[scale=0.2]{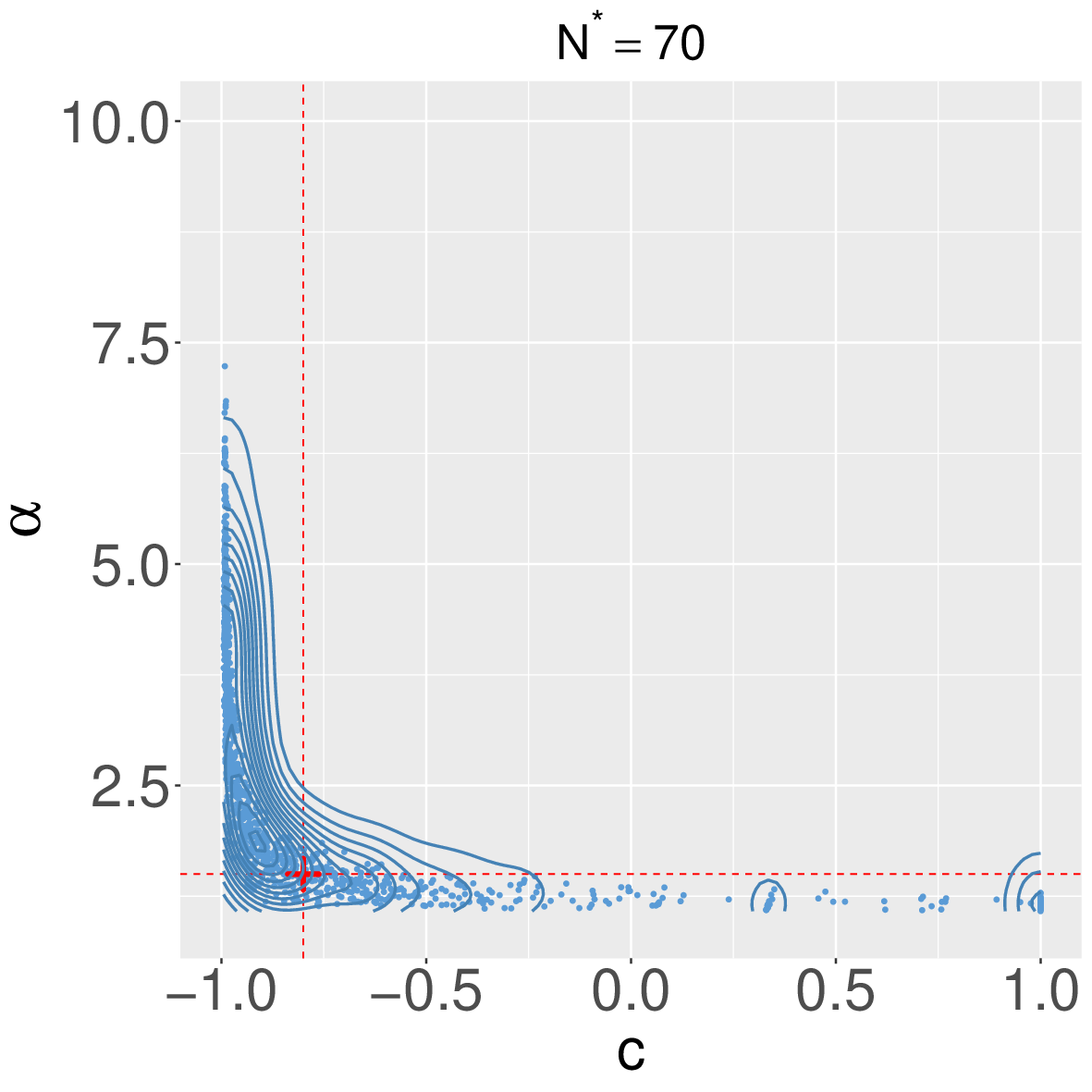}
        \caption{}
        \label{fig:sim_pairs_lag70}
    \end{subfigure}
    \begin{subfigure}[b]{0.3\textwidth}
        \centering
        \includegraphics[scale=0.2]{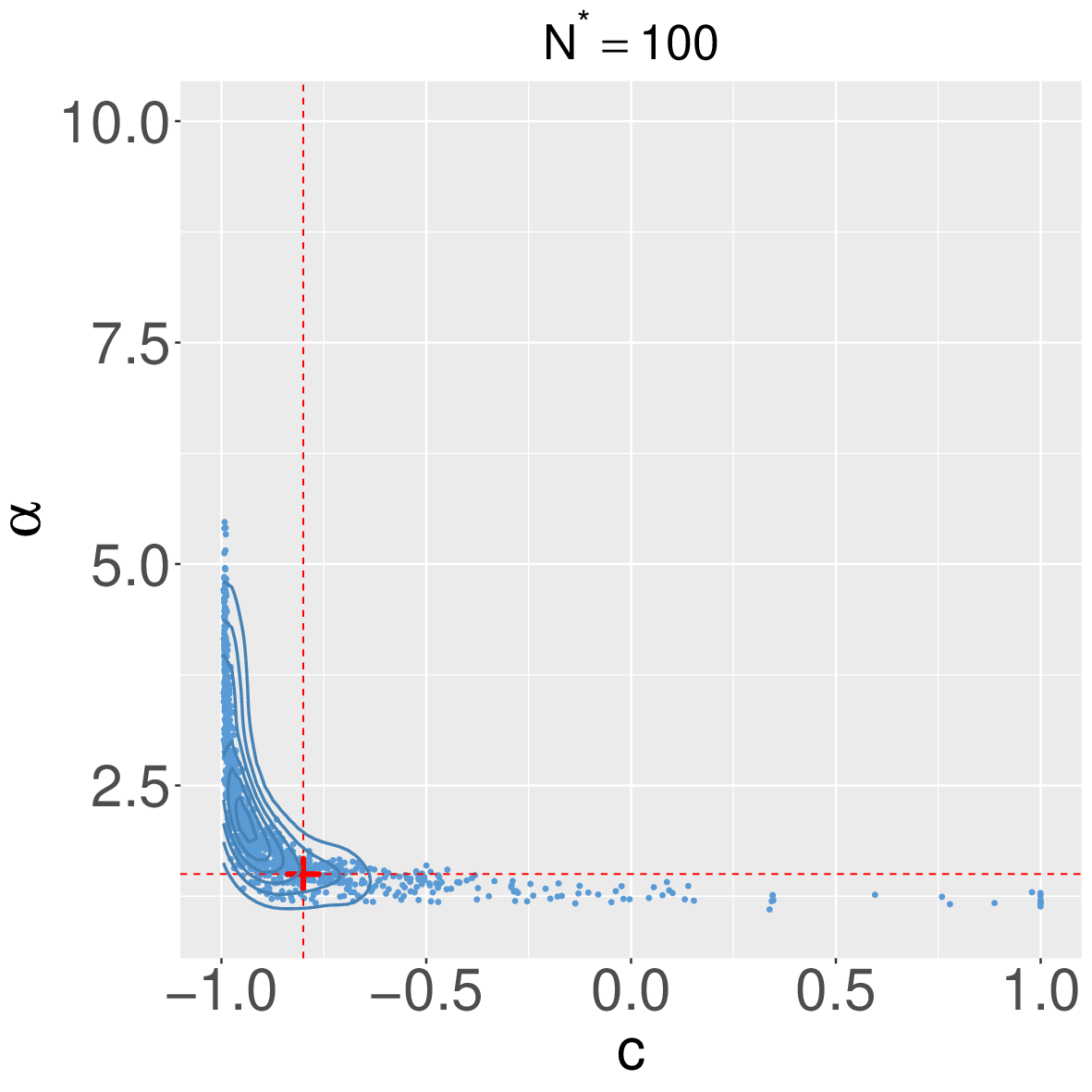}
        \caption{}
        \label{fig:sim_pairs_lag100}
    \end{subfigure}    
             \caption{
    Scatter plots of the estimated parameters $(\hat{\alpha}, \hat{c})$ across 1000 Monte Carlo runs, for different numbers of lags $N^* \in \{10, 30, 40, 50, 70, 100\}$ used in the loss function. The red cross marks the true value $(\alpha_0, c_0) = (1.5, -0.8)$. With few lags, the estimates are scattered along a curve, but they concentrate around the true value once $N^*$ reaches  40.
              }
    \label{fig:simresults-contours1}
\end{figure}

\section{Empirical study}\label{sec:emp}
We will now illustrate our new model and estimation methodology in an empirical study. For this, we will focus on the RE Europe dataset, see \cite{JP2017}, which is a publicly available large-scale dataset, designed for research into a highly renewable European electricity system.
The dataset includes the network model of 
1,494 buses (nodes) in Europe. 
We will use a part of this dataset and focus on 24 nodes of the network in Portugal, which we depict in Figure \ref{fig:MapData}, see 
Figure \ref{fig:map} in particular, and Figure \ref{fig:normA} shows the corresponding heatmap of the column normalised adjacency matrix $\overline{A}$.
We consider an undirected network in the following. 

\begin{figure}[htbp]
    \centering
        \begin{subfigure}[b]{0.3\textwidth}
        \centering
        \includegraphics[trim= 150 0 150 0,clip, scale=0.3]{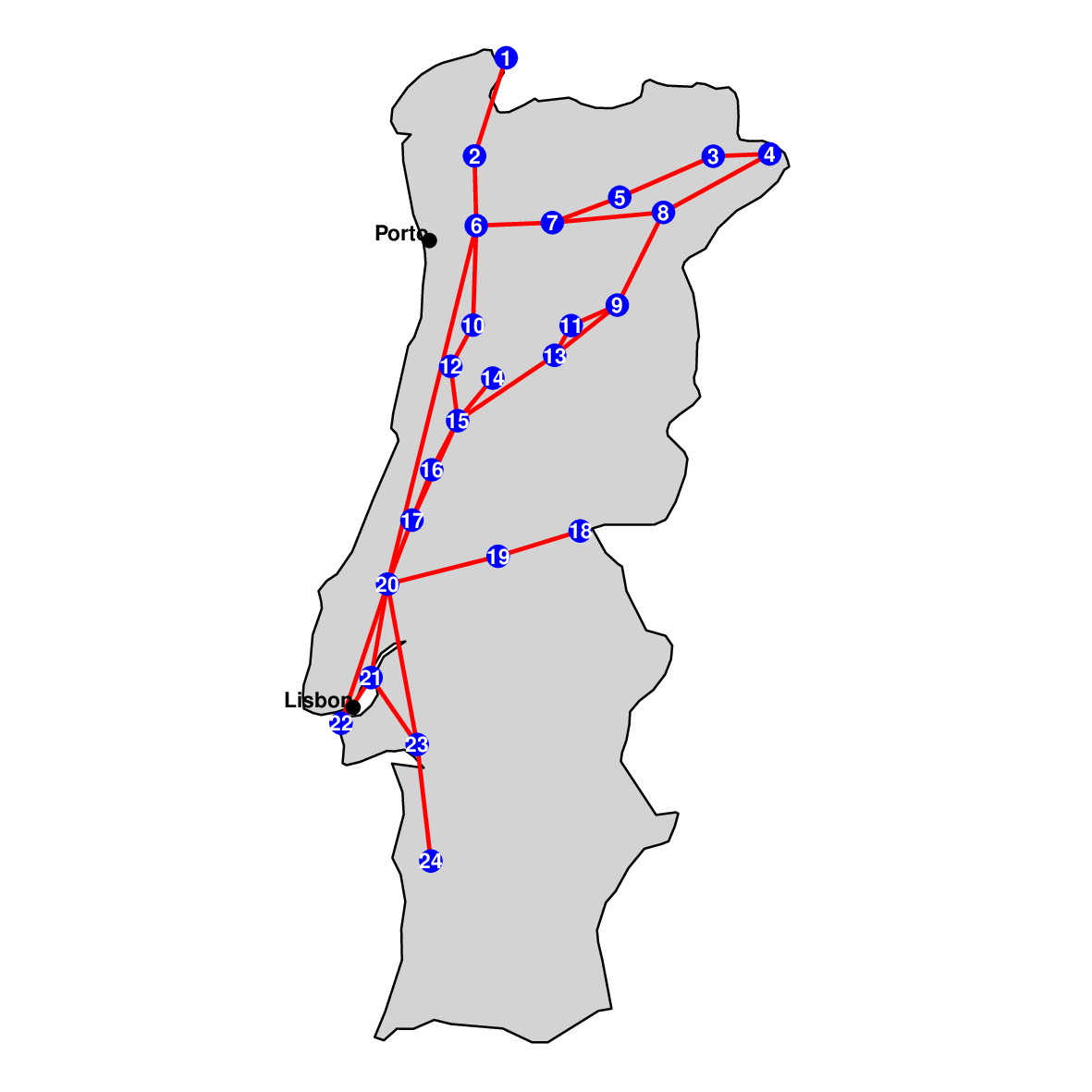}
        \caption{}
        \label{fig:map}
    \end{subfigure}    
    \begin{subfigure}[b]{0.4\textwidth}
        \centering
        \includegraphics[scale=0.3]{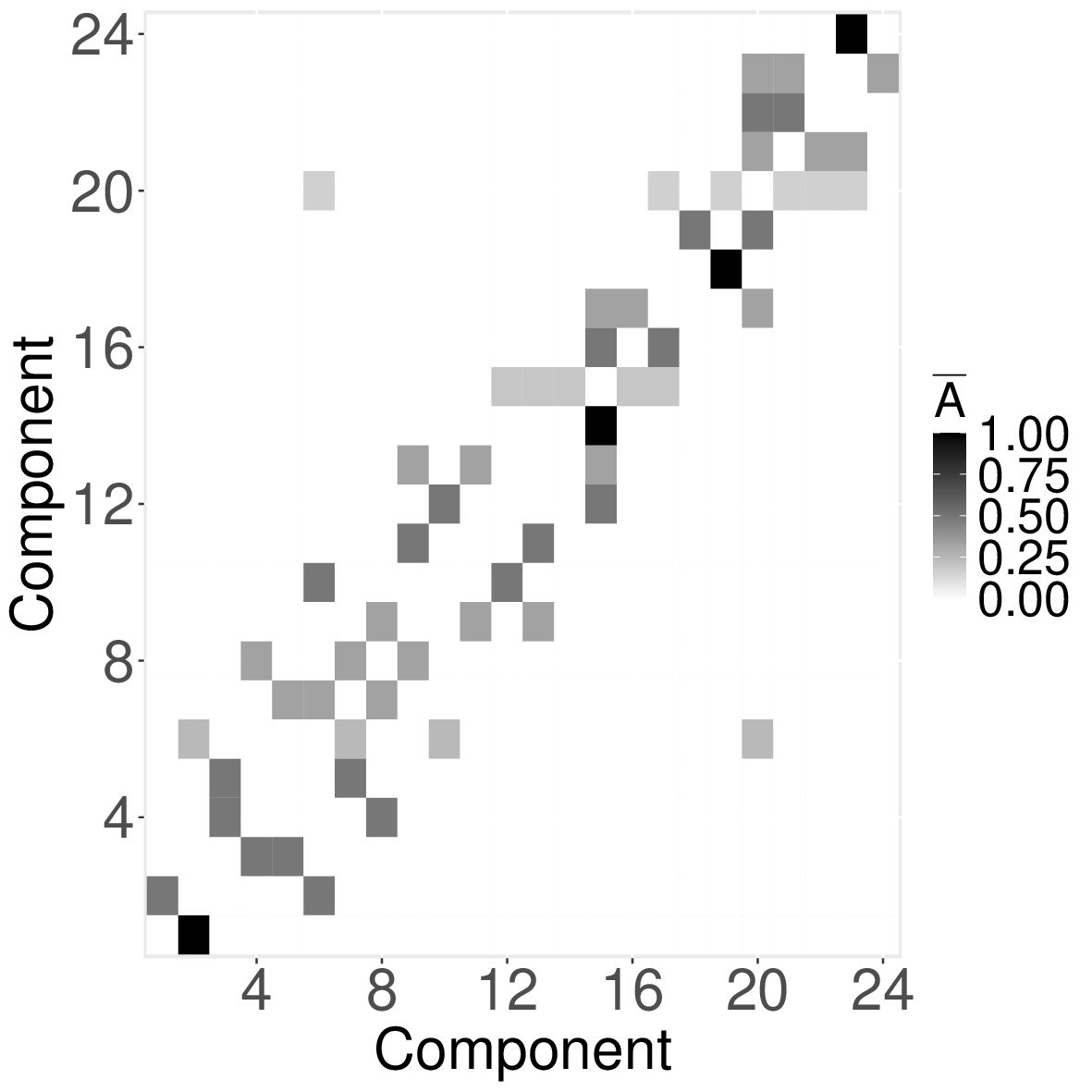}
        \caption{}
        \label{fig:normA}
    \end{subfigure}
             \caption{Graph structure: Figure \ref{fig:map} shows the 24-node network in Portugal; Figure \ref{fig:normA}  contains the heatmap of the column normalised adjacency matrix associated with the graph.}
    \label{fig:MapData}
\end{figure}
In our study, we focus on wind capacity factors (WCFs). The WCFs  were derived from the 
COSMO-REA$6$ data set of wind speeds from the period $2012-2014$ on an approximately  $7\times 7$ $\mathrm{km}^2$ spatial grid. These grid-based WCFs  are then mapped to the nodes as follows, see \cite{JP2017}: A grid cell gets mapped to a particular  node 1) if it is closest to the node compared to all other nodes or 2) the grid cell is closer to the node than any other grid cell. Using this approach, it is possible that a grid cell can be associated to multiple
nodes; in that case,  the contribution from that grid cell was split evenly among all associated nodes. 
WCFs can be considered as the proportion of the maximum power generation at a given time, as such, they take values in $[0,1]$.  
 We have hourly observations ($\Delta=1$ hour) over three years, resulting in
 $N = 26,304$ observations for each of the $d=24$ nodes. 
We depict the heatmap of the 24 hourly time series in Figure \ref{fig:heatmap-original}, the boxplots of their autocorrelation functions in Figure \ref{fig:empacfs-original} and the time series of node 22 (Lisbon) in Figure \ref{fig:Lisbondata-original}. 
 As expected, we find that the data exhibits daily and yearly seasonality and a mild trend. These have been removed using LOESS regression and the deseasonalised and detrended data are depicted in Figures 
  \ref{fig:heatmap}, \ref{fig:empacfs} and \ref{fig:Lisbondata}. The resulting time series are no longer restricted to taking values in the interval $[0,1]$.

\begin{figure}[htbp]
    \centering
    \begin{subfigure}[b]{0.3\textwidth}
        \centering
        \includegraphics[scale=0.19]
        {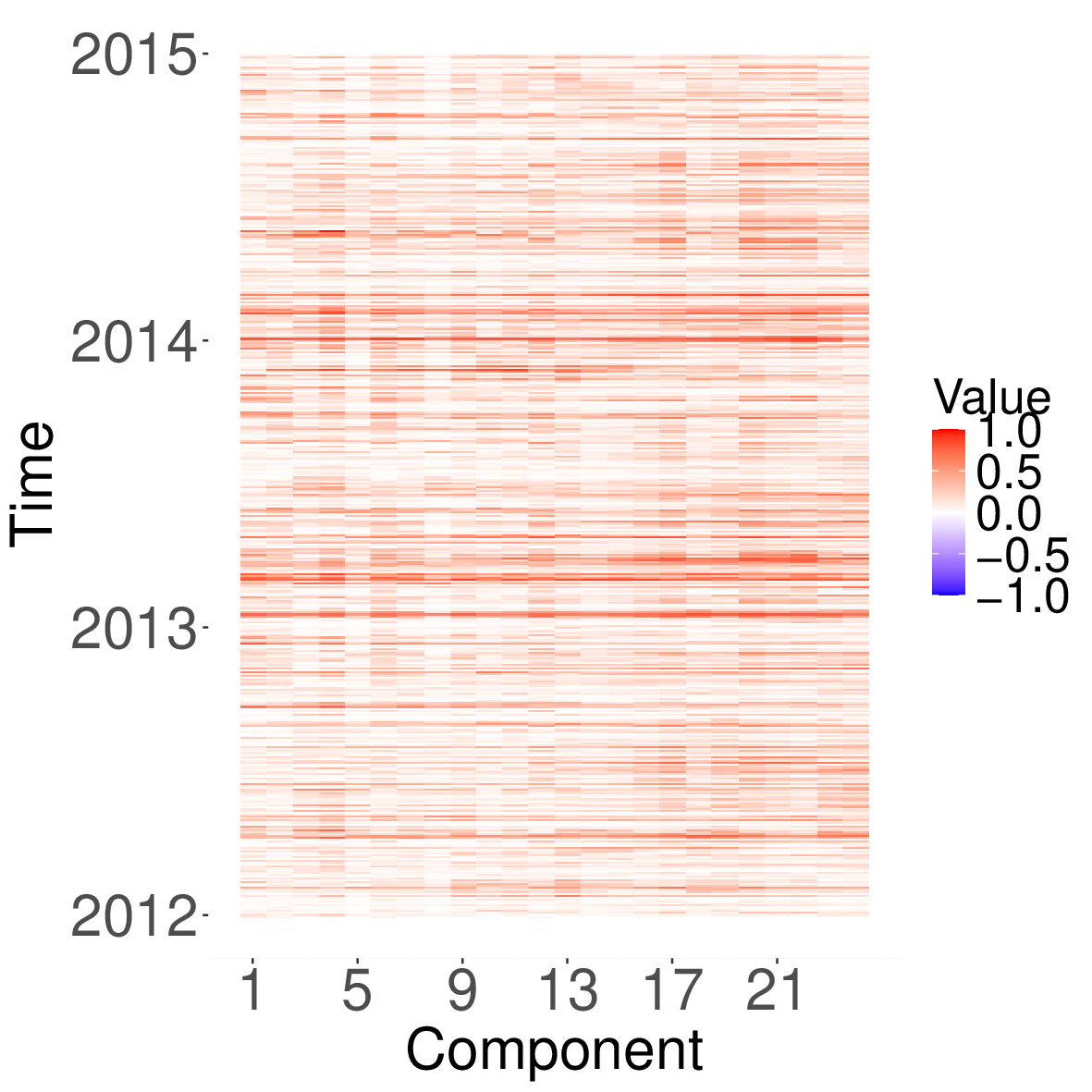}
        \caption{}
        \label{fig:heatmap-original}
    \end{subfigure}    
    \begin{subfigure}[b]{0.3\textwidth}
        \centering
        \includegraphics[scale=0.2]{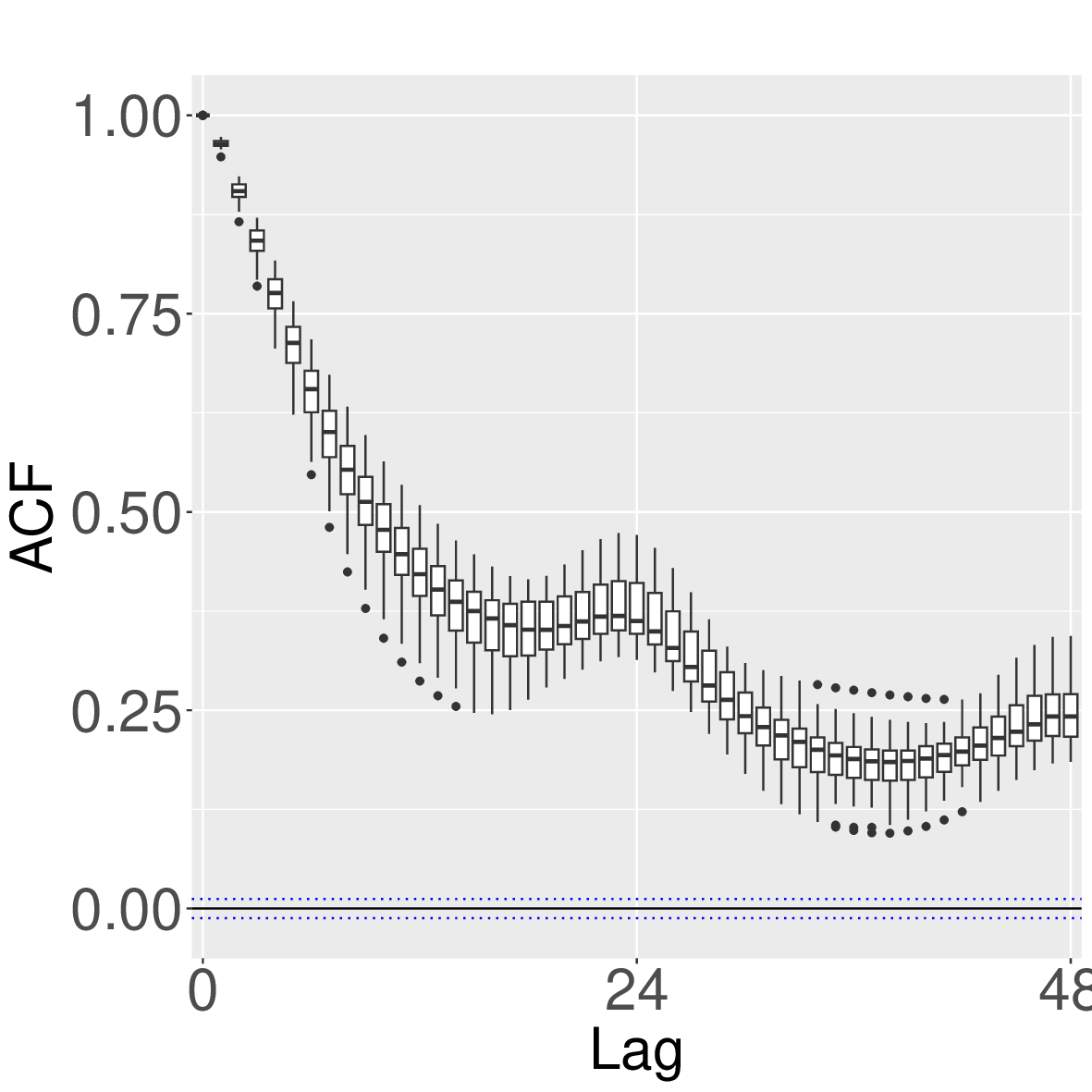}
        \caption{}
        \label{fig:empacfs-original}
    \end{subfigure}
    \begin{subfigure}[b]{0.3\textwidth}
        \centering
        \includegraphics[scale=0.2]{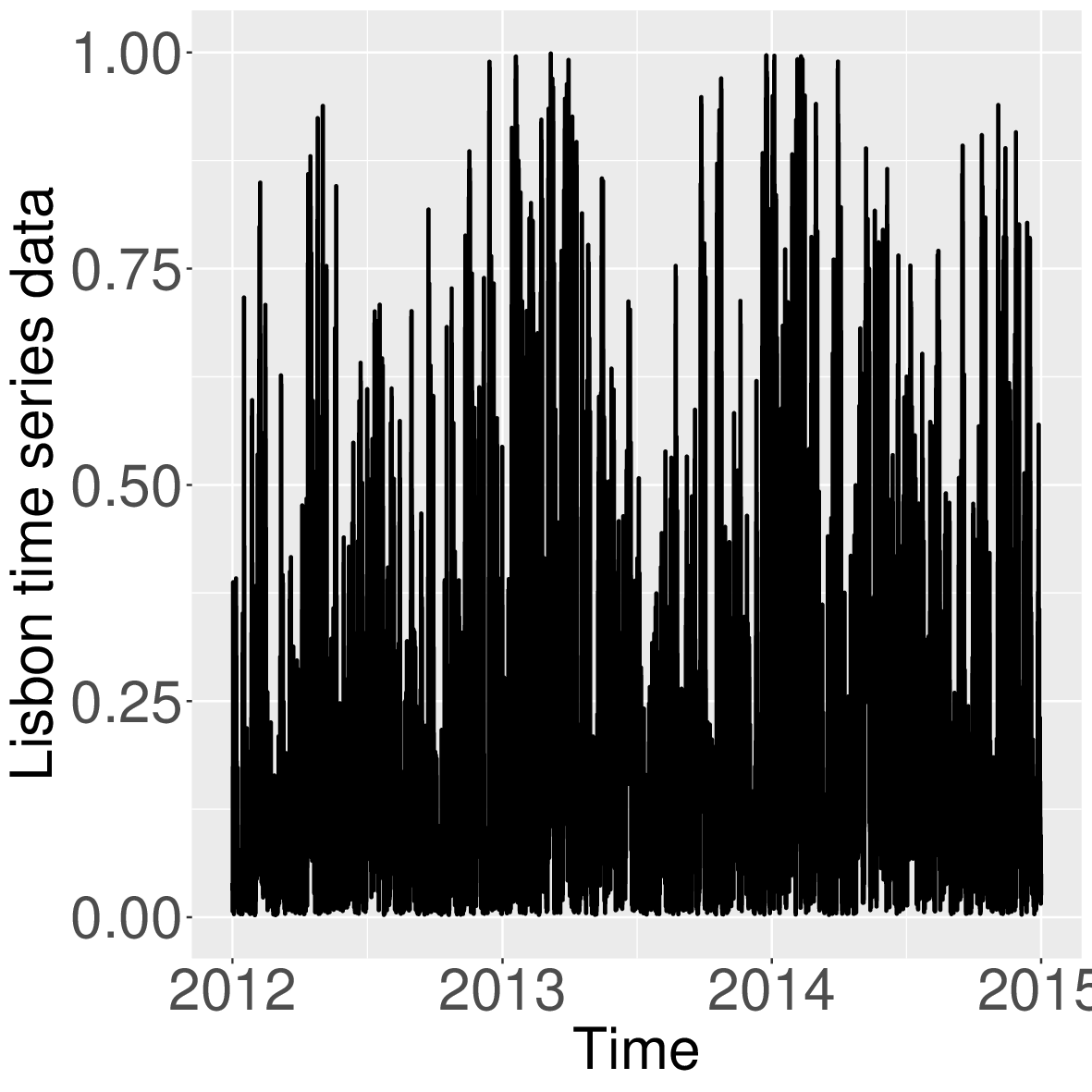}
        \caption{}
        \label{fig:Lisbondata-original}
    \end{subfigure}
\\
        \begin{subfigure}[b]{0.3\textwidth}
        \centering
        \includegraphics[scale=0.19]
        {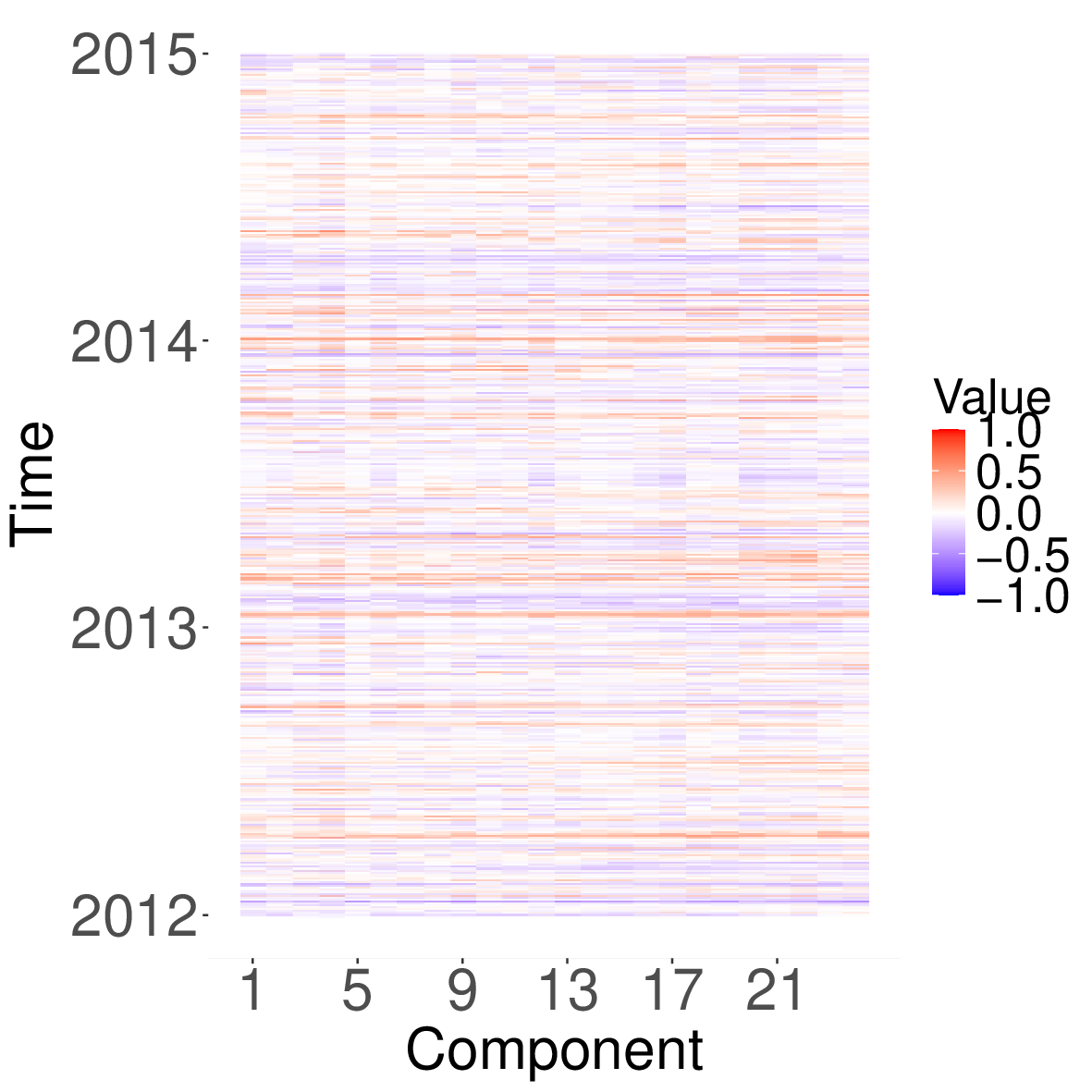}
        \caption{}
        \label{fig:heatmap}
    \end{subfigure}    
    \begin{subfigure}[b]{0.3\textwidth}
        \centering
        \includegraphics[scale=0.2]{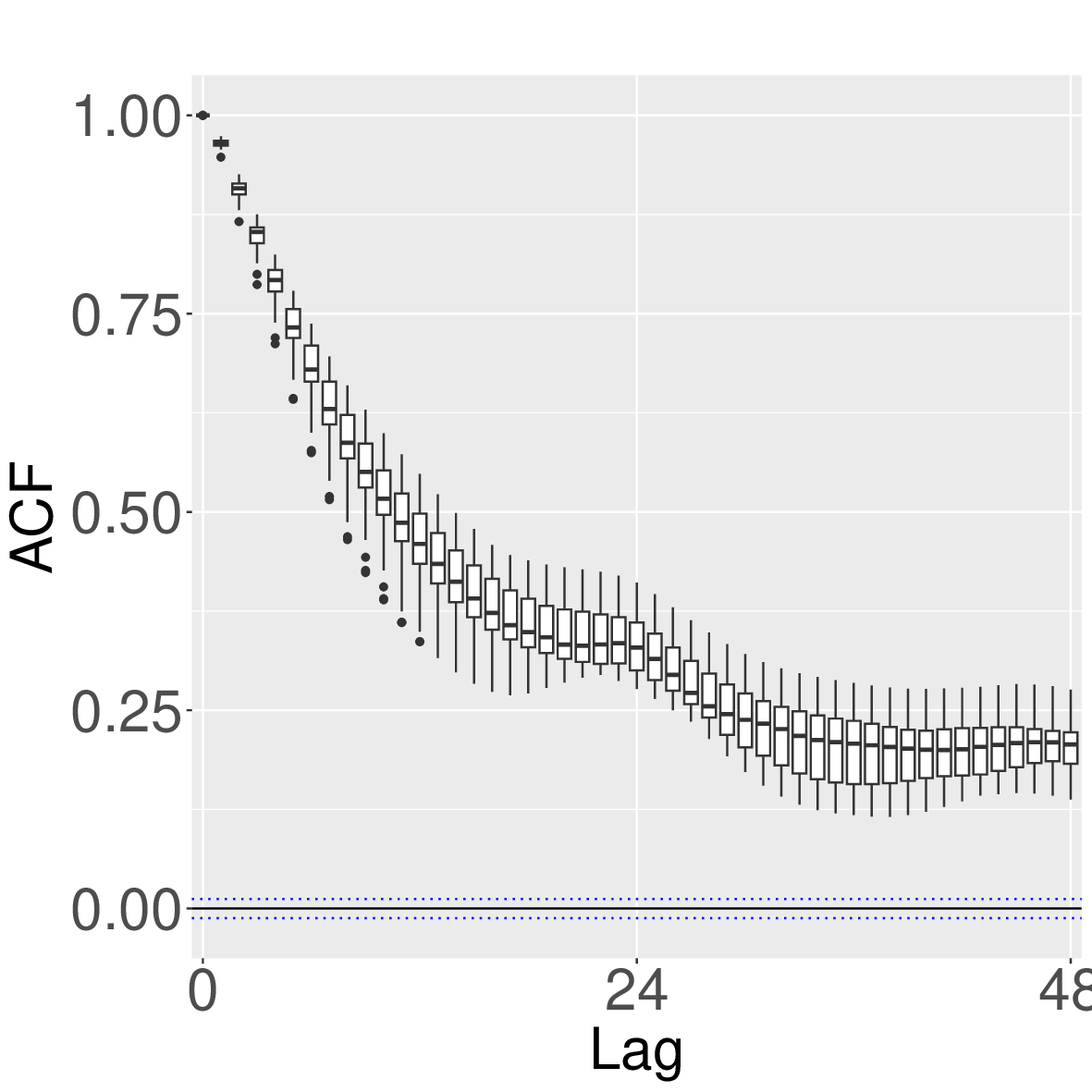}
        \caption{}
        \label{fig:empacfs}
    \end{subfigure}
    \begin{subfigure}[b]{0.3\textwidth}
        \centering
        \includegraphics[scale=0.2]{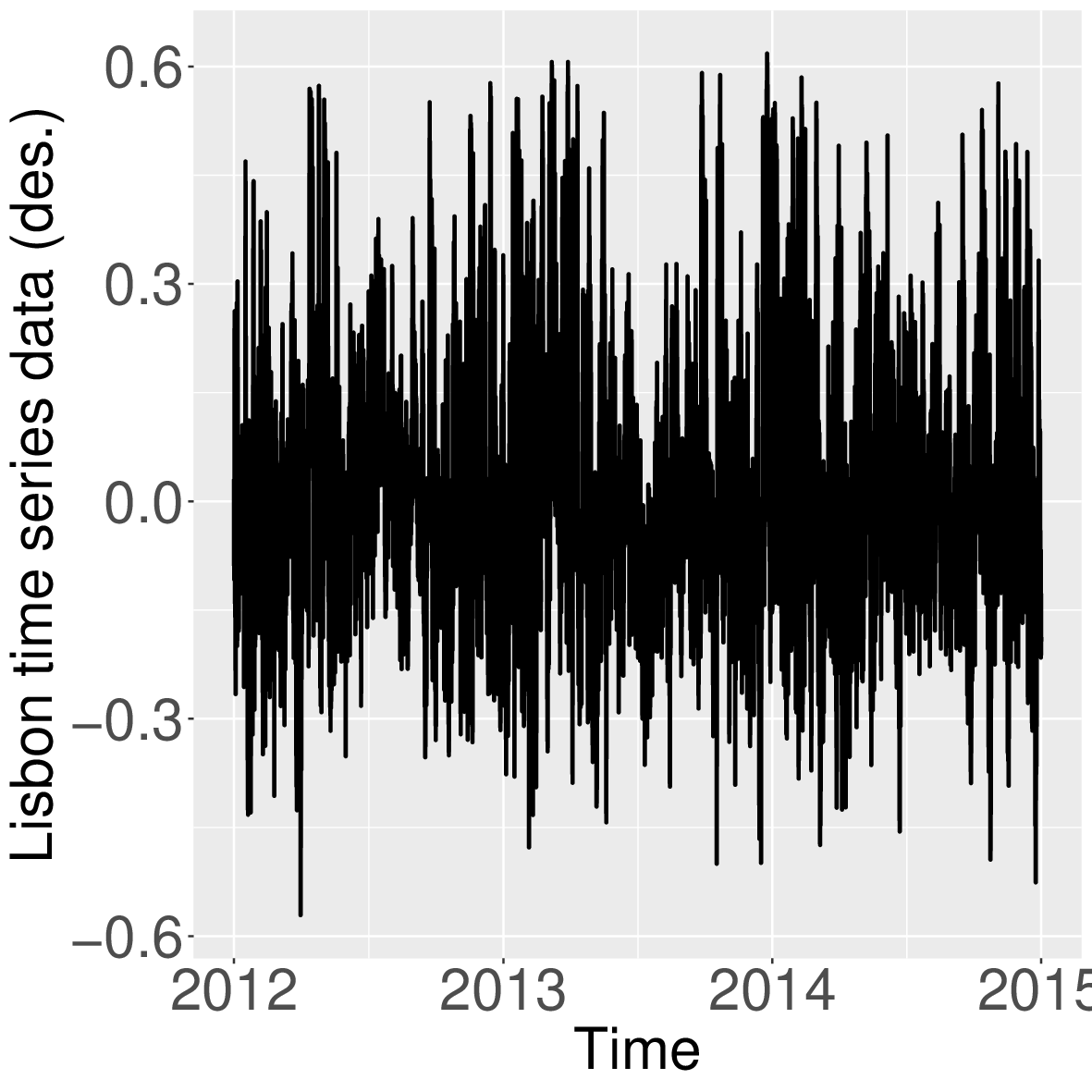}
        \caption{}
        \label{fig:Lisbondata}
    \end{subfigure}
    \caption{Plots of the 24-dimensional time series of wind capacity factors in Portugal: Heatmap of the 24 hourly time series ( \ref{fig:heatmap-original}); boxplots of the ACFs (\ref{fig:empacfs-original}); time series of node 22 (Lisbon) (\ref{fig:Lisbondata-original}). The figures in the second row show the corresponding plots based on the deseasonalised and detrended data: Heatmap of the deseasonalised 24 hourly time series ( \ref{fig:heatmap}); boxplots of the ACFs of the deseasonalised time series (\ref{fig:empacfs}); deseasonalised time series of node 22 (Lisbon) (\ref{fig:Lisbondata}). 
     }
    \label{fig:TimeSeries}
\end{figure}

\begin{figure}[htbp]
    \centering
        \begin{subfigure}[b]{0.3\textwidth}
        \centering
        \includegraphics[scale=0.18]{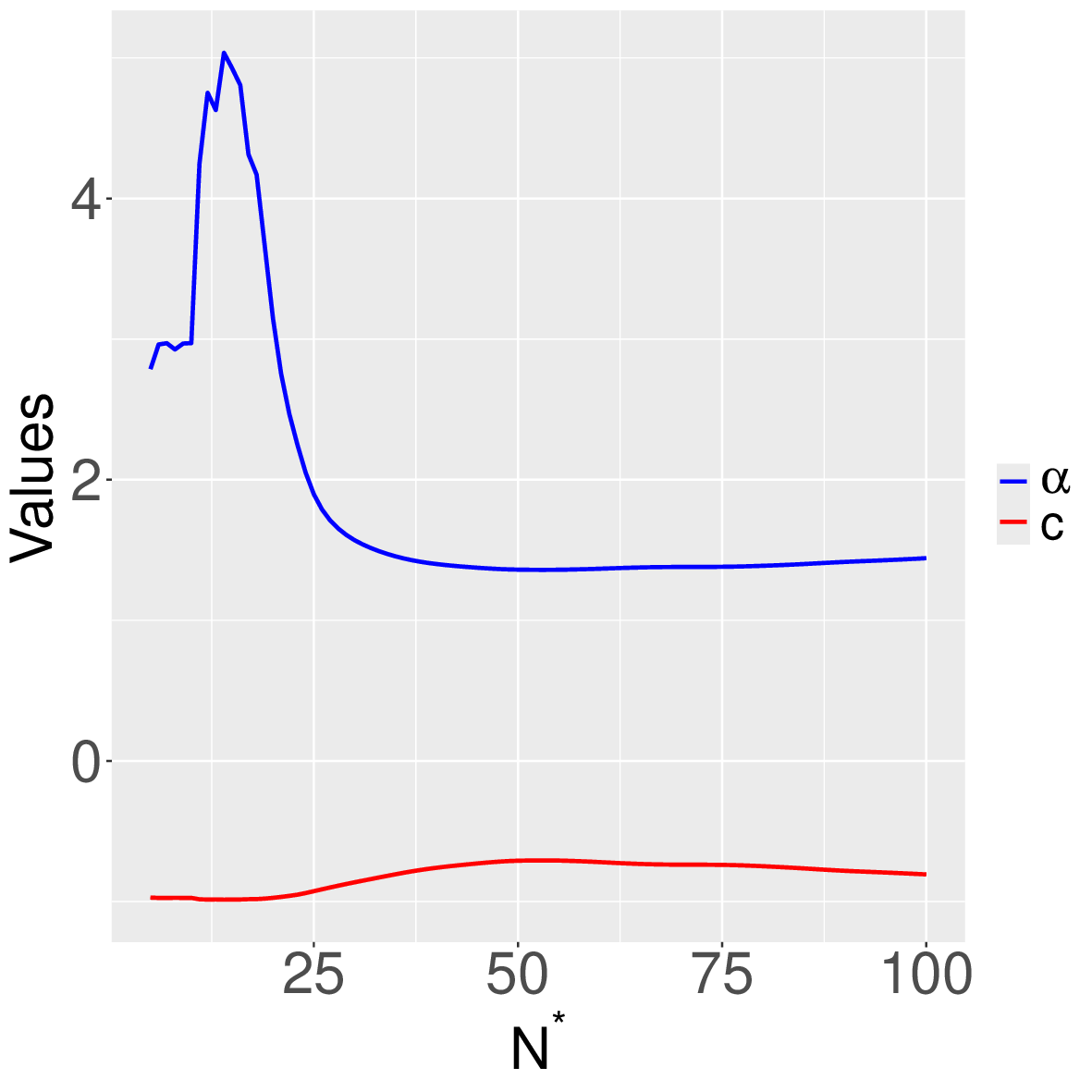}
        \caption{}
        \label{fig:alphac-vec}
    \end{subfigure}    
        \begin{subfigure}[b]{0.3\textwidth}
        \centering
        \includegraphics[scale=0.18]{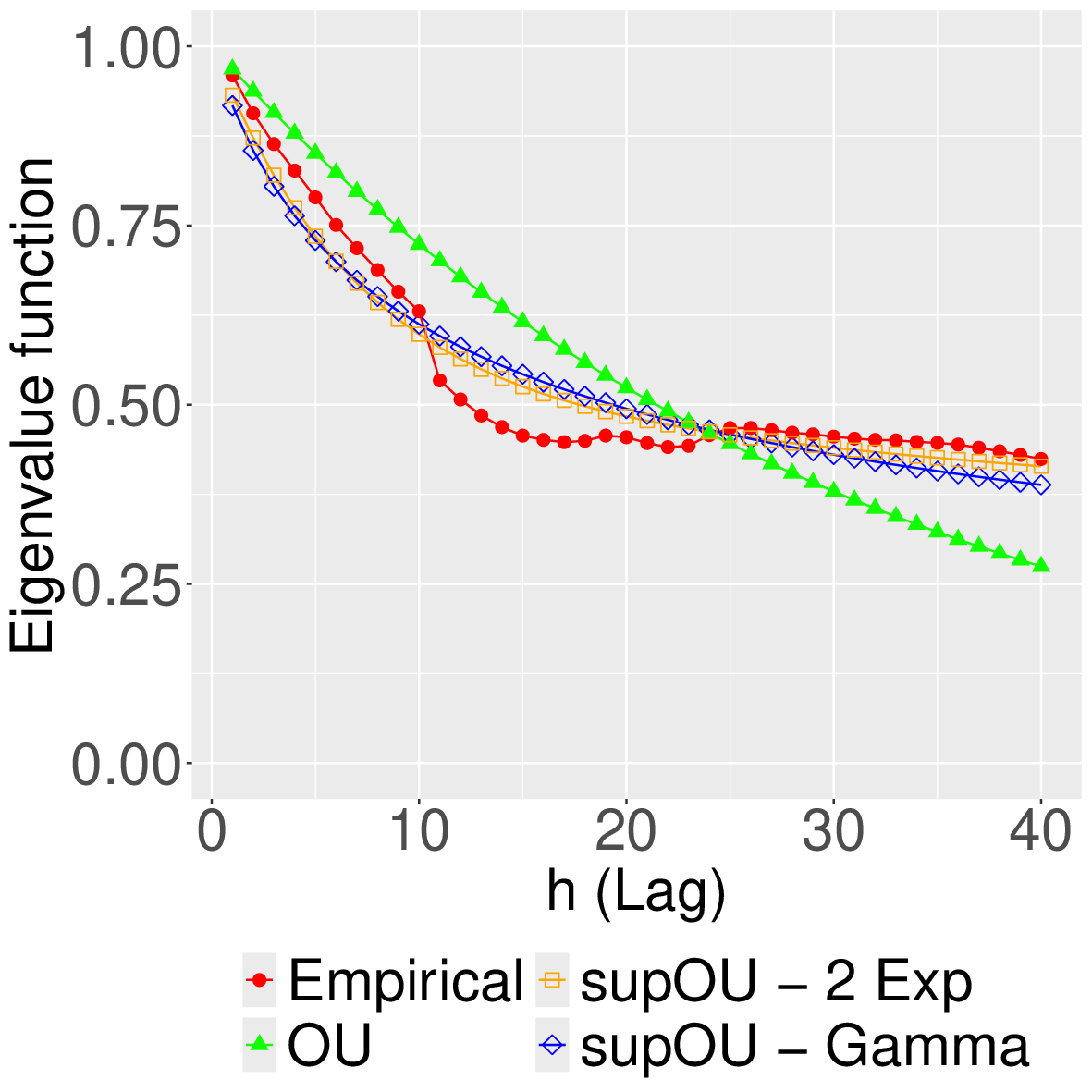}
        \caption{}
        \label{fig:Fit}
    \end{subfigure}
    \begin{subfigure}[b]{0.3\textwidth}
        \centering
        \includegraphics[scale=0.18]{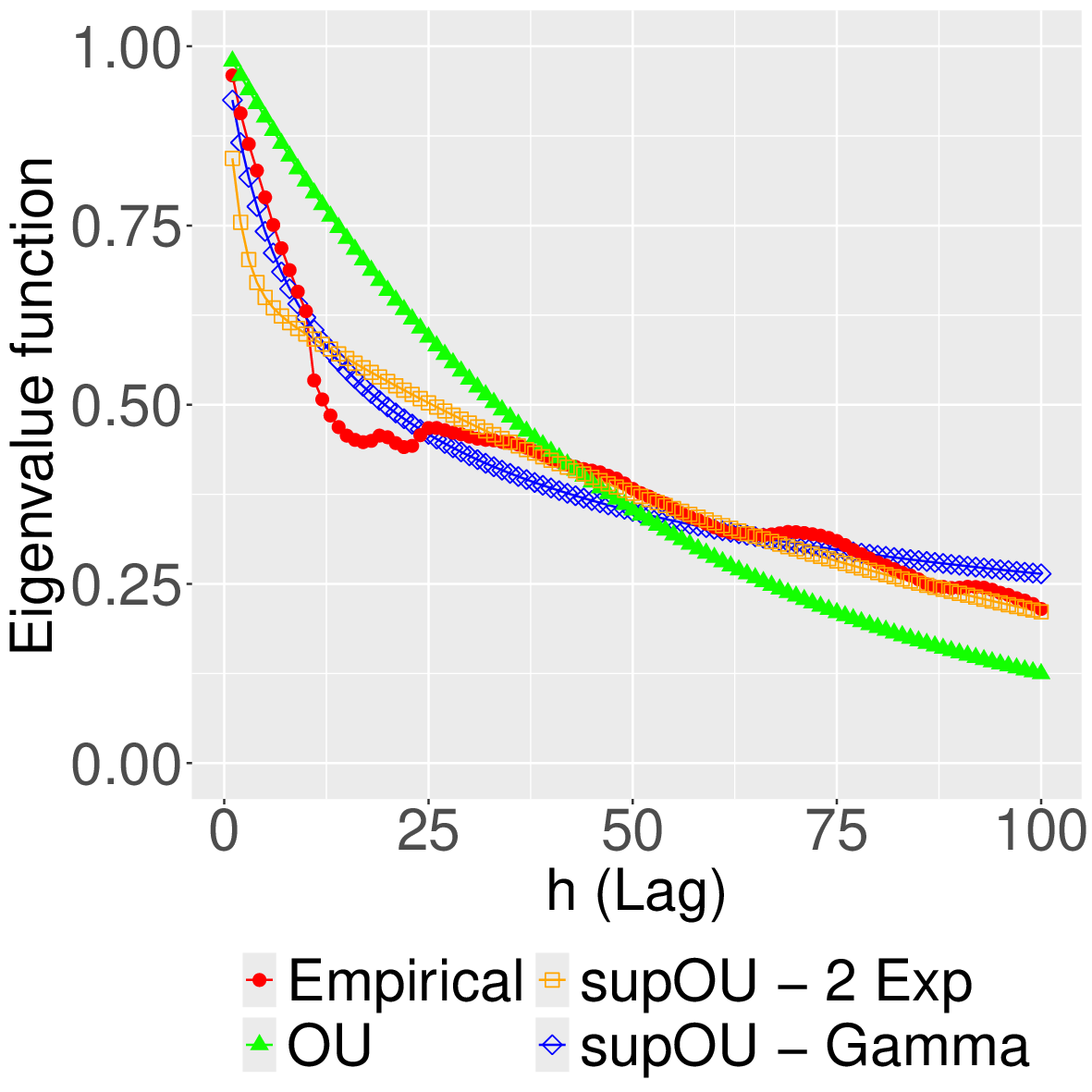}
        \caption{}
        \label{fig:Fit100}
    \end{subfigure}
    \caption{Estimation of $\alpha$ and $c$: Figure \ref{fig:alphac-vec} provides graphs of $\hat \alpha$ and $\hat c$ based on different choices of the lag $h$. Figures \ref{fig:Fit} (based on 40 lags) and \ref{fig:Fit100} (based on 100 lags) show the empirical eigenvalue function (red) together with the fitted counterparts based on a  graph OU  (green) and two graph supOU specifications, the gamma model (blue) and a sum of two weighted exponentials (orange). }
    \label{fig:Fitted}
\end{figure}
In Figure \ref{fig:Fitted}, we provide the results from estimating the parameters  $\alpha$ and $c$. First,  Figure \ref{fig:alphac-vec} compares the estimates of  $\hat \alpha$ and $\hat c$ based on different choices of $N^*$. We observe that at around $N^*=40$, the estimates stabilise at around $\hat \alpha=1.44$ and $\hat c=-0.81$.  
Using the estimate for $\alpha$ and $c$ obtained for $N^*=40$, we then show a graph of the largest real eigenvalue $\hat l(h\Delta)$ (red) and the function $\rho(h\Delta; (\alpha,c)^{\top})$ defined in  \eqref{eq:estfct-LM} (blue). For comparison, we also provide the corresponding fitted curve (green) when a graph OU process was fitted 
and when a supOU process with a sum of two weighted exponentials was used as a kernel function (orange).

We repeat the exercise when we use $N^*=100$ for the original estimation of $\alpha$ and $c$ and show the results in Figure  \ref{fig:Fit100}. 
We observe the following: The parameter estimates of $\alpha$ and $c$ are not particularly sensitive to the choice of  $N^*$ as soon as a reasonable length (here $N^*\geq 40$) is chosen. We note that the exponentially decaying function (green) coming from the fitted graph OU process decays too fast and provides an inferior model fit compared to the fitted graph supOU process (blue and orange). When estimating the Gamma-distribution-based graph supOU model, the parameter estimate indicated that we are in a long-memory regime. However, we note that also a weighted sum of exponentials seems to provide an acceptable fit, which is also superior to the fit provided by the graph OU model. 
When comparing the results in Figures \ref{fig:Fit} (based on $N^*=40$) and \ref{fig:Fit100} (based on $N^*=100$), we note that the findings are very similar, both showcasing the flexibility of the graph supOU model over the earlier considered graph OU process. We note that confidence intervals for the parameter estimates can be constructed via parametric bootstrap, as in our simulation study, see Figure \ref{fig:p_vio}, or via the derived asymptotic theory in a short-memory setting.
\begin{figure}[htbp]
    \centering
        \begin{subfigure}[b]{0.3\textwidth}
        \centering
        \includegraphics[scale=0.2]{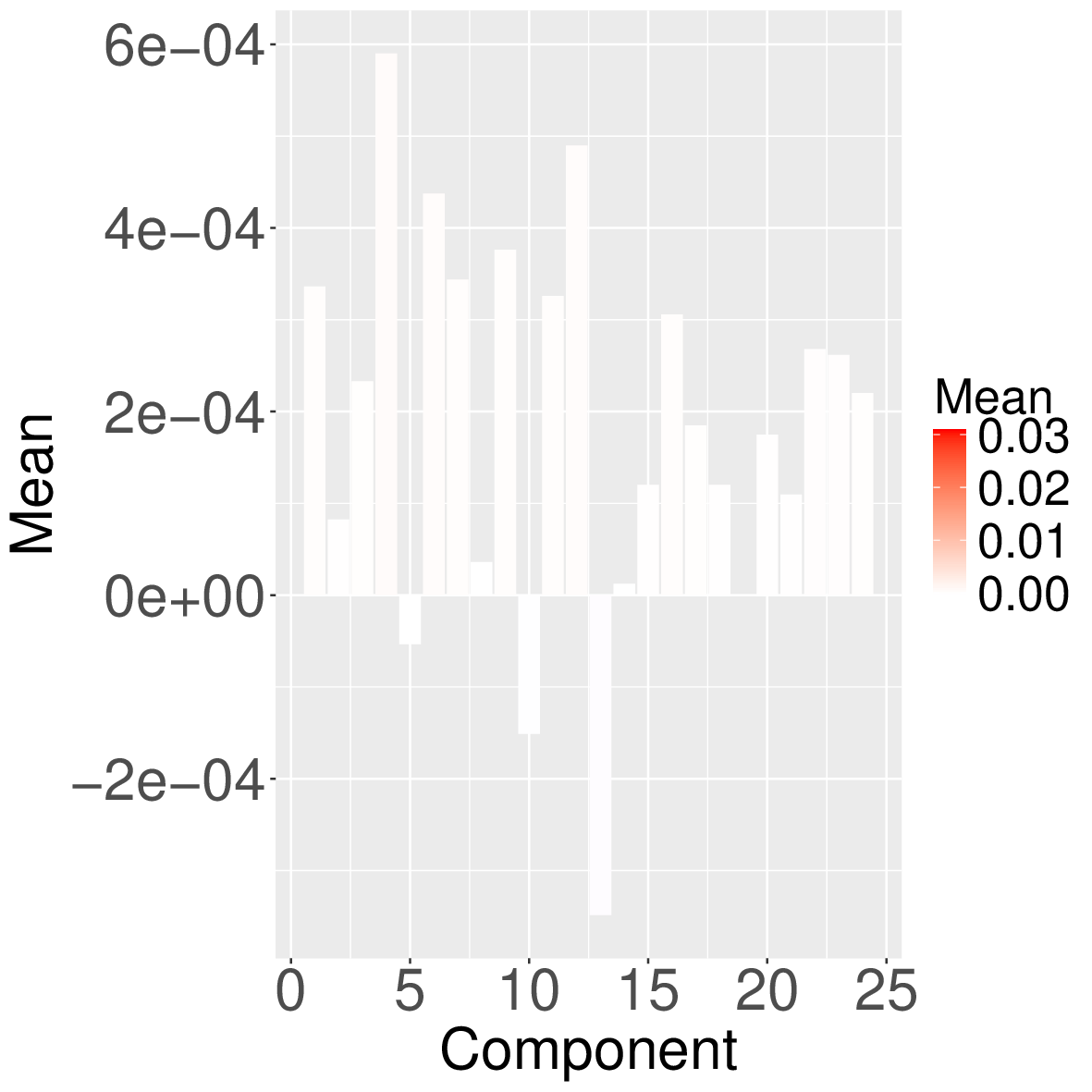}
        \caption{}
        \label{fig:estmean}
    \end{subfigure}    
    \begin{subfigure}[b]{0.3\textwidth}
        \centering
        \includegraphics[scale=0.2]{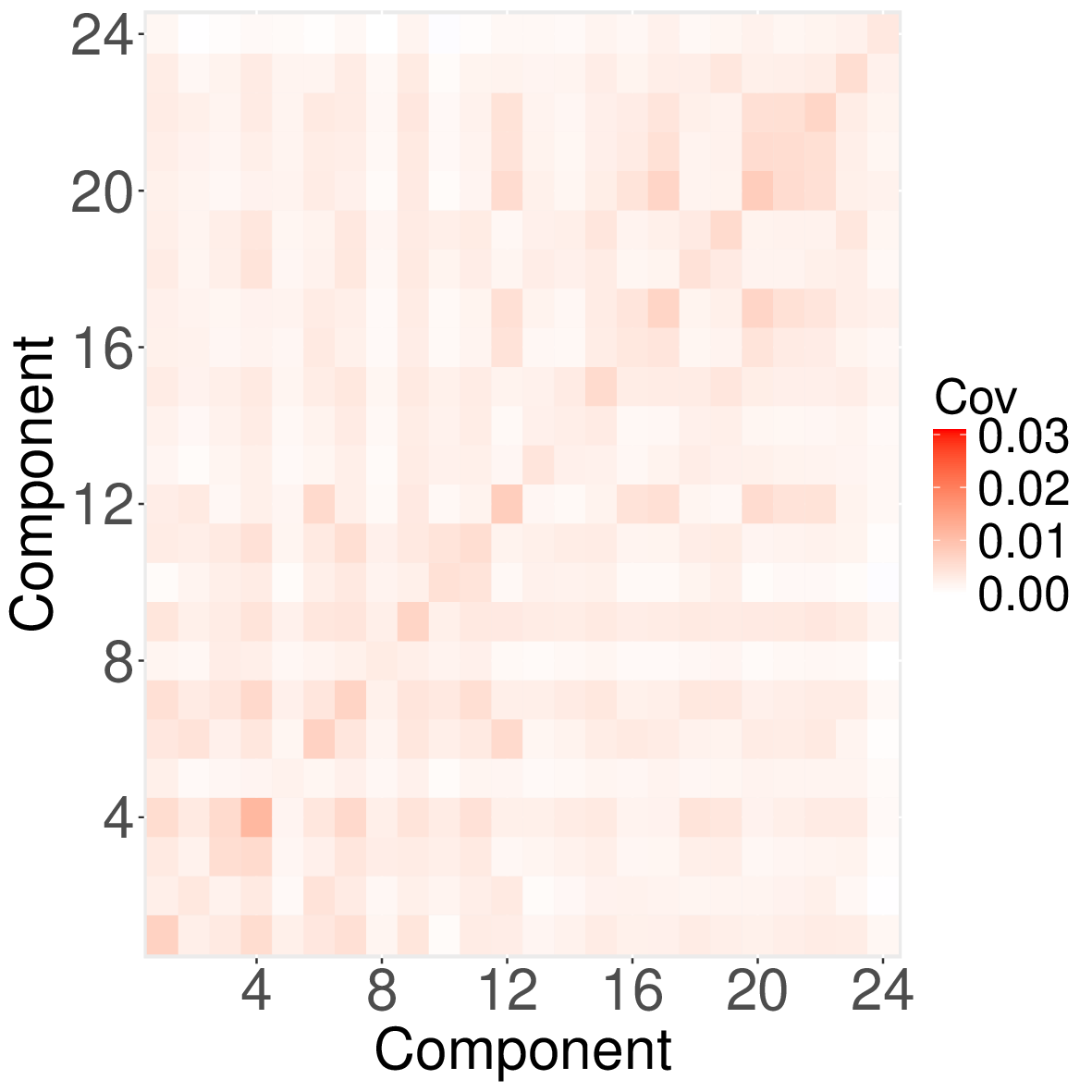}
        \caption{}
        \label{fig:estvar}
    \end{subfigure}
    \begin{subfigure}[b]{0.3\textwidth}
        \centering
        \includegraphics[scale=0.2]{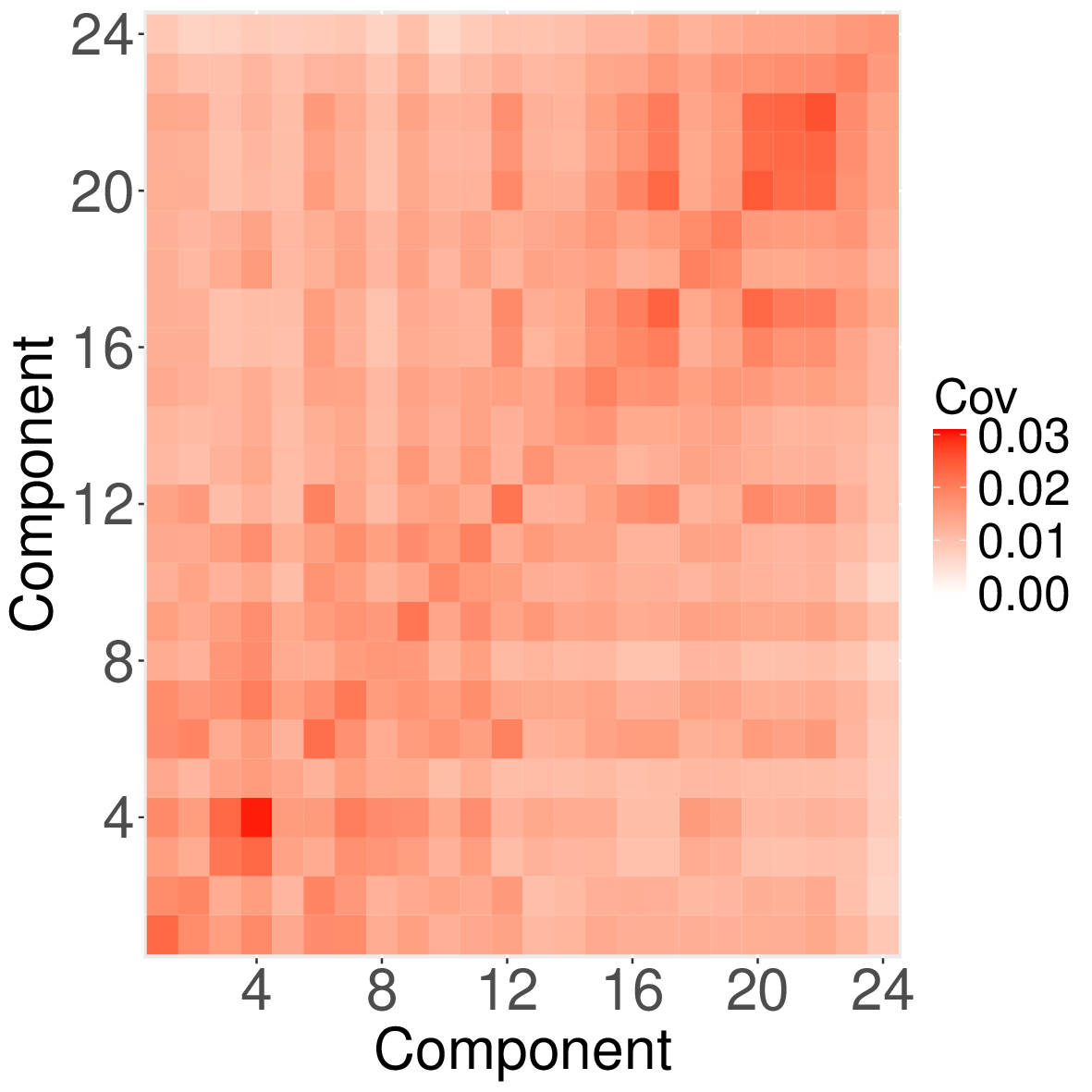}
        \caption{}
        \label{fig:empvar}
    \end{subfigure}
         \caption{Estimated mean and variance of the L\'{e}vy basis: Figure \ref{fig:estmean} Estimated mean $\widehat{\mu_L}$, Figure  \ref{fig:estvar} estimated variance $\widehat{{\boldsymbol \sigma}_L^2}$, and Figure  \ref{fig:empvar} empirical covariance $\widehat{\mathrm{Var}(X)}$. }
    \label{fig:Fitted2}
\end{figure}

Finally, we estimate the mean and variance of the driving L\'{e}vy basis. Figure \ref{fig:estmean} shows the estimated mean $\widehat{\mu_L}$, which is close to 0. Figure  \ref{fig:estvar} depicts the estimated variance matrix $\widehat{{\boldsymbol \sigma}_L^2}$ showing very low covariances throughout, and Figure  \ref{fig:empvar} provides a heatmap of the empirical covariance $\widehat{\mathrm{Var}(X)}$ for comparison.

\section{Conclusion and outlook}\label{sec:conclusion}
 This article proposed the graph supOU process as a new and flexible modelling framework for high-dimensional network stochastic processes, that can handle short- and long-memory settings and accommodate a wide range of marginal distributions. We have further developed an estimation methodology that is easy to implement and avoids high-dimensional optimisation procedures. We have proven the consistency of the methodology and have also considered the classical GMM framework and derived the corresponding asymptotic theory. 
 The article is accompanied by  the {\tt R} code that reproduces the simulation and empirical study.  

 In future work, various generalisations could be considered further. For example, we have explored the case of undirected graphs; however, the extension to the directed case is straightforward and follows the same arguments as the ones presented here. Also, similarly to the work of \cite{Bar11}, we could consider defining graph-based matrix-valued processes. Such processes would find applications in multivariate stochastic volatility models in finance and beyond.

\begin{appendix}

\section{GMM framework and asymptotic theory} \label{app:GMM}
This appendix collects the full GMM framework underlying
Theorems~\ref{th:consistency-gamma} and~\ref{th:AN-main} in the main text,
including all intermediate assumptions, the weak dependence theory for
graph supOU processes, and the central limit theorem for the moment
function.
The arguments of the proofs follow the GMM framework  presented in \cite{Mat} and extend ideas by  \cite{CS} and \cite{BLSA} for mixed moving average and trawl processes, respectively.

\subsection{General GMM framework}
We present the GMM framework for graph supOU processes in full generality,
not restricted to the parametric setting of Section~\ref{sec:bridge}.
The moment function $f$ and sample counterpart $f_N$ are defined in
equations~\eqref{eq:f}--\eqref{eq:samplem} of the main text, and the
GMM estimator $\hat{\xi}_0^{N}$ in~\eqref{eq:Estimator}.

\subsection{Assumptions for consistency}
We will review the  assumptions postulated by  \cite{Mat}.
Our Assumptions \ref{A1}, \ref{A2}, and \ref{A3}, corresponds to \cite[Assumptions 1.1-1.3]{Mat}.

\begin{assumption}\label{A1}
\begin{enumerate}
    \item[(i)]  $\mathrm{E}(f(X_{t:t+m},\xi))$ exists and is finite for all $\xi\in\Xi$ and all $t$. 
    \item[(ii)] Let $g_t(\xi):=\mathrm{E}(f(X_{t:t+m},\xi))$. There exists $\xi_0\in\Xi$ such that $g_t(\xi)=0$ for all $t$ if and only if $\xi=\xi_0$.
\end{enumerate}
\end{assumption}
\begin{remark}\label{rem:ident}
Assumption \ref{A1} (i) holds by the construction of $f(X_{t:t+m},\xi)$ under a suitable second-order moment assumption; the identifiability condition (ii) must be verified for each specific parametric model, and follows directly for the Gamma and exponential mixture models of
Section~\ref{sec:bridge}.
\end{remark}

The next assumption concerns the convergence of the sample moments to the population moments. Recall, that $f_{N}(\xi)$ denotes the vector of the sample moments, see \eqref{eq:samplem}; we denote by $$g_N(\xi):=\frac{1}{N-m}\sum_{t=1}^{N-m}g_t(\xi),$$ the corresponding vector of the population moments. The $i$th component of these two $q$-dimensional vectors are denoted by $f_{N}^{(i)}(\xi)$ and $g_N^{(i)}(\xi)$, respectively.

\begin{assumption}\label{A2}
Suppose that $\sup_{\xi \in \Xi}|f_{N}^{(i)}(\xi)-g_N^{(i)}(\xi)|\stackrel{\mathrm{P}}{\longrightarrow} 0$ for all $i=1, \ldots, q$, as $N\to \infty$.
\end{assumption}
Moreover, the following assumption regarding the weighting matrix is also required.

\begin{assumption}\label{A3}
There is a  sequence of deterministic, positive definite matrices denoted by $\overline{\mathbf{W}}_{N}$ such that  $\mathbf{W}_{N}-\overline{\mathbf{W}}_{N}\stackrel{\mathrm{P}}{\longrightarrow} 0$, as $N\to \infty$.
\end{assumption}
Then we have the following result.

\begin{theorem}\label{th:consistency} Suppose that the graph supOU process has finite second moments and that the Assumptions \ref{A1}, \ref{A2}, and \ref{A3} hold. Then the GMM estimator $\hat{\xi}_0^{N}$,  defined in equation \eqref{eq:Estimator}, is weakly consistent, i.e.
$$
\hat{\xi}_0^{N} \stackrel{\mathrm{P}}{\longrightarrow} \xi_0, \quad \mathrm{as}\, 
N\to \infty.$$
 \end{theorem}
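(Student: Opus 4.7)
The plan is to verify that the standard extremum estimator consistency argument from \cite{Mat} applies in this setting, exploiting strict stationarity of the graph supOU process. The key simplifying observation is that, by the strict stationarity of mixed moving average processes (noted after Theorem \ref{th:supOU}), the map $g_t(\xi)=\mathrm{E}(f(X_{t:t+m},\xi))$ is independent of $t$, so $g_N(\xi)\equiv g(\xi)$ for every $N$ and the analysis reduces to a time-invariant population moment condition $g(\xi)=0$ characterising $\xi_0$ by Assumption \ref{A1}(ii).

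I would then define the sample criterion $Q_N(\xi):=f_N(\xi)^{\top}\mathbf{V}_N f_N(\xi)$ and the (random) limiting criterion $\overline Q_N(\xi):=g(\xi)^{\top}\overline{\mathbf{V}}_N g(\xi)$, and proceed in three steps. Step 1 (uniform convergence of objectives): by Assumption \ref{A2} and the equivalence of norms on $\mathbb{R}^q$, $\sup_{\xi\in\Xi}\|f_N(\xi)-g(\xi)\|\stackrel{\mathrm{P}}{\to}0$. Combining this with Assumption \ref{A3}, writing $Q_N-\overline Q_N$ as a polynomial in $f_N-g$, $g$, and $\mathbf{V}_N-\overline{\mathbf{V}}_N$, and using the triangle inequality together with (eventual) boundedness of $g$ on $\Xi$ and of $\overline{\mathbf{V}}_N$ in operator norm, yields $\sup_{\xi\in\Xi}|Q_N(\xi)-\overline Q_N(\xi)|\stackrel{\mathrm{P}}{\to}0$. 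Step 2 (identification and well-separation): the positive definiteness of $\overline{\mathbf{V}}_N$ and Assumption \ref{A1}(ii) give $\overline Q_N(\xi_0)=0$ and $\overline Q_N(\xi)>0$ for $\xi\neq\xi_0$; continuity of $g$ together with compactness of $\Xi$ then produces, for every $\varepsilon>0$, a deterministic $\delta(\varepsilon)>0$ with $\inf_{\|\xi-\xi_0\|>\varepsilon}\overline Q_N(\xi)\geq\delta(\varepsilon)$ for $N$ large. Step 3: applying the standard argmin consistency lemma (see, e.g., the proof of the consistency result in \cite{Mat}) to the uniformly close criteria $Q_N$ and $\overline Q_N$, with $\xi_0$ a well-separated minimiser of $\overline Q_N$, delivers $\hat{\xi}_0^N\stackrel{\mathrm{P}}{\to}\xi_0$.

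The main obstacle is conceptual rather than computational: Assumption \ref{A2} is taken as a hypothesis here, so the technically substantive step — establishing a uniform weak law for the empirical moments of a (possibly long-memory) graph supOU process — is in effect outsourced to the weak-dependence properties developed elsewhere in the paper and to \cite{CS}. Given A2 and A3, the remaining work is the routine chaining above, modulo the mild regularity of $g$ in $\xi$ and compactness of $\Xi$, which are standard in the GMM framework of \cite{Mat} and implicit in the statement.
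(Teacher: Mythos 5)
Your argument is correct and is essentially the paper's own proof: the paper simply invokes \cite[Theorem 1.1]{Mat}, and the uniform-convergence/well-separation/argmin chaining you write out is precisely the standard argument behind that cited result. You rightly flag that well-separation of $\xi_0$ requires compactness of $\Xi$ and continuity of $g$, which are not explicit hypotheses of the theorem as stated (they appear later as Assumption \ref{A4} and the surrounding discussion) but are implicit in the framework of \cite{Mat}.
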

\begin{proof}[Proof of Theorem \ref{th:consistency}]
 The weak consistency of $\hat{\xi}_0^{N}$ follows directly  from  \cite[Theorem 1.1]{Mat}.
\end{proof}

\cite{Mat} points out that Assumptions \ref{A2} and \ref{A3} are rather high-level; he shows that Assumption \ref{A2} can be replaced with the sufficient, more basic, assumptions \cite[Assumptions 1.4-1.6]{Mat}, which we refer to as Assumptions \ref{A4}, \ref{A5}, and \ref{A6} below.

\begin{assumption}\label{A4}
The parameter space $\Xi$ is compact and  contains the true parameter $\xi_0$.
\end{assumption}
It is worth noting that although the parameter space may not inherently be closed and bounded, it can be made compact by imposing appropriate constraints during the optimisation procedure to determine the GMM estimator. Consequently, in practice, Assumption \ref{A4} is less restrictive than it might initially appear.

\begin{assumption}\label{A5}
$|f_{N}^{(i)}(\xi)-g_N^{(i)}(\xi)|\stackrel{\mathrm{P}}{\longrightarrow} 0$ pointwise on $\Xi$ for all $i=1, \ldots, q$, as $N\to \infty$. 
\end{assumption}
\begin{remark}\label{rem:mixing}
Since the (graph) supOU process belongs to the class of mixed moving average processes and is, hence, mixing and ergodic, it satisfies Assumption \ref{A5}.
\end{remark}

\begin{assumption}\label{A6}
Suppose that  $f_{N}^{(i)}(\xi)$ is stochastically equicontinuous and $g_N^{(i)}(\xi)$ is equicontinuous for all $i=1,\ldots, q$.
\end{assumption}

\begin{remark} In order to  verify Assumption \ref{A6}, we need to show that, for each $i=1,\ldots, q$, $f_N^{(i)}$ and $g_N^{(i)}$ satisfy a stochastic Lipschitz and Lipschitz property, respectively. Let us study the individual components of $f(X_{t:t+m},\xi)$. Let $\xi_1, \xi_2\in \Xi$ represent two parameter vectors.
For the   component $f_E$, we have
\begin{align}\begin{split}\label{eq:A6-comp1}
f_\mathrm{E}(X_{t:t+m},\xi_1)- f_\mathrm{E}(X_{t:t+m},\xi_2)
&= \text{vec}     (  X_{t\Delta}-\mu(\xi_1))- \text{vec}     (  X_{t\Delta}-\mu(\xi_2))\\
&=\text{vec}     (-\mu(\xi_1)+\mu(\xi_2))
\end{split}\end{align}
For the other components $f_i$, for $i=0,\ldots, m$, we have 
\begin{align}\begin{split}\label{eq:A6-comp2}
&f_\mathrm{i}(X_{t:t+m},\xi_1)- f_\mathrm{i}(X_{t:t+m},\xi_2)\\
&=  \text{vec}_h ( X_{t\Delta} X_{(t+i)\Delta}^{\top}-\mathbf{D}_i(\xi_1))
-
 \text{vec}_h         ( X_{t\Delta} X_{(t+i)\Delta}^{\top}-\mathbf{D}_i(\xi_2))
 \\
 &=-\text{vec}_h (\mathbf{D}_i(\xi_1)) +\text{vec}_h (\mathbf{D}_i(\xi_2)).
\end{split}\end{align}

We observe that the terms involving $X_{t:t+m}$ cancel out by the construction of the moment function.  This leaves us with requiring a Lipschitz condition on the non-random terms in each component of $f(X_{t:t+m},\xi)$. For instance,  if we consider partial derivatives with respect to the model parameters and  find that these partial derivatives are bounded, then we can deduce that the components are Lipschitz continuous and Assumption \ref{A6} is satisfied.
\end{remark}

We recall Theorem \ref{th:consistency-gamma} from the main article:
\begin{theorem*}[Consistency, Theorem \ref{th:consistency-gamma},  restated]
Suppose that the graph supOU process has finite second moments, the drift is given by \eqref{eq:Qtheta2} , for $|c|<1$. Moreover, for $\alpha>1$, we have $\theta_2\sim \Gamma(\alpha,1)$ as in \eqref{eq:PD} or $\theta_2$ is chosen as in \eqref{eq:Exp}.
Suppose that Assumptions \ref{A3} and \ref{A4}  hold. 
Then the GMM estimator $\hat{\xi}_0^{N}$,  defined in equation \eqref{eq:Estimator}, is weakly consistent, i.e.
$$
\hat{\xi}_0^{N} \stackrel{\mathrm{P}}{\longrightarrow} \xi_0, \quad \mathrm{as}\, 
N\to \infty.$$
 \end{theorem*}
\begin{proof}[Proof of Theorem \ref{th:consistency-gamma}]
 We proceed as in \cite{Mat} who shows that the consistency result holds under Assumptions \ref{A1}-\ref{A3}.
 Moreover, \cite{Mat} argues that Assumption \ref{A2} can be replaced by Assumptions \ref{A4}-\ref{A6}. I.e. under Assumptions \ref{A1}, \ref{A3}, \ref{A4}-\ref{A6}, we will get the stated consistency result. Since Assumptions \ref{A3} and \ref{A4} are stated in the theorem, it remains to verify Assumptions \ref{A1}, \ref{A5} and \ref{A6}.
 We have already verified the parameter identifiability for this particular model  in Section \ref{sec:ident}. Together with our earlier discussion, see Remark \ref{rem:ident}, this ensures that Assumption \ref{A1} holds. 
 Also, we have shown that Assumption \ref{A5} holds, see Remark \ref{rem:mixing}. Hence, it remains to verify Assumption \ref{A6} in the Gamma case. 
As before, see \eqref{eq:A6-comp1} and \eqref{eq:A6-comp2}, we need to show that, for each $i=1,\ldots, q$, $f_N^{(i)}$ and $g_N^{(i)}$ satisfy a stochastic Lipschitz and Lipschitz property, respectively. Let us study the individual components of $f(X_{t:t+m},\xi)$. Let $\xi_1, \xi_2\in \Xi$ represent two parameter vectors.
For the   component $f_E$, we have
\begin{align*}
 & f_\mathrm{E}(X_{t:t+m},\xi_1)- f_\mathrm{E}(X_{t:t+m},\xi_2)
= \text{vec}     (  X_{t\Delta}-\mu(\xi_1))- \text{vec}     (  X_{t\Delta}-\mu(\xi_2))\\
&=\text{vec}     (-\mu(\xi_1)+\mu(\xi_2))\\
&=
\text{vec}\left(-\frac{1}{\alpha_1-1} \mathbf{K}(c_1)^{-1}  \mu_{L,1}
+\frac{1}{\alpha_2-1} \mathbf{K}(c_2)^{-1}  \mu_{L,2}\right)\\
&=
\text{vec}\left(-\frac{1}{\alpha_1-1} \left(-\mathbf{I}_{d\times d}-c_1\bar{\textbf{A}}^{\top}\right)^{-1}  \mu_{L,1}
+\frac{1}{\alpha_2-1} \left(-\mathbf{I}_{d\times d}-c_2\bar{\textbf{A}}^{\top}\right)^{-1}  \mu_{L,2}\right).
\end{align*}

For the other components $f_i$, for $i=0,\ldots, m$, we have 
\begin{align}\begin{split}\label{eq:A6-comp2}
&f_\mathrm{i}(X_{t:t+m},\xi_1)- f_\mathrm{i}(X_{t:t+m},\xi_2)\\
&=  \text{vec}_h ( X_{t\Delta} X_{(t+i)\Delta}^{\top}-\mathbf{D}_i(\xi_1))
-
 \text{vec}_h         ( X_{t\Delta} X_{(t+i)\Delta}^{\top}-\mathbf{D}_i(\xi_2))
 \\
 &=-\text{vec}_h (\mathbf{D}_i(\xi_1)) +\text{vec}_h (\mathbf{D}_i(\xi_2)),
\end{split}\end{align}
where, for $j=1,2$, 
\begin{align*}
\mathbf{D}_i(\xi_j)&=
-\frac{1}{\alpha_j-1}(\mathbf{I}_{d\times d}-\mathbf{K}(c_j)i\Delta)^{1-\alpha} (\mathcal{A}(\mathbf{K}(c_j))^{-1}(\boldsymbol{\sigma}^2_{L,j})
\\
& \quad + \frac{1}{\alpha_j-1} \mathbf{K}(c_j)^{-1}  \mu_{L,j} \mu_{L,j}^{\top}\left( \frac{1}{\alpha_j-1} \mathbf{K}(c_j)^{-1}  \right)^{\top}.
\end{align*}
As before, we observe that the terms involving $X_{t:t+m}$ cancel out by the construction of the moment function.  This leaves us with requiring a Lipschitz condition on the non-random terms in each component of $f(X_{t:t+m},\xi)$. For our particular model, the partial derivatives with respect to the model parameters are  bounded, which implies  the  Lipschitz continuity  and Assumption \ref{A6}.
\end{proof}

\subsection{Weak dependence of graph supOU processes}
As a prerequisite for proving the asymptotic normality of the GMM estimator, we  will now 
 study the weak dependence properties of graph supOU processes.
 Weak dependence in a process refers to a situation where the values of the process are not independent,  
 but exhibit diminishing correlation or association as the time lag between observations increases.
  We give the definition of $\zeta$-weakly dependent processes in the following, which were called $\theta$-weakly dependent processes  in \cite{CS}. 
  Similar to \cite[Section 3.1]{CS}, we introduce the following notation: Let $\mathcal{G}=\cup_{u\in \mathbb{N}^+}\mathcal{G}_u$, where the set $\mathcal{G}_u$ is given by the class of bounded functions from $(\mathbb{R}^d)^u$ to $\mathbb{R}$ that are Lipschitz continuous with respect to the distance function $\delta^\ast$ on $(\mathbb{R}^d)^u$ that is defined as
  $ \delta^\ast((x_1,\ldots,x_u)^{\top},(y_1,\ldots,y_u)^{\top})=\sum_{i=1}^u\|x_i-y_i\|$, 
    for $x_i, y_i \in \mathbb{R}^d$. We define
    $\mathcal{H}:=\{f\in \mathcal{G}: \|f\|_{\infty}\leq 1 \}$. Similarly, we define $\mathcal{G}^{\ast}=\cup_{u\in \mathbb{N}^+}\mathcal{G}_u^{\ast}$, where  $\mathcal{G}_u$ is the set of  bounded functions from $(\mathbb{R}^d)^u$ to $\mathbb{R}$. We set  $\mathcal{H}^{\ast}:=\{f\in \mathcal{G}^{\ast}: \|f\|_{\infty}\leq 1 \}$.

 \begin{definition}{(Definition 3.2, \cite{CS})}
A process $X=(X_t)_{t\in\mathbb{R}}$ taking values in $\mathbb{R}^d$ is called a $\zeta$-weakly dependent process if there exists a sequence $(\zeta(r)_{r\in\mathbb{R}^+})$ that converges to 0, and this sequence satisfies the following condition
\[
|\mathrm{cov}(F(X_{i_1},X_{i_2},\dots,X_{i_u}),G(X_{j_1},X_{j_2},\dots,X_{j_v}))|\leq c(v\mathrm{Lip}(G)\|F\|_\infty)\zeta(r),
\]
for all
$(u,v)\in\mathbb{N}^+\times\mathbb{N}^+,$ $r\in\mathbb{R}^+$,
$(i_1,\dots, i_u)\in\mathbb{R}^u$ and $(j_1,\dots, j_v)\in\mathbb{R}^v$, with $i_1\leq \dots \leq i_u\leq i_u+r\leq j_1\leq \dots \leq j_v$, functions $F:(\mathbb{R}^d)^u\rightarrow \mathbb{R}$ and $G:(\mathbb{R}^d)^v\rightarrow \mathbb{R}$ respectively belonging to $\mathcal{H}^{\ast}$ and $\mathcal{H}$, 
and $\mathrm{Lip}(G)=\mathrm{sup}_{x\neq y}\frac{|G(x)-G(y)|}{\|x_1-y_1\|+\|x_2-y_2\|+\dots+\|x_d-y_d\|}$, 
where $c$ is a constant independent of $r$. We call $(\zeta(r))_{r\in\mathbb{R}^+}$ the sequence of the $\zeta$-coefficients.
\end{definition}
We will now present a general representation result for the $\zeta$-coefficients for graph supOU processes and derive upper bounds which are more easily computable in practice.

\begin{theorem}\label{th:zetasupOU} Under the assumptions of Proposition \ref{prop:mom}, the $\zeta$-coefficients of a graph supOU process satisfy, for all $r\geq 0$,
 \begin{multline*}
     \zeta(r)=\left(\int_{\Theta}\int_{-\infty}^{-r}\mathrm{tr}(e^{-\mathbf{Q}(\theta)s}
     \Sigma_L
     e^{-\mathbf{Q}(\theta)^{\top}s})ds\pi(d\theta)\right.
     \\
     \left.+\left\|\int_{\Theta}\int_{-\infty}^{-r}e^{-\mathbf{Q}(\theta)s}\mu ds\pi(d\theta)\right\|^2
     \right)^{\frac{1}{2}},
 \end{multline*}
 and can be bounded, for all $r\geq 0$, by 
$ \zeta(r)
     \leq \widehat{\zeta}(r)$,
   where
   \begin{multline*}  \widehat{\zeta}(r)=\left(C\int_{\Theta}\|\boldsymbol{\sigma}^2_{L}\|\frac{\kappa(\mathbf{Q}(\theta))^2}{2\rho(\mathbf{Q}(\theta))} e^{-2\rho(\mathbf{Q}(\theta))r}\pi(d\theta)\right.\\
   \left.+\|\mu_{L}\| \int_{\Theta}\frac{\kappa(\mathbf{Q}(\theta))}{ \rho(\mathbf{Q}(\theta))}e^{-\rho(\mathbf{Q}(\theta))r}\pi(d\theta) \right)^{\frac{1}{2}},
    \end{multline*}
    for a constant $C>0$.

    If, in addition, $\mathbf{Q}=\mathbf{Q}(\theta)$ is diagonalisable and given by 
\eqref{eq:Qtheta2},  i.e.~$\mathbf{Q}=\mathbf{Q}(\theta_2)
    =\theta_2\mathbf{K}(c)$, then 
\begin{multline*}
\widehat{\zeta}(r)=\left(C\|\boldsymbol{\sigma}^2_{L}\|\frac{\kappa(\mathbf{K}(c))^2}{2\rho(\mathbf{K}(c))} 
\int_{0}^{\infty}\theta_2e^{-2\theta_2\rho(\mathbf{K}(c))r}\pi(d\theta_2)\right.
\\
+\left.\|\mu_L\| \frac{\kappa(\mathbf{K}(c))}{ \rho(\mathbf{K}(c))}\int_{0}^{\infty}e^{-\theta_2\rho(\mathbf{K}(c))r}\pi(d\theta_2)\right)^{\frac{1}{2}}.
\end{multline*}
In the case, when $\pi(\theta_2)$ has $\Gamma(\alpha, 1)$ law, see \eqref{eq:PD}, we  have
$\widehat{\zeta}(r)= 
    O(r^{-\alpha/2})$.

\end{theorem}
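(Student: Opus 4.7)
\textbf{Proof plan for Theorem \ref{th:zetasupOU}.}

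The plan is to view the graph supOU process as a particular instance of a mixed moving average (MMA) process with kernel $f(\theta,s) = e^{\mathbf{Q}(\theta)s}\mathbf{1}_{[0,\infty)}(s)$, and then invoke the representation of the $\zeta$-coefficients derived in \cite{CS} for MMA processes driven by a L\'evy basis with mean vector $\mu_L$ and second-moment matrix $\Sigma_L$. Concretely, writing $X_t=\int_{\Theta}\int_{-\infty}^t e^{\mathbf{Q}(\theta)(t-s)}\Lambda(d\theta,ds)$ and applying the MMA $\zeta$-coefficient formula of \cite{CS} with a shift of $r$, one obtains directly the two contributions in the claimed expression for $\zeta(r)$: the trace term coming from the $L^2$-norm of the tail of the kernel against the second-moment measure $\Sigma_L$, and the squared-norm term coming from its drift part against $\mu_L$. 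The change of variable $s\mapsto -s$ (so that the tail integral $\int_r^{\infty}$ becomes $\int_{-\infty}^{-r}$) together with the negative-real-parts property $\mathbf{Q}(\theta)\in M_d^-$ ensures the integrals are finite.

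The second step is to derive the bound $\widehat{\zeta}(r)$. Here I would exploit the hypothesis $\|e^{\mathbf{Q}(\theta)s}\|\leq \kappa(\mathbf{Q}(\theta))e^{-\rho(\mathbf{Q}(\theta))s}$ for $s\geq 0$ from Theorem~\ref{th:supOU}. For the trace term I would use the general inequality $|\mathrm{tr}(ABA^{\top})|\leq C\|A\|^2\|B\|$ (with $C$ absorbing the equivalence of norms), apply the exponential bound with $s$ replaced by $-s$, and integrate in $s$ over $(-\infty,-r]$ to obtain $C\|\Sigma_L\|\kappa(\mathbf{Q}(\theta))^2 e^{-2\rho(\mathbf{Q}(\theta))r}/(2\rho(\mathbf{Q}(\theta)))$. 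For the drift term I would bound the outer norm by an integral of norms and again use the exponential bound to produce the factor $\|\mu_L\|\kappa(\mathbf{Q}(\theta))e^{-\rho(\mathbf{Q}(\theta))r}/\rho(\mathbf{Q}(\theta))$. Integrating the two bounds against $\pi$ and combining under the square root yields $\widehat{\zeta}(r)$.

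For the third assertion, under $\mathbf{Q}(\theta_2)=\theta_2\mathbf{K}(c)$ with diagonalisable $\mathbf{K}(c)$, one checks that the functions $\kappa$ and $\rho$ behave homogeneously: $\kappa(\theta_2\mathbf{K}(c))$ can be taken equal to $\kappa(\mathbf{K}(c))$ (since it is controlled by the condition number of the eigenvector matrix, which is invariant under positive scaling) while $\rho(\theta_2\mathbf{K}(c))=\theta_2\rho(\mathbf{K}(c))$. Substituting these into the previous bound and pulling the $\theta_2$-independent factors out of the $\pi$-integral yields the stated expression. Finally, for $\pi(d\theta_2)$ of Gamma$(\alpha,1)$ type as in \eqref{eq:PD}, the two remaining $\theta_2$-integrals are of the form $\int_0^\infty \theta_2^{\alpha}e^{-(1+2\rho r)\theta_2}d\theta_2$ and $\int_0^\infty \theta_2^{\alpha-1}e^{-(1+\rho r)\theta_2}d\theta_2$, both evaluated in closed form via the Gamma function. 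These behave like $r^{-(\alpha+1)}$ and $r^{-\alpha}$ respectively as $r\to\infty$; the slower drift term dominates, and taking the square root gives the advertised rate $\widehat{\zeta}(r)=O(r^{-\alpha/2})$.

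The main obstacle I anticipate is the rigorous identification of the graph supOU $\zeta$-coefficient with the MMA formula of \cite{CS}, since the supOU driving basis is indexed by the auxiliary variable $\theta\in\Theta$ and the argument needs the Lipschitz-based covariance estimates to commute with the integral over $\Theta$; once this representation is granted, the remaining work is a careful application of the exponential bound from Theorem~\ref{th:supOU} and elementary Gamma-function asymptotics.
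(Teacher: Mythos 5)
Your overall route coincides with the paper's: the exact expression for $\zeta(r)$ is taken from the mixed-moving-average results of \cite{CS} (the paper cites Corollary 3.4 there), and the bound $\widehat{\zeta}(r)$ is obtained exactly as you describe, via the triangle inequality, norm equivalence, submultiplicativity for the trace term, and the exponential kernel bound from Theorem \ref{th:supOU} integrated over $(r,\infty)$. Those two steps are fine.

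The gap is in the specialisation to $\mathbf{Q}(\theta_2)=\theta_2\mathbf{K}(c)$. You assert that $\kappa(\theta_2\mathbf{K}(c))$ can be taken equal to $\kappa(\mathbf{K}(c))$ because the condition number of the eigenvector matrix is invariant under positive scaling, while $\rho(\theta_2\mathbf{K}(c))=\theta_2\rho(\mathbf{K}(c))$. But if you substitute exactly these choices into the general bound, the factor $\kappa(\mathbf{Q})^2/(2\rho(\mathbf{Q}))$ becomes $\kappa(\mathbf{K}(c))^2/(2\theta_2\rho(\mathbf{K}(c)))$ and $\kappa(\mathbf{Q})/\rho(\mathbf{Q})$ becomes $\kappa(\mathbf{K}(c))/(\theta_2\rho(\mathbf{K}(c)))$, so the two $\pi$-integrals acquire a factor $\theta_2^{-1}$ each; you do \emph{not} recover the displayed integrals $\int_0^\infty\theta_2 e^{-2\theta_2\rho(\mathbf{K}(c))r}\pi(d\theta_2)$ and $\int_0^\infty e^{-\theta_2\rho(\mathbf{K}(c))r}\pi(d\theta_2)$, and under the $\Gamma(\alpha,1)$ law both integrals would then be of order $r^{-(\alpha-1)}$, giving $\widehat{\zeta}(r)=O(r^{-(\alpha-1)/2})$ rather than the advertised $O(r^{-\alpha/2})$. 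The paper's own proof instead uses the homogeneity $\kappa(\mathbf{Q})=\theta_2\kappa(\mathbf{K}(c))$ together with $\rho(\mathbf{Q})=\theta_2\rho(\mathbf{K}(c))$, which is precisely what produces the extra powers of $\theta_2$ in the two integrands and hence the orders $r^{-(\alpha+1)}$ and $r^{-\alpha}$ that you quote at the end. So your Gamma-function asymptotics are computed for the paper's integrals, but your own scaling argument does not lead to those integrals: as written, the third step of your plan is internally inconsistent, and you would need to either adopt and justify the scaling $\kappa(\theta_2\mathbf{K}(c))=\theta_2\kappa(\mathbf{K}(c))$ used in the paper or accept the weaker decay rate that your (scale-invariant) choice of $\kappa$ actually yields.
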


\begin{proof}[Proof of Theorem \ref{th:zetasupOU}]
From \cite[Corollary 3.4]{CS}, we can deduce that, for all $r\geq 0$,
\begin{align*}
\zeta(r)&=\left(\int_{M_d^-}\int_{-\infty}^{-r}\mathrm{tr}(e^{-\mathbf{Q}(\theta)s}
     \Sigma_L
     e^{-\mathbf{Q}(\theta)^{\top}s})ds\pi(d\mathbf{Q})\right.
     \\
     &\quad \left.+\left\|\int_{M_d^-}\int_{-\infty}^{-r}e^{-\mathbf{Q}(\theta)s}\mu ds\pi(d\mathbf{Q}(\theta))\right\|^2
     \right)^{\frac{1}{2}}\\
     &=
     \left(\int_{\Theta}\int_{-\infty}^{-r}\mathrm{tr}(e^{-\mathbf{Q}(\theta)s}
     \Sigma_L
     e^{-\mathbf{Q}(\theta)^{\top}s})ds\pi(d\theta)\right.\\
     &\left.+\left\|\int_{\Theta}\int_{-\infty}^{-r}e^{-\mathbf{Q}(\theta)s}\mu ds\pi(d\theta)\right\|^2
     \right)^{\frac{1}{2}}.
     \end{align*}
     For the upper bounds, we use the triangle inequality, 
the equivalence of norms, the submultiplicativity property of the Frobenius norm $||\cdot||_F$ and the properties of the graph supOU process 
to deduce that, for a constant $C>0$, using the notation $\mathbf{Q}=\mathbf{Q}(\theta)$,  
\begin{equation*} 
  \begin{split}
&\zeta^2(r)=\int_{M_d^-}\int_{-\infty}^{-r}\mathrm{tr}(e^{-\mathbf{Q}s}\boldsymbol{\sigma}^2_{L} e^{-\mathbf{Q}^\top s})ds\pi(d\mathbf{Q})+\|\int_{M_d^-}\int_{-\infty}^{-r}e^{-\mathbf{Q}s}\mu_{L} ds\pi(d\mathbf{Q})\|^2
      \\
     &\leq\int_{M_d^-}\int_{-\infty}^{-r}|\mathrm{tr}(e^{-\mathbf{Q}s}\boldsymbol{\sigma}^2_{L} e^{-\mathbf{Q}^\top s})|ds\pi(d\mathbf{Q})+\|\int_{M_d^-}\int_{-\infty}^{-r}e^{-\mathbf{Q}s}\mu_{L} ds\pi(d\mathbf{Q})\|^2
     \\
     &\leq C \int_{M_d^-}\int_{-\infty}^{-r}
     \|e^{-\mathbf{Q}s}\| \|\boldsymbol{\sigma}^2_{L}\| \|e^{-\mathbf{Q}^\top s}\|ds\pi(d\mathbf{Q})+\|\int_{M_d^-}\int_{-\infty}^{-r}e^{-\mathbf{Q}s}\mu_{L} ds\pi(d\mathbf{Q})\|\\
      &
    \leq  C\int_{M_d^-}\int_r^{\infty}
\|e^{-\mathbf{Q}s}\| \|\boldsymbol{\sigma}^2_{L}\| \|e^{-\mathbf{Q}^\top s}\|
ds\pi(d\mathbf{Q})+\|\int_{M_d^-}\int_r^{\infty}e^{\mathbf{Q}s}\mu_{L} ds\pi(d\mathbf{Q})\|
    \\
    &\leq C \int_{M_d^-}\int_{r}^{\infty}\|\boldsymbol{\sigma}^2_{L}\|\kappa(\mathbf{Q})^2 e^{-2\rho(\mathbf{Q})s}ds\pi(d\mathbf{Q})\\
    &\quad +\|\mu_{L}\| \int_{M_d^-}\int_{r}^{\infty}\kappa(\mathbf{Q}) e^{-\rho(\mathbf{Q})s}ds\pi(d\mathbf{Q})\\
     &\leq C\int_{M_d^-}\|\boldsymbol{\sigma}^2_{L}\|\frac{\kappa(\mathbf{Q})^2}{2\rho(\mathbf{Q})} e^{-2\rho(\mathbf{Q})r}\pi(d\mathbf{Q})+\|\mu_{L}\| \int_{M_d^-}\frac{\kappa(\mathbf{Q})}{ \rho(\mathbf{Q})}e^{-\rho(\mathbf{Q})r}\pi(d\mathbf{Q}).
  \end{split}       
     \end{equation*}
     Hence, $\zeta(r)\leq \widehat{\zeta}(r)$, for 
     \begin{align*}
     \widehat{\zeta}(r)=     \left(C\int_{\Theta}\|\boldsymbol{\sigma}^2_{L}\|\frac{\kappa(\mathbf{Q}(\theta))^2}{2\rho(\mathbf{Q}(\theta))} e^{-2\rho(\mathbf{Q}(\theta))r}\pi(d\theta)\right.\\
     \quad \left.+\|\mu_{L}\| \int_{\Theta}\frac{\kappa(\mathbf{Q}(\theta))}{ \rho(\mathbf{Q}(\theta))}e^{-\rho(\mathbf{Q}(\theta))r}\pi(d\theta) \right)^{\frac{1}{2}}.
     \end{align*}
     From  \cite[Remark 3.2]{Bar11} we can deduce that, for any $\mathbf{Q}(\theta)\in M_d^-$, we can find constants $\kappa\geq 1$ and $\rho \in (0, -\max(\mathrm{Re}(\sigma(\mathbf{Q}(\theta))]$ such that $\|e^{\mathbf{Q}(\theta)s}\|\leq \kappa e^{-\rho s}$ for all $s>0$. 

If $\mathbf{Q}(\theta)$ is diagonalisable, we can choose the function $\kappa(\mathbf{Q}(\theta))=\|U\|\|U^{-1}\|$ where $U \in \mathrm{GL}_d(\mathbb{C})$ is such that $U\mathbf{Q}(\theta)U^{-1}$ is diagonal and \\ $\rho(\mathbf{Q}(\theta))=-\mathrm{max}(\mathrm{Re}(\sigma(\mathbf{Q}(\theta)))$.

In case of \eqref{eq:Qtheta2}, where the drift matrix is linear in $\theta_2$, i.e.~$\mathbf{Q}(\theta_2)
    =\theta_2\mathbf{K}(c)$, for $\mathbf{K}(c)=\left(-\mathbf{I}_{d\times d}-c\bar{\textbf{A}}^{\top}\right)$, we get $\kappa(\mathbf{Q})=\theta_2\kappa(\mathbf{K}(c))$ and $\rho(\mathbf{Q})=\theta_2\rho(\mathbf{K}(c))$. 
Then, we get that
\begin{align*}
I_1&:=\int_{M_d^-}\|\boldsymbol{\sigma}^2_{L}\|\frac{\kappa(\mathbf{Q})^2}{2\rho(\mathbf{Q})} e^{-2\rho(\mathbf{Q})r}\pi(d\mathbf{Q})
\\
&=\|\boldsymbol{\sigma}^2_{L}\|\frac{\kappa(\mathbf{K}(c))^2}{2\rho(\mathbf{K}(c))} 
\int_{0}^{\infty}\theta_2e^{-2\theta_2\rho(\mathbf{K}(c))r}\pi(d\theta_2),\\
I_2&:=
\|\mu_{L}\| \int_{M_d^-}\frac{\kappa(\mathbf{Q})}{ \rho(\mathbf{Q})}e^{-\rho(\mathbf{Q})r}\pi(d\mathbf{Q})\\
&
=\|\mu_{L}\| \frac{\kappa(\mathbf{K}(c))}{ \rho(\mathbf{K}(c))}\int_{0}^{\infty}e^{-\theta_2\rho(\mathbf{K}(c))r}\pi(d\theta_2).
\end{align*}
In the case, when $\pi(\theta_2)$ has gamma law as in \eqref{eq:PD}, we get that
\begin{align*}
\int_{0}^{\infty}\theta_2e^{-2\theta_2\rho(\mathbf{K}(c))r}\pi(d\theta_2)
&=\int_{0}^{\infty}\theta_2e^{-2\theta_2\rho(\mathbf{K}(c))r}
\frac{1}{\Gamma(\alpha)}\theta_2^{\alpha-1}e^{- \theta_2}d\theta_2\\
&=
\frac{1}{\Gamma(\alpha)}
\int_{0}^{\infty}\theta_2^{\alpha}e^{-(2\rho(\mathbf{K}(c))r +1) \theta_2}d\theta_2\\
&
=\alpha (2\rho(\mathbf{K}(c))r +1)^{-(\alpha+1)}
=O(r^{-(\alpha+1)}),\\
\int_{0}^{\infty}e^{-2\theta_2\rho(\mathbf{K}(c))r}\pi(d\theta_2)
&=
\frac{1}{\Gamma(\alpha)}\int_{0}^{\infty}\theta_2^{\alpha-1}e^{-(2\rho(\mathbf{K}(c))r+1) \theta_2}d\theta_2\\
&
=(2\rho(\mathbf{K}(c))r +1)^{-\alpha}
=O(r^{-\alpha}).
\end{align*}
Hence, 
$\widehat{\zeta}(r)=
    O(r^{-\alpha/2})$.

\end{proof}

\subsection{Assumptions for asymptotic normality}
We want to prove the asymptotic normality of the estimator \eqref{eq:Estimator}. For this, we need additional assumptions. Our Assumptions \ref{A7}, \ref{A8}, and \ref{A9} correspond to the assumptions \cite[Assumptions 1.7-1.9]{Mat}.

\begin{assumption}\label{A7}
The moment function $f(X_{t:t+m},\xi)$ is continuously differentiable in  $\xi \in \Xi$. 
\end{assumption}
Under Assumption \ref{A7}, $f_N(\xi)$ is continuously differentiable and we write 
$$
{\bf F}_N(\xi):=\frac{\partial f_N(\xi)}{\partial \xi^{\top}}=\frac{1}{N-m}\sum_{t=1}^{N-m}\frac{\partial f_N(X_{t:t+m}, \xi)}{\partial \xi^{\top}},
$$
for its first derivative, which is a $q\times \mathrm{dim}(\xi)$-dimensional matrix, where  $\mathrm{dim}(\xi)$ denotes the dimension of the vector $\xi$.

\begin{assumption}\label{A8}
For any sequence $\xi^{\ast}_N$ satisfying $\xi^{\ast}_N\stackrel{\mathrm{P}}{\longrightarrow} \xi_0$, as $N\to \infty$, we have 
${\bf F}_N(\xi^{\ast}_N)-\overline{{\bf F}}_N\stackrel{\mathrm{P}}{\longrightarrow}0$,
as $N\to \infty$, where $\overline{{\bf F}}_N$ denotes a sequence of $q\times \mathrm{dim}(\xi)$-dimensional matrices that do not depend on the parameter $\xi$. 
\end{assumption}

\begin{assumption}\label{A9}
We have a  central limit theorem of the form 
\begin{align*}
\overline{\mathbf{V}}_N^{-1/2}\sqrt{N}f_N(\xi_0)  \stackrel{\mathrm{d}}{\longrightarrow} \mathrm{N}(0, \mathbf{I}_{q\times q}),
\end{align*}
where $\overline{\mathbf{V}}_N=N \mathrm{var}(f_N(\xi_0))$ denotes a sequence of $q\times q$-dimensional, deterministic, positive definite matrices. 
\end{assumption}
In order to show that 
Assumption \ref{A9} is verified, we present  Theorem \ref{th:weakdepempi}, where  we prove the central limit theorem for the moment function $f(X_{t:t+m},\xi)$,  extending  \cite[Theorem 6.1]{CS} from univariate supOU processes to graph (multivariate) supOU processes.
\begin{theorem}\label{th:weakdepempi}
Suppose that the graph supOU process satisfies the conditions of 
Theorem \ref{th:supOU}; moreover, 
assume that $\int_{\|x\|>1}\|x\|^{4+\delta}\nu(dx)<\infty$ for some $\delta>0$.
Suppose that the upper bound, see Theorem \ref{th:zetasupOU},  of the $\zeta$-weakly dependent coefficient of graph supOU process is given by $\widehat{\zeta}(r)=O(r^{-\epsilon})$, for $\epsilon >(1+\frac{1}{\delta})\frac{3+\delta}{2+\delta}$.  
Then $(f(X_{t:t+m},\xi_0))_{t\in \mathbb{Z}}$, defined in \eqref{eq:f}, is a $\zeta$-weakly dependent process; the matrix 
\begin{align}
\label{eq:Fsigma}
\mathbf{F}_{\Sigma}=\sum_{l\in\mathbb{Z}}\mathrm{cov}(f(X_{0:m},\xi_0),f(X_{l:l+m},\xi_0)),
\end{align}
is finite and positive semidefinite, and, as $N\rightarrow \infty$,
\begin{equation*}
    \sqrt{N}f_{N}(\xi_0)\xrightarrow{d} N(0,\mathbf{F}_{\Sigma}).
\end{equation*}
\end{theorem}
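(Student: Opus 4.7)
The plan is to adapt the three-step strategy from the proof of \cite[Theorem 6.1]{CS} (originally developed for univariate supOU processes) to the multivariate graph supOU setting, using the $\zeta$-coefficient bounds obtained in Theorem \ref{th:zetasupOU}.

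First, I would show that the transformed process $Y_t := f(X_{t:t+m},\xi_0)$ is itself $\zeta$-weakly dependent. Each coordinate of $f$ is a polynomial of degree at most two in the entries of $X_{t:t+m}$: the component $f_E$ is affine, while each $f_i$ consists of centred products $X_{t\Delta,a}X_{(t+i)\Delta,b} - (\mathbf{D}_i(\xi_0))_{ab}$. To transfer the $\zeta$-weak dependence of $X$ through such polynomial maps, I would truncate $f$ on a ball of radius $M$, apply the definition of $\zeta$-weak dependence to the bounded Lipschitz truncation (whose Lipschitz constant grows linearly in $M$ because of the quadratic terms), and control the tail error using Markov's inequality together with the moment bound $\mathrm{E}\|X_0\|^{4+\delta}<\infty$, which follows from $\int_{\|x\|>1}\|x\|^{4+\delta}\nu(dx)<\infty$ and stationarity of the graph supOU process. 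Optimising over $M$ yields $\zeta$-coefficients for $(Y_t)$ decaying like $\widehat{\zeta}(r)^{\delta/(3+\delta)}$ (the exponent reflects the quadratic growth of $f$ combined with $4+\delta$ moments), so that the decay rate of $(Y_t)$ is $O(r^{-\epsilon\delta/(3+\delta)})$.

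Second, I would use this rate to establish finiteness and positive semidefiniteness of $\mathbf{F}_\Sigma$. The stationary cross-covariances of $Y_t$ are bounded by $C \widehat{\zeta}(|l|)^{\delta/(3+\delta)}$ (via the weak dependence inequality applied componentwise), so the series $\sum_{l\in \mathbb{Z}}\mathrm{cov}(Y_0, Y_l)$ converges absolutely provided $\epsilon\delta/(3+\delta) > 1$, a condition implied by $\epsilon > (1+1/\delta)(3+\delta)/(2+\delta)$. Positive semidefiniteness of $\mathbf{F}_\Sigma$ follows from the representation $\mathbf{F}_{\Sigma} = \lim_{N\to\infty} N\,\mathrm{var}(f_N(\xi_0))$, which expresses it as a pointwise limit of covariance matrices.

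Third, I would apply a central limit theorem for stationary $\zeta$-weakly dependent sequences, invoking either the Bernstein blocking CLT used in \cite{CS} or the general statement from the Doukhan--Louhichi / Dedecker--Doukhan framework for weakly dependent fields. The precise threshold $\epsilon >(1+\frac{1}{\delta})\frac{3+\delta}{2+\delta}$ arises exactly from balancing the block length with the moment order $4+\delta$ to ensure the Lindeberg condition on large blocks and the negligibility of small blocks. Because the CLT I would invoke is stated for $\mathbb{R}$-valued sequences, I would extend it to $\mathbb{R}^q$-valued sequences by the Cram\'er--Wold device: any linear combination $\lambda^{\top}Y_t$ is again a polynomial of degree at most two in $X$, inherits the required moments, and is $\zeta$-weakly dependent with the same rate, yielding $\sqrt{N}\lambda^{\top} f_N(\xi_0) \xrightarrow{d} N(0,\lambda^{\top}\mathbf{F}_{\Sigma}\lambda)$.

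The principal obstacle is the first step: propagating $\zeta$-weak dependence through the quadratic components $f_i$ in a \emph{multivariate} setting with a quantitative rate sharp enough to satisfy the threshold on $\epsilon$. The scalar argument in \cite{CS} relies on a single-coordinate truncation; here one must handle a growing number of quadratic monomials $X_{t\Delta,a}X_{(t+i)\Delta,b}$, but since $d$ and $m$ are fixed and finite, this is ultimately a bookkeeping extension of the scalar case rather than a conceptual change. Once Step 1 is in place, Steps 2 and 3 are essentially immediate consequences of the weak dependence rate and a standard CLT.
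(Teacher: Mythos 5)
Your overall strategy is the same as the paper's: propagate the $\zeta$-weak dependence of $X$ through the (at most quadratic) moment function, deduce summability/finiteness of $\mathbf{F}_\Sigma$, and conclude via a CLT for stationary weakly dependent sequences combined with the Cram\'er--Wold device. The paper does not redo the truncation argument you describe in Step 1; it simply applies \cite[Proposition 3.4]{CS} to the uncentred map $H(X_{t:t+m})=(X_{t\Delta}^{\top},\mathrm{vec}_h(X_{t\Delta}X_{t\Delta}^{\top})^{\top},\ldots)^{\top}$ with $p=4+\delta$, $a=2$, which is already stated for multivariate mixed moving averages, and then invokes the Dedecker--Rio CLT \cite[Theorem 1]{DR}. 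So what you flag as the ``principal obstacle'' is in fact an off-the-shelf citation.

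There is, however, a genuine quantitative gap in your Step 1 that propagates into Step 2. For a map of polynomial growth order $a$ applied to a process with finite $p$-th moments, the truncation-and-optimisation argument (and \cite[Proposition 3.4]{CS}) yields new $\zeta$-coefficients of order $\zeta(r)^{(p-a)/(p-1)}$; with $p=4+\delta$ and $a=2$ this exponent is $\frac{2+\delta}{3+\delta}$, not the $\frac{\delta}{3+\delta}$ you obtain. The distinction is not cosmetic: your summability requirement $\epsilon\,\delta/(3+\delta)>1$, i.e.\ $\epsilon>(3+\delta)/\delta$, is \emph{not} implied by the theorem's hypothesis $\epsilon>(1+\frac{1}{\delta})\frac{3+\delta}{2+\delta}=\frac{(1+\delta)(3+\delta)}{\delta(2+\delta)}$, since $\frac{1+\delta}{2+\delta}<1$; so with your exponent the argument does not close under the stated assumptions. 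With the correct exponent, the new coefficients decay like $r^{-\epsilon(2+\delta)/(3+\delta)}$, and the relevant condition to verify is not mere absolute summability of the covariances but the Dedecker--Rio moment/dependence criterion, which holds precisely when $\epsilon\frac{2+\delta}{3+\delta}>1+\frac{1}{\delta}$ (see \cite{CSS2022}) --- this is exactly where the threshold in the theorem comes from, rather than from a Bernstein blocking computation. Once the exponent is corrected and the Dedecker--Rio condition is checked in place of plain summability, the rest of your argument (finiteness and positive semidefiniteness of $\mathbf{F}_\Sigma$ as $\lim_N N\,\mathrm{var}(f_N(\xi_0))$, and Cram\'er--Wold) goes through as in the paper.
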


\begin{proof}[Proof of Theorem \ref{th:weakdepempi}]
 Since a graph supOU process is a special case of a multivariate mixed moving average process and given $\int_{\|x\|>1}\|x\|^{4+\delta}\nu(dx)<\infty$, 
 we deduce from  \cite[Proposition 2.1]{CS}, that the $4+\delta$ moments of the graph supOU process exist.

Define a function $H:\mathbb{R}^{d(m+1)}\rightarrow \mathbb{R}^{q}$ such that
\begin{align*}
H(X_{t:t+m})=f(X_{t:t+m},\xi_0)+\left(\begin{matrix}
      \text{vec}     (  \mu(\xi_0))\\
      \text{vec}_h      (\mathbf{D}_0(\xi_0))\\
    \text{vec}_h         ( \mathbf{D}_1(\xi_0))\\
             \vdots\\
    \text{vec}_h          (\mathbf{D}_m(\xi_0))\\
         \end{matrix}\right)=\left(\begin{matrix}
      \text{vec}     (  X_{t\Delta})\\
      \text{vec}_h      ( X_{t\Delta} X_{t\Delta}^{\top})\\
    \text{vec}_h         ( X_{t} X_{t+1}^{\top})\\
             \vdots\\
    \text{vec}_h          (X_{t} X_{t+m}^{\top})\\
         \end{matrix}\right).
         \end{align*}
This function $H$ satisfies the conditions of \cite[Proposition 3.4]{CS}, 
for $p=4+\delta, c=1, a=2$. Hence, $H(X_{t:t+m})$ is a $\zeta$-weakly dependent process with coefficient 
$\mathcal{C}(\mathcal{D}\zeta(r-m\Delta))^{\frac{2+\delta}{3+\delta}}$ for constants $\mathcal{C},\mathcal{D}>0$ (independent of $r$).
This implies that $f(X_{t:t+m},\xi_0)$ is a $\zeta$-weakly dependent process with  coefficient
$\mathcal{C}(\mathcal{D}\zeta(r-m\Delta))^{\frac{2+\delta}{3+\delta}}$
and zero mean. 
    We would like to apply  \cite[Theorem 1]{DR}.
That theorem requires a moment condition which is satisfied as soon as $\widehat{\zeta}(r)=O(r^{-\epsilon})$, for $\epsilon\left(\frac{2+\delta}{3+\delta}\right) >1+1/\delta$, see \cite{CSS2022}.
  Applying  \cite[Theorem 1]{DR} together with the
    Cramér Wold device  concludes the proof. 
    \end{proof}

\subsection{General asymptotic normality result}
 For the  asymptotic normality result for the general graph supOU process, we need two additional assumptions:
\begin{assumption}\label{as:A_weightmatrix}
The weight matrix $\mathbf{W}_N$ converges in probability to a positive definite matrix $\mathbf{W}$ of constants.
\end{assumption}

\begin{assumption}\label{as:asymcov}
The matrix $\mathbf{F}_{\Sigma}$ defined in \eqref{eq:Fsigma}
is positive definite.
\end{assumption}

\begin{theorem}\label{th:AN}
Suppose that the graph supOU process satisfies the conditions of 
Theorem \ref{th:supOU}; moreover, 
assume that $\int_{\|x\|>1}\|x\|^{4+\delta}\nu(dx)<\infty$ for some $\delta>0$.
Suppose that the upper bound, see Theorem \ref{th:zetasupOU},  of the $\zeta$-weakly dependent coefficient of graph supOU process is given by $\widehat{\zeta}(r)=O(r^{-\epsilon})$, for $\epsilon >(1+\frac{1}{\delta})\frac{3+\delta}{2+\delta}$. 
Suppose that Assumptions \ref{A1}, \ref{A2}, \ref{A3}, \ref{as:A_weightmatrix}, and \ref{as:asymcov} hold.
Then as $N\rightarrow \infty$,
\begin{equation*}
    \sqrt{N}(\hat{\xi}_0^{N}-\xi_0)\xrightarrow{d} N(0,\mathbf{M}\mathbf{F}_{\Sigma}\mathbf{M}^{\top}),
\end{equation*}    
where 
\begin{align*}
\mathbf{F}_{\Sigma}&=\sum_{l\in\mathbb{Z}}\mathrm{cov}(f(X_{0:m},\xi_0),f(X_{l:l+m},\xi_0)),\\
\overline{{\bf F}}&=\frac{\partial f(X_{t:t+m},\xi)}{\partial \xi^{\top}}
|_{\xi=\xi_0}, 
\\
\mathbf{M}&=(\overline{{\bf F}}^{\top}\mathbf{W}\overline{{\bf F}})^{-1}\overline{{\bf F}}^{\top}\mathbf{W}.
\end{align*}
\end{theorem}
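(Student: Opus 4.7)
The plan is to follow the standard GMM asymptotic normality template from \cite{Mat}, exploiting the consistency result already available from Theorem \ref{th:consistency} together with the central limit theorem from Theorem \ref{th:weakdepempi}. First I would observe that, since $\xi_0 \in \Xi$ and $\hat{\xi}_0^N$ is consistent, for $N$ large enough $\hat{\xi}_0^N$ lies in the interior of $\Xi$ with probability tending to one, so that the first-order optimality condition
\begin{equation*}
\mathbf{F}_N(\hat{\xi}_0^N)^{\top}\mathbf{V}_N f_N(\hat{\xi}_0^N)=0
\end{equation*}
holds with probability tending to one. Here $\mathbf{F}_N(\xi)=\partial f_N(\xi)/\partial \xi^{\top}$ is well-defined and continuous under Assumption \ref{A7}.

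Next I would carry out a componentwise mean-value expansion of $f_N(\hat{\xi}_0^N)$ around $\xi_0$. For each row $i=1,\ldots,q$ there exists a point $\tilde{\xi}^{(i)}_N$ on the segment joining $\hat{\xi}_0^N$ and $\xi_0$ such that
\begin{equation*}
f_N^{(i)}(\hat{\xi}_0^N)=f_N^{(i)}(\xi_0)+\frac{\partial f_N^{(i)}}{\partial \xi^{\top}}(\tilde{\xi}^{(i)}_N)(\hat{\xi}_0^N-\xi_0).
\end{equation*}
Stacking the rows and writing $\widetilde{\mathbf{F}}_N$ for the matrix whose $i$th row is $\partial f_N^{(i)}(\tilde{\xi}^{(i)}_N)/\partial \xi^{\top}$, substitution into the first-order condition yields
\begin{equation*}
\sqrt{N}(\hat{\xi}_0^N-\xi_0)=-\bigl(\mathbf{F}_N(\hat{\xi}_0^N)^{\top}\mathbf{V}_N\widetilde{\mathbf{F}}_N\bigr)^{-1}\mathbf{F}_N(\hat{\xi}_0^N)^{\top}\mathbf{V}_N\sqrt{N}f_N(\xi_0),
\end{equation*}
provided the inverse exists eventually, which I would verify below.

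The third step is to identify the limits of each factor. Since $\hat{\xi}_0^N\stackrel{\mathrm{P}}{\to}\xi_0$ by Theorem \ref{th:consistency}, each $\tilde{\xi}^{(i)}_N$ is sandwiched between $\hat{\xi}_0^N$ and $\xi_0$ and hence also converges to $\xi_0$ in probability. Assumption \ref{A8} then gives $\mathbf{F}_N(\hat{\xi}_0^N)\stackrel{\mathrm{P}}{\to}\overline{\mathbf{F}}$ and, applied row by row, $\widetilde{\mathbf{F}}_N\stackrel{\mathrm{P}}{\to}\overline{\mathbf{F}}$. Assumption \ref{as:A_weightmatrix} provides $\mathbf{V}_N\stackrel{\mathrm{P}}{\to}\mathbf{V}$, which together with Assumption \ref{as:asymcov} and the full column rank of $\overline{\mathbf{F}}$ (implicit in identifiability, Assumption \ref{A1}(ii)) ensures that $\overline{\mathbf{F}}^{\top}\mathbf{V}\overline{\mathbf{F}}$ is invertible; by the continuous mapping theorem the sample analogue is therefore invertible on an event of probability tending to one. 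Theorem \ref{th:weakdepempi} supplies $\sqrt{N}f_N(\xi_0)\stackrel{\mathrm{d}}{\to}N(0,\mathbf{F}_{\Sigma})$. Combining these via Slutsky's theorem and the continuous mapping theorem gives
\begin{equation*}
\sqrt{N}(\hat{\xi}_0^N-\xi_0)\stackrel{\mathrm{d}}{\longrightarrow}-\mathbf{M}\,Z,\qquad Z\sim N(0,\mathbf{F}_{\Sigma}),
\end{equation*}
with $\mathbf{M}=(\overline{\mathbf{F}}^{\top}\mathbf{V}\overline{\mathbf{F}})^{-1}\overline{\mathbf{F}}^{\top}\mathbf{V}$, and the symmetry of the Gaussian yields the stated covariance $\mathbf{M}\mathbf{F}_{\Sigma}\mathbf{M}^{\top}$.

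The main obstacle I anticipate is not the algebra above but rather checking that the hypotheses of Theorem \ref{th:weakdepempi} actually apply in the present setting, in particular the requirement $\widehat{\zeta}(r)=O(r^{-\epsilon})$ with $\epsilon>(1+1/\delta)(3+\delta)/(2+\delta)$. For the Gamma-law mixing measure this forces $\alpha>2(1+1/\delta)(3+\delta)/(2+\delta)$ and therefore excludes the long-memory regime, whereas the exponential family \eqref{eq:Exp} is unproblematic. A subsidiary technical point is the componentwise application of the mean value theorem: one cannot use a single intermediate point $\tilde{\xi}_N$, but since $\overline{\mathbf{F}}$ does not depend on $\xi$ this is resolved cleanly by collecting the row-specific intermediate points and applying Assumption \ref{A8} rowwise, so the proof closes without further assumptions.
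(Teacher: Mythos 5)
Your proposal is correct and takes essentially the same route as the paper: the paper's proof simply invokes \cite[Theorem 1.2]{Mat} after having verified Assumptions \ref{A7}, \ref{A8} and \ref{A9} (the last via Theorem \ref{th:weakdepempi}), and your argument is exactly the standard first-order-condition/mean-value expansion that underlies that cited theorem. One small caveat: the invertibility of $\overline{{\bf F}}^{\top}\mathbf{V}\overline{{\bf F}}$ (equivalently, full column rank of $\overline{{\bf F}}$) is a local rank condition that is not actually implied by the global identifiability in Assumption \ref{A1}(ii); it is implicitly assumed here, as in the paper, through the definition of $\mathbf{M}$.
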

    \begin{proof}[Proof of Theorem \ref{th:AN}]
    The asymptotic normality results for the GMM estimators, see \cite[Theorem 1.2]{Mat}, was proven under the Assumptions
 \ref{A1}-\ref{A3} and \ref{A7}-\ref{A9}. In our case, we note that 
 Assumption \ref{A7} holds by the  construction of the function $f$.
 Moreover, we  verify that  Assumption \ref{A8} is satisfied for our graph supOU process: By the construction of the moment function, the partial derivative matrix $\frac{\partial f(X_{t:t+m},\xi)}{\partial \xi^T}$ is independent of $X_{t:t+m}$.
Hence, we have
${\bf F}_N(\xi)=\mathrm{E}(\frac{\partial f(X_{t:t+m},\xi)}{\partial \xi^T})=\frac{\partial f(X_{t:t+m},\xi)}{\partial \xi^{\top}}$ and can set  \begin{align*}
\overline{{\bf F}}:=\overline{{\bf F}}_N:=\frac{\partial f(X_{t:t+m},\xi)}{\partial \xi^{\top}}
|_{\xi=\xi_0}.
\end{align*}
An application of  the continuous mapping theorem leads the property stated in Assumption \ref{A8}.
Theorem \ref{th:weakdepempi} implies that Assumption \ref{A9} holds.  Together with Assumptions \ref{as:A_weightmatrix} and \ref{as:asymcov}, the asymptotic normality stated in the theorem follows. 
\end{proof}

The proof of the asymptotic normality in the special case of a weighted-exponential or Gamma model follows directly:
\begin{proof}[Proof of Theorem \ref{th:AN-main}]
In the case of a graph supOU process with Gamma law for $\theta_2$ (as in Theorem \ref{th:consistency-gamma} and the discussion in the corresponding proof), 
Assumptions \ref{A1} and \ref{A2} are satisfied and we only need to require the Assumptions 
 \ref{A3}, \ref{as:A_weightmatrix}, and \ref{as:asymcov}.
 \end{proof}

\end{appendix}

\subsection*{Acknowledgment} SM acknowledges funding by the   
Schrödinger-Roth Scholarship awarded by the Faculty of Natural Sciences, Imperial College
London. 
AEDV's work has been supported through the 
EPSRC NeST Programme grant EP/X002195/1. 

\end{document}